\SetMathAlphabet{\mathsf}{bold}{\encodingdefault}{\sfdefault}{b}{\updefault}
\SetMathAlphabet{\mathtt}{bold}{\encodingdefault}{\ttdefault}{b}{\updefault}
\SetMathAlphabet{\mathsf}{normal}{\encodingdefault}{\sfdefault}{\mddefault}{\updefault}
\SetMathAlphabet{\mathtt}{normal}{\encodingdefault}{\ttdefault}{\mddefault}{\updefault}
\def\shortleftrightarrow{{\rlap{$\mathord\shortleftarrow$}\,\mathord\shortrightarrow}}
\def\arcto{\mathord\shortrightarrow}
\def\arc#1#2{#1\arcto#2}
\def\biarc#1#2{#1\mathord\shortleftrightarrow#2}
\def\etal{\emph{et~al.}}			
\let\e\varepsilon
\def\sgn{\operatorname{sgn}}
\def\floor#1{\lfloor #1 \rfloor}
\def\set#1{\{ #1 \}}
\def\abs#1{\mathopen| #1 \mathclose|}		
\def\Abs#1{\left| #1 \right|}
\def\Depth{\operatorname{\mathit{depth}}}
\def\Defect{\operatorname{\mathit{defect}}}
\def\Wind{\operatorname{\mathit{wind}}}
\definecolor{Edit}{cmyk}{0,1,1,0}
\let\EDIT\relax
\newtheorem{theorem}{Theorem}[section]
\newtheorem{lemma}[theorem]{Lemma}
\newtheorem{corollary}[theorem]{Corollary}
\numberwithin{figure}{section}
\paperURL\url{http://jeffe.cs.illinois.edu/pubs/tangle.pdf}
\begin{document}

\begin{titlepage}

\title{Untangling Planar Curves%
\thanks{Work on this paper was partially supported by NSF grant CCF-1408763.  A preliminary version of this paper was presented at the 32nd International Symposium on Computational Geometry \cite{tangle}.  See \paperURL\ for the most recent version of this paper.}}

\author{Hsien-Chih Chang \qquad Jeff Erickson\\[1ex]
Department of Computer Science\\
University of Illinois, Urbana-Champaign\\
\href{mailto:hchang17@illinois.edu,jeffe@illinois.edu}{\{hchang17, jeffe\}@illinois.edu}}

%
%



\date{Submitted to \emph{Discrete \& Computational Geometry} --- July 16, 2016
\\[1ex]
Revised and resubmitted --- \today}

\maketitle

\begin{bigabstract}
Any generic closed curve in the plane can be transformed into a simple closed curve by a finite sequence of local transformations called \emph{homotopy moves}. We prove that simplifying a planar closed curve with $n$ self-crossings requires $\Theta(n^{3/2})$ homotopy moves in the worst case.  Our algorithm improves the best previous upper bound $O(n^2)$, which is already implicit in the classical work of Steinitz; the matching lower bound follows from the construction of closed curves with large \emph{defect}, a topological invariant of generic closed curves introduced by Aicardi and Arnold.  Our lower bound also implies that $\Omega(n^{3/2})$ \EDIT{facial electrical transformations} are required to reduce any \EDIT{plane} graph with treewidth $\Omega(\sqrt{n})$ to a single \EDIT{vertex}, matching known upper bounds for rectangular and cylindrical grid graphs.  More generally, we prove that transforming one immersion of $k$~circles with at most~$n$ self-crossings into another requires $\Theta(n^{3/2} + nk + k^2)$ homotopy moves in the worst case.  Finally, we prove that transforming one noncontractible closed curve to another on any orientable surface requires $\Omega(n^2)$ homotopy moves in the worst case; this lower bound is tight if the curve is homotopic to a simple closed curve.
\end{bigabstract}

\vfil
\begin{rightquote}{0.46}
“It's hardly fair,” muttered Hugh, “to give us such a jumble as this to work out!”
\quotee{Lewis Carroll, \emph{A Tangled Tale}, Knot X (1885)}
\end{rightquote}
\vfil
\vfil

\setcounter{page}{0}
\thispagestyle{empty}
\end{titlepage}

\newpage
\pagestyle{myheadings}
\markboth{Hsien-Chih Chang and Jeff Erickson}{Untangling Planar Curves}


\section{Introduction}

Any generic closed curve in the plane can be transformed into a simple closed curve by a finite sequence of the following local operations:
\begin{itemize}\itemsep0pt
\item \EMPH{$\arc{1}{0}$}: Remove an empty loop.
\item \EMPH{$\arc{2}{0}$}: Separate two subpaths that bound an empty bigon.
\item \EMPH{$\arc{3}{3}$}: Flip an empty triangle by moving one subpath over the opposite intersection point.
\end{itemize}
\begin{figure}[h]
\centering
\includegraphics[width=0.9\linewidth]{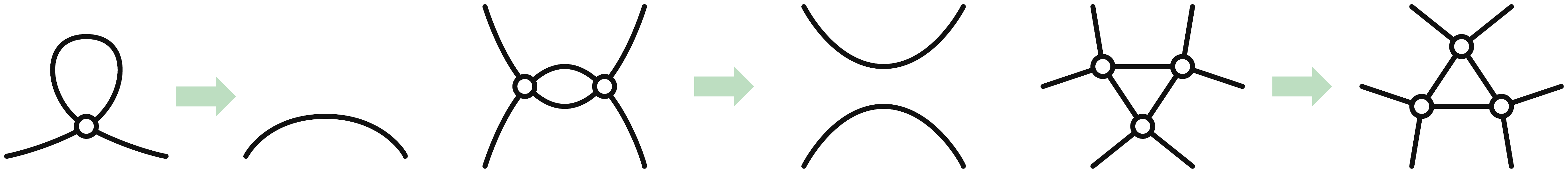}\\
\caption{Homotopy moves $\arc10$, $\arc20$, and $\arc33$.}
\label{F:homotopy}
\end{figure}
See Figure \ref{F:homotopy}. Each of these operations can be performed by continuously deforming the curve within a small neighborhood of one face; consequently, we call these operations and their inverses \EMPH{homotopy moves}. Our notation is nonstandard but mnemonic; the numbers before and after each arrow indicate the number of local vertices before and after the move. Homotopy moves are “shadows” of the classical Reidemeister moves used to manipulate knot and link diagrams~\cite{ab-tkc-26,r-ebk-27}.

We prove that $\Theta(n^{3/2})$ homotopy moves are sometimes necessary and always sufficient to simplify a closed curve in the plane with $n$ self-crossings.  Before describing our results in more detail, we review several previous results.

\subsection{Past Results}

An algorithm to simplify any planar closed curve using at most $O(n^2)$ homotopy moves is implicit in Steinitz's proof that every 3-connected planar graph is the 1-skeleton of a convex polyhedron \cite{s-pr-1916,sr-vtp-34}. Specifically, Steinitz proved that any non-simple closed curve (in fact, any 4-regular plane graph) with no empty loops contains a \emph{bigon} (“Spindel”): a disk bounded by a pair of simple subpaths that cross exactly twice, where the endpoints of the (slightly extended) subpaths lie outside the disk.  Steinitz then proved that any \emph{minimal} bigon (“irreduzible Spindel”) can be transformed into an empty bigon using a sequence of $\arc33$ moves, each removing one triangular face from the bigon, as shown in Figure~\ref{F:SR-spindle}.  Once the bigon is empty, it can be deleted with a single $\arc20$ move. See Grünbaum~\cite{g-cp-67}, Hass and Scott \cite{hs-scs-94}, Colin de Verdière \etal~\cite{cgv-rep-96}, or Nowik~\cite{n-cpsc-09} for more modern treatments of Steinitz's technique. The $O(n^2)$ upper bound also follows from algorithms for \emph{regular} homotopy, which forbids $\biarc01$ moves, by Francis~\cite{f-frtcs-69}, Vegter~\cite{v-kfdp-89} (for polygonal curves), and Nowik~\cite{n-cpsc-09}.

\begin{figure}[ht]
\centering
\includegraphics[scale=0.175]{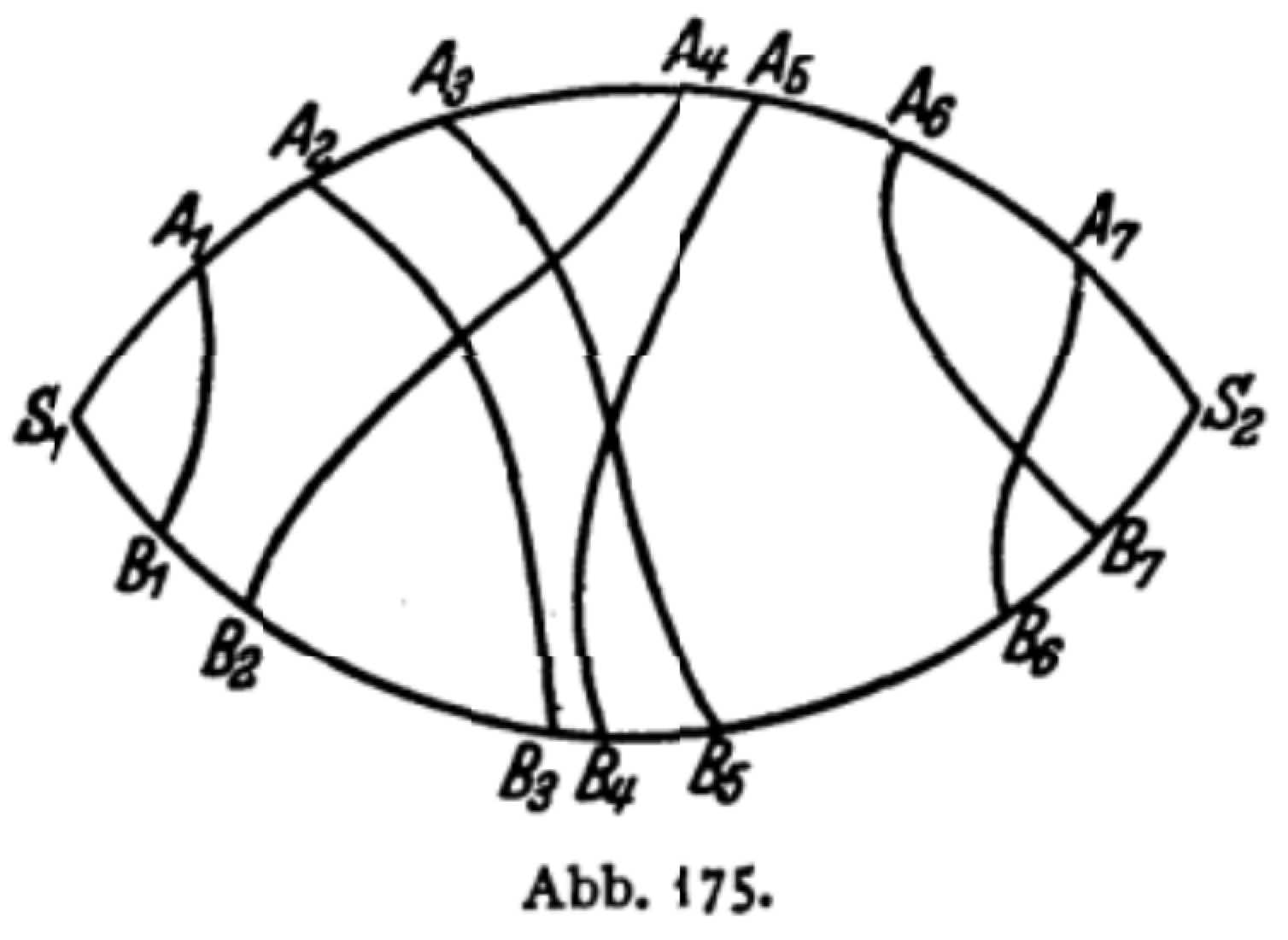} \\
\includegraphics[scale=0.1]{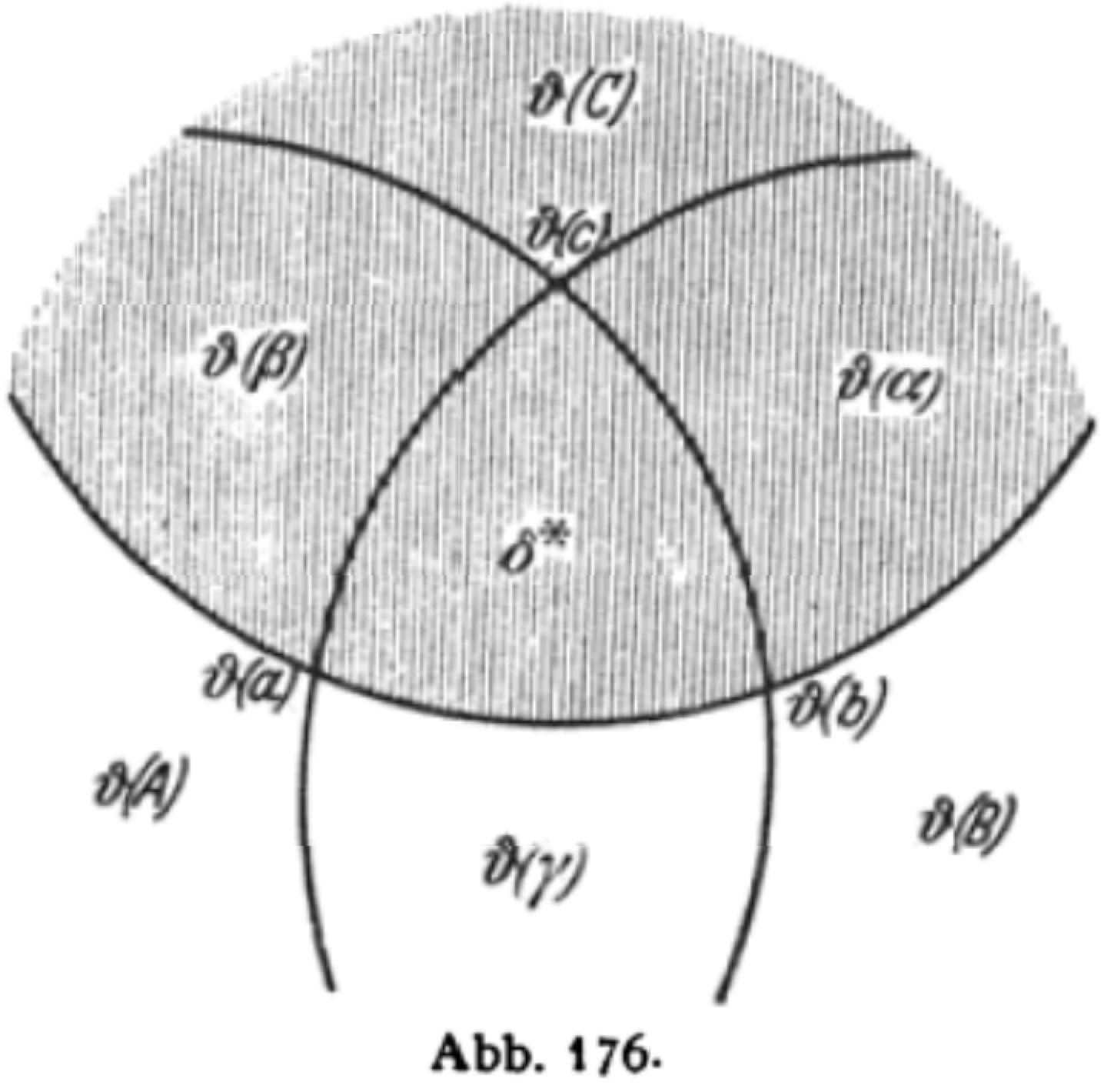} \quad
\includegraphics[scale=0.1]{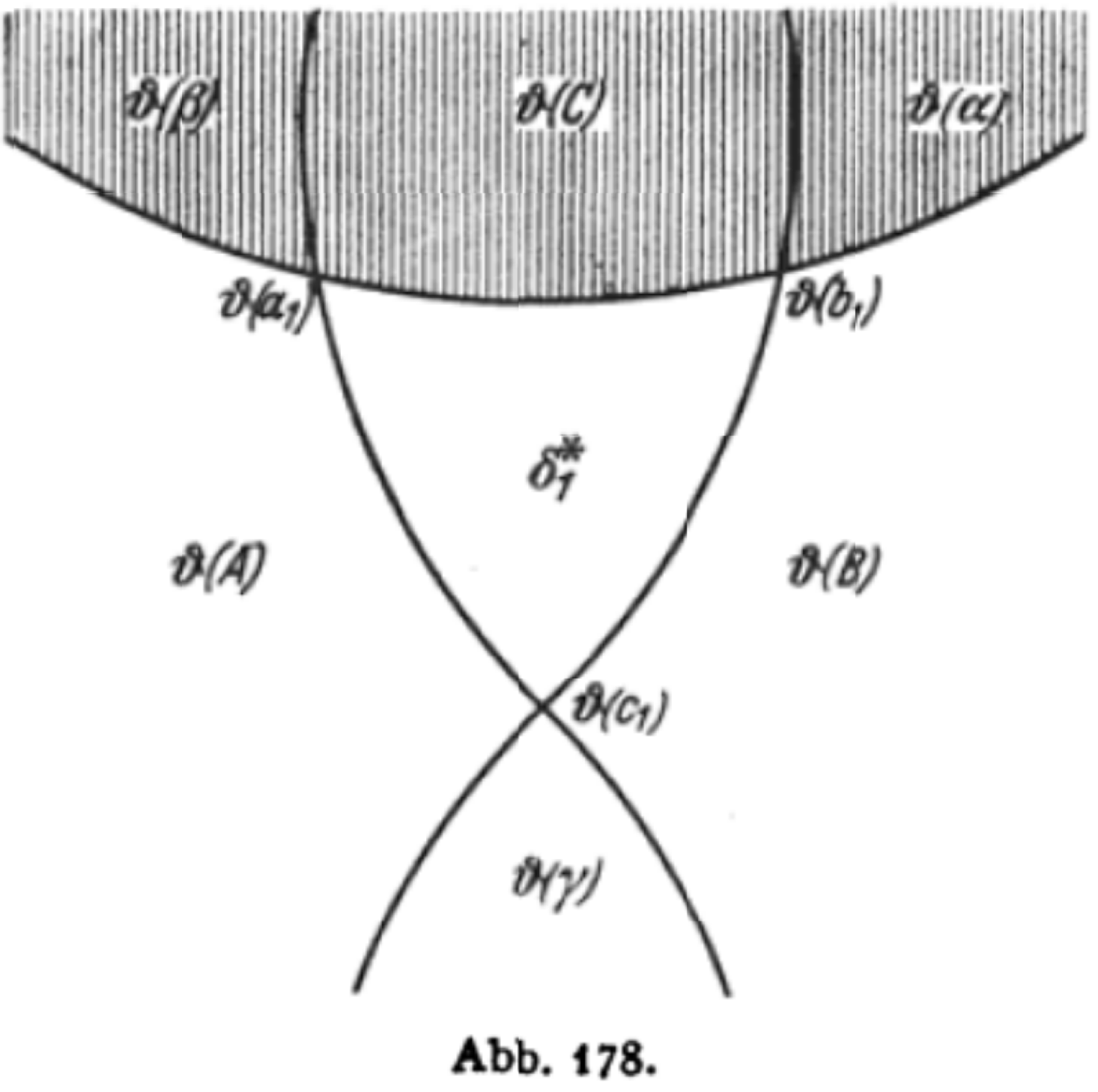} \qquad
\includegraphics[scale=0.12]{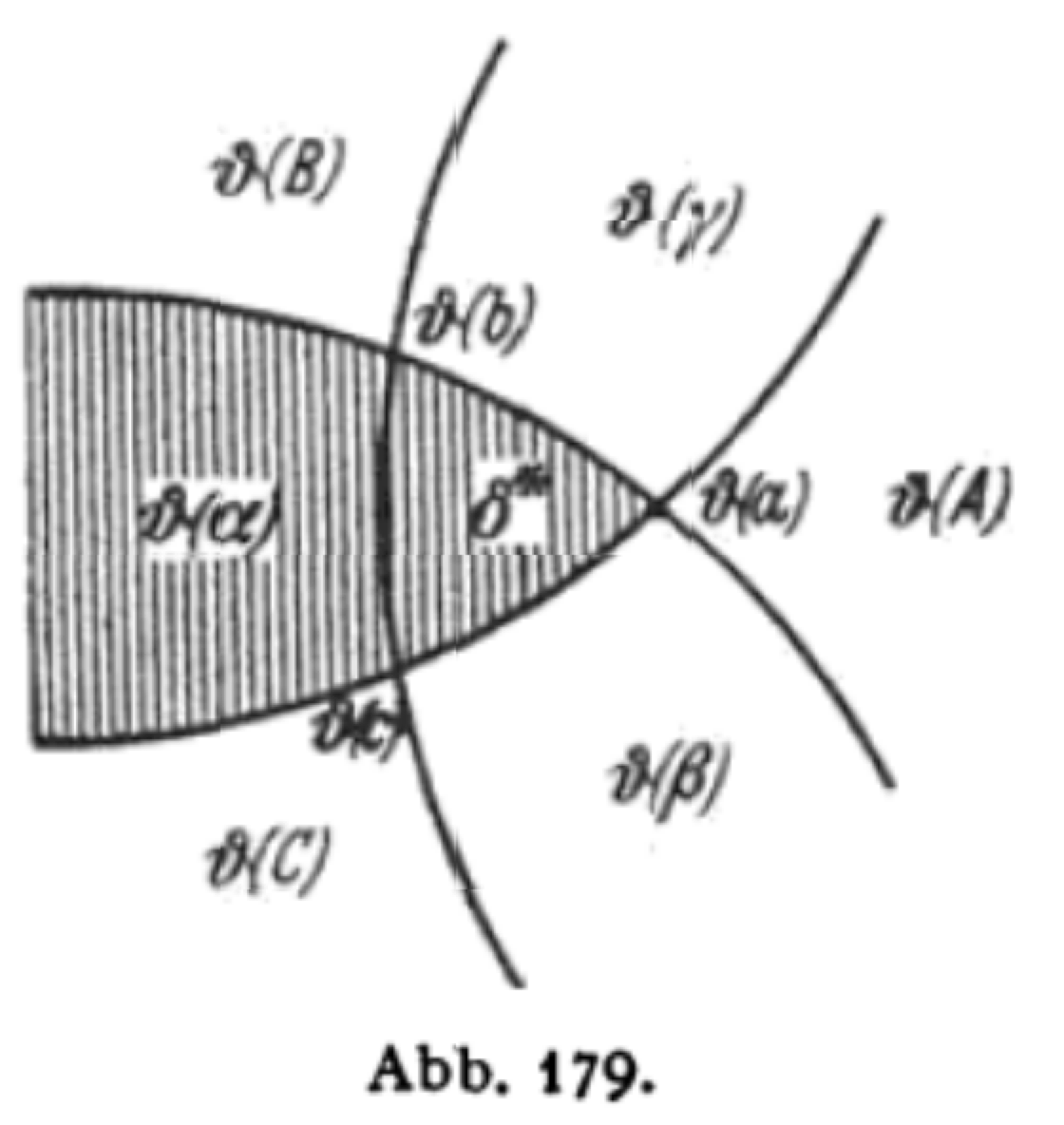} \quad
\includegraphics[scale=0.12]{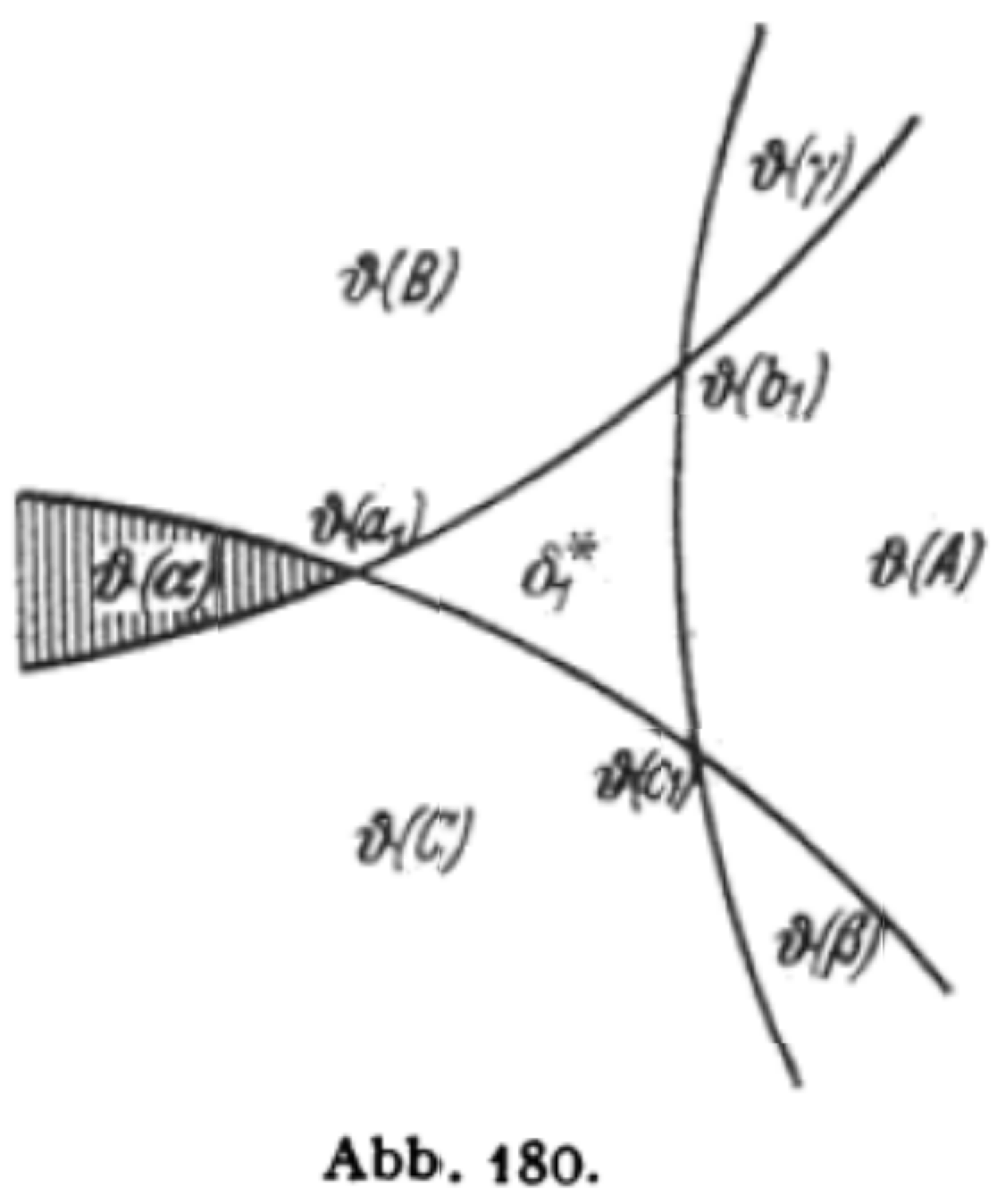}
\caption{Top: A minimal bigon.  Bottom: $\arc33$ moves removing triangles from the side or the end of a (shaded) minimal bigon.  All figures are from Steinitz and Rademacher \cite{sr-vtp-34}.}
\label{F:SR-spindle}
\end{figure}

The $O(n^2)$ upper bound can also be derived from an algorithm of Feo and Provan~\cite{fp-dtert-93} for reducing a \emph{plane} graph to a single edge by \emph{electrical transformations}: degree-1 reductions, series-parallel reductions, and $\Delta$Y-transformations. (We consider electrical transformations in more detail in Section \ref{S:electric}.) Any curve divides the plane into regions, called its \emph{faces}. The \emph{depth} of a face is its distance to the outer face in the dual graph of the curve. Call a homotopy move \emph{positive} if it decreases the sum of the face depths; in particular, every $\arc10$ and $\arc20$ move is positive.  \EDIT{A key technical lemma} of Feo and Provan implies that every non-simple curve in the plane admits a positive homotopy move \cite[Theorem~1]{fp-dtert-93}. Thus, the sum of the face depths is an upper bound on the minimum number of moves required to simplify the curve. Euler's formula implies that every curve with $n$ crossings has $O(n)$ faces, and each of these faces has depth $O(n)$.

Gitler~\cite{g-dtaa-91} conjectured that a variant of Feo and Provan's algorithm that always makes the \emph{deepest} positive move requires only $O(n^{3/2})$ moves. Song~\cite{s-iifpd-01} observed that if Feo and Provan's algorithm always chooses the \emph{shallowest} positive move, it can be forced to make $\Omega(n^2)$ moves even when the input curve can be simplified using only $O(n)$ moves.

Tight bounds are known for two special cases where some homotopy moves are forbidden. First, Nowik~\cite{n-cpsc-09} proved a tight $\Omega(n^2)$ lower bound for regular homotopy. Second, Khovanov~\cite{k-dg-97} defined two curves to be \emph{doodle equivalent} if one can be transformed into the other using $\biarc10$ and $\biarc20$ moves. Khovanov~\cite{k-dg-97} and Ito and Takimura~\cite{it-whkp-13} independently proved that any planar curve can be transformed into its unique equivalent doodle with the smallest number of vertices, using only $\arc10$ and $\arc20$ moves. Thus, two doodle equivalent curves are connected by a sequence of $O(n)$ moves, which is obviously tight.

Looser bounds are also known for the minimum number of Reidemeister moves needed to reduce a diagram of the unknot~\cite{hn-udrqn-10,l-pubrm-15}, to separate the components of a split link~\cite{hhn-unnrm-12}, or to move between two equivalent knot diagrams~\cite{hh-msrmd-11,cl-ubrm-14}.

\subsection{New Results}

In Section \ref{S:lower-bound}, we derive an $\Omega(n^{3/2})$ lower bound using a numerical curve invariant called \emph{defect}, introduced by Arnold~\cite{a-tipcc-94, a-pctip-94} and~Aicardi~\cite{a-tc-94}.  Each homotopy move changes the defect of a closed curve by at most~$2$.  The lower bound therefore follows from constructions of Hayashi \etal~\cite{hh-msrmd-11,hhsy-musrm-12} and Even-Zohar \etal~\cite{ehln-irkl-14} of closed curves with defect $\Omega(n^{3/2})$.  We simplify and generalize their results by computing the defect of the standard planar projection of any $p\times q$ torus knot where either $p\bmod q = 1$ or $q\bmod p = 1$. Our calculations imply that for any integer $p$, reducing the standard projection of the $p\times (p+1)$ torus knot requires at least $\smash{\binom{p+1}{3}} \ge n^{3/2}/6 - O(n)$ homotopy moves.  Finally, using winding-number arguments, we prove that in the worst case, simplifying an arrangement of $k$ closed curves requires $\Omega(n^{3/2}+ nk)$ homotopy moves, with an additional $\Omega(k^2)$ term if the target configuration is specified in advance.


In Section \ref{S:electric}, we provide a proof, based on arguments of Truemper~\cite{t-drpg-89} and Noble and Welsh~\cite{nw-kg-00}, that reducing a \emph{unicursal} \EDIT{plane} graph $G$---\EDIT{one} whose medial graph is the image of a single closed curve---using \EDIT{facial} electrical transformations requires at least as many steps as reducing the medial graph of~$G$ to a simple closed curve using homotopy moves.  The homotopy lower bound from Section \ref{S:lower-bound} then implies that reducing any $n$-vertex \EDIT{plane} graph with treewidth $\Omega(\sqrt{n})$ requires $\Omega(n^{3/2})$ \EDIT{facial} electrical transformations.  This lower bound matches known upper bounds for rectangular and cylindrical grid graphs.  

We develop a new algorithm to simplify any closed curve in $O(n^{3/2})$ homotopy moves in Section~\ref{S:upper}. First we describe an algorithm that uses $O(D)$ moves, where $D$ is the sum of the face depths of the input curve. At a high level, our algorithm can be viewed as a variant of Steinitz's algorithm that empties and removes \emph{loops} instead of bigons.
%
%
We then extend our algorithm to \emph{tangles}: collections of boundary-to-boundary paths in a closed disk. Our algorithm simplifies a tangle as much as possible in $O(D + ns)$ moves, where $D$ is the sum of the depths of the tangle's faces, $s$ is the number of paths, and~$n$ is the number of intersection points.
Then, we prove that for any curve with maximum face depth $\Omega(\sqrt{n})$, we can find a simple closed curve whose interior tangle has \EDIT{$m$ interior vertices, at most $\sqrt{m}$ paths, and maximum face depth $O(\sqrt{n})$}.
Simplifying this tangle and then recursively simplifying the resulting curve requires a total of $O(n^{3/2})$ moves.
We show that this simplifying sequence of homotopy moves can be computed in $O(1)$ amortized time per move, assuming the curve is presented in an appropriate graph data structure.
We conclude this section by proving that any arrangement of $k$ closed curves can be simplified in $O(n^{3/2} + nk)$ homotopy moves, or in $O(n^{3/2} + nk + k^2)$ homotopy moves if the target configuration is specified in advance, precisely matching our lower bounds for all values of $n$ and $k$.

Finally, in Section \ref{S:genus}, we consider curves on surfaces of higher genus. We prove that $\Omega(n^2)$ homotopy moves are required in the worst case to transform one non-contractible closed curve to another on the torus, and therefore on any orientable surface. Results of Hass and Scott \cite{hs-ics-85} imply that this lower bound is tight if the non-contractible closed curve is homotopic to a simple closed curve.

\subsection{Definitions}

A \EMPH{closed curve} in a surface $M$ is a continuous map $\gamma \colon S^1 \to M$.  In this paper, we consider only \emph{generic} closed curves, which are injective except at a finite number of self-intersections, each of which is a transverse double point; closed curves satisfying these conditions are called \emph{immersions} of the circle.  A~closed curve is \EMPH{simple} if it is injective. For most of the paper, we consider only closed curves in the plane; we consider more general surfaces in Section \ref{S:genus}.

The image of any non-simple closed curve has a natural structure as a 4-regular plane graph.  Thus, we refer to the self-intersection points of a curve as its \EMPH{vertices}, the maximal subpaths between vertices as \EMPH{edges}, and the components of the complement of the curve as its \EMPH{faces}.  Two curves $\gamma$ and $\gamma'$ are \emph{isomorphic} if their images \EDIT{define combinatorially equivalent maps}; we will not distinguish between isomorphic curves.

A \emph{corner} of $\gamma$ is the intersection of a face of $\gamma$ and a small neighborhood of a vertex of $\gamma$.  A \EMPH{loop} in a closed curve $\gamma$ is a subpath of $\gamma$ that begins and ends at some vertex $x$, intersects itself only at $x$, and encloses exactly one corner at $x$.  A \EMPH{bigon} in $\gamma$ consists of two simple interior-disjoint subpaths of $\gamma$ with the same endpoints and enclose one corner at each of those endpoints.  A loop or bigon is \EMPH{empty} if its interior does not intersect $\gamma$.  Notice that a $\arc10$ move is applied to an empty loop, and a $\arc20$ move is applied on an empty bigon.

We adopt a standard sign convention for vertices first used by Gauss~\cite{g-n1gs-00}. Choose an arbitrary basepoint $\gamma(0)$ and orientation for the curve.  \EDIT{For each vertex $x$,} we define $\sgn(x) = +1$ if the first traversal through the vertex crosses the second traversal from right to left, and $\sgn(x) = -1$ otherwise.  See Figure~\ref{F:signs}.

\begin{figure}[ht]
\centering
\includegraphics[scale=0.4]{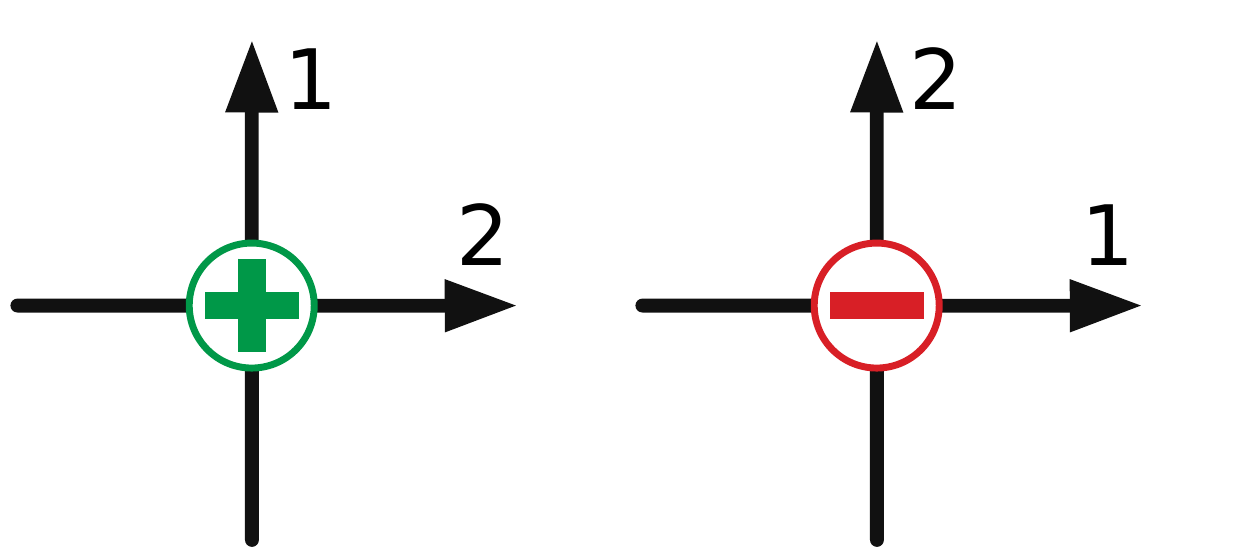}
\caption{Gauss's sign convention.}
\label{F:signs}
\end{figure}

A \EMPH{homotopy} between two curves $\gamma$ and $\gamma'$ in surface $M$ is a continuous function $H\colon {S^1 \times [0,1] \to M}$ such that $H(\cdot,0) = \gamma$ and $H(\cdot,1) = \gamma'$.  Any homotopy $H$ describes a continuous deformation \EDIT{from} $\gamma$ \EDIT{to}~$\gamma'$, where the second argument of $H$ is “time”.  Each homotopy move can be executed by a homotopy. Conversely, Alexander's simplicial approximation theorem \cite{a-cas-26}, together with combinatorial arguments of Alexander and Briggs~\cite{ab-tkc-26} and Reidemeister~\cite{r-ebk-27}, imply that any generic homotopy between two closed curves can be decomposed into a finite sequence of homotopy moves. Two curves are \emph{homotopic}, or in the same \emph{homotopy class}, if there is a homotopy from one to the other. All closed curves in the plane are homotopic.

A \EMPH{multicurve} is an immersion of one or more disjoint circles; in particular, a \EMPH{$k$-curve} is an immersion of $k$ disjoint circles.  A multicurve is \emph{simple} if it is injective, or equivalently, if it can be decomposed into pairwise disjoint simple closed curves.  The image of any multicurve in the plane is the disjoint union of simple closed curves and 4-regular plane graphs.   A \EMPH{component} of a multicurve $\gamma$ is any multicurve whose image is a connected component of the image of $\gamma$.  We call the individual closed curves that comprise a multicurve its \EMPH{constituent \EDIT{curves}}; see Figure \ref{F:components}.  The definition of homotopy and the decomposition of homotopies into homotopy moves extend naturally to multicurves.  

\begin{figure}[ht]
\centering
\includegraphics[scale=0.4]{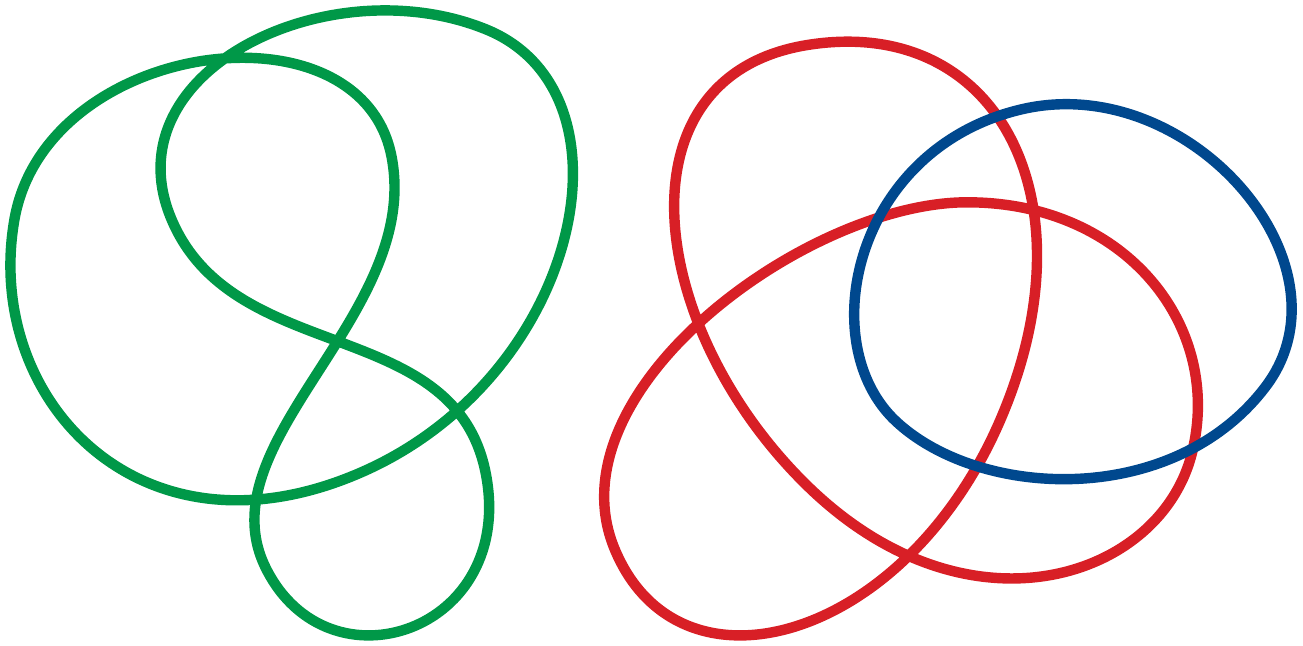}
\caption{A multicurve with two components and three constituent curves, one of which is simple.}
\label{F:components}
\end{figure}


\section{Lower Bounds}
\label{S:lower-bound}

\subsection{Defect}
\label{SS:defect}

To prove our main lower bound, we consider a numerical invariant of closed curves in the plane introduced by Arnold~\cite{a-tipcc-94, a-pctip-94} and~Aicardi~\cite{a-tc-94} called \EMPH{defect}. Polyak~\cite{p-icfgd-98} proved that defect can be computed---or for our purposes, defined---as follows:
\[
	\Defect(\gamma) \coloneqq -2 \sum_{x\between y} \sgn(x)\cdot\sgn(y).
\]
Here the sum is taken over all \emph{interleaved} pairs of vertices of $\gamma$: two vertices $x\ne y$ are interleaved, denoted \EMPH{$x\between y$}, if they alternate in cyclic order---$x$, $y$, $x$, $y$---along $\gamma$.
(The factor of $-2$ is a historical artifact, which we retain only to be consistent with Arnold's original definitions~\cite{a-tipcc-94, a-pctip-94}.)
Even though the signs of individual vertices depend on the basepoint and orientation of the curve, the defect of a curve is independent of those choices. Moreover, the defect of any curve is preserved by any homeomorphism from the plane (or the sphere) to itself, including reflection. 


%
%

Trivially, every simple closed curve has defect zero. Straightforward case analysis~\cite{p-icfgd-98} implies that any single homotopy move changes the defect of a curve by at most $2$; the various cases are listed below and illustrated in Figure \ref{F:defect-change}.  
\vspace{3pt}
\begin{itemize}\itemsep0pt
\item A $\arc10$ move leaves the defect unchanged.
\item A $\arc20$ move decreases the defect by $2$ if the two disappearing vertices are interleaved, and leaves the defect unchanged otherwise.
\item A $\arc33$ move increases the defect by $2$ if the three vertices before the move contain an even number of interleaved pairs, and decreases the defect by $2$ otherwise.
\end{itemize}
In light of this case analysis, the following lemma is trivial:
\begin{lemma}
\label{L:defect}
Simplifying any closed curve $\gamma$ in the plane requires at least $\abs{\Defect(\gamma)}/2$ homotopy moves.
\end{lemma}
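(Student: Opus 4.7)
The plan is to observe that the lemma follows almost immediately from the three bullet points preceding its statement, together with the fact that simple closed curves have defect zero. The argument is a one-line telescoping bound on how fast the defect can change along any simplifying sequence.

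First I would let $\gamma = \gamma_0, \gamma_1, \ldots, \gamma_m$ be any sequence of curves where each $\gamma_{i+1}$ is obtained from $\gamma_i$ by a single homotopy move and $\gamma_m$ is simple. By the case analysis recalled before the lemma, each move satisfies $|\Defect(\gamma_{i+1}) - \Defect(\gamma_i)| \le 2$ (a $\arc10$ move changes defect by $0$, and both $\arc20$ and $\arc33$ moves change it by $0$ or $\pm 2$). Since every simple closed curve has defect zero, $\Defect(\gamma_m) = 0$.

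Then I would apply the triangle inequality along the sequence:
\[
\abs{\Defect(\gamma)} = \abs{\Defect(\gamma_0) - \Defect(\gamma_m)} \le \sum_{i=0}^{m-1} \abs{\Defect(\gamma_{i+1}) - \Defect(\gamma_i)} \le 2m.
\]
Rearranging gives $m \ge \abs{\Defect(\gamma)}/2$, which is exactly the claimed lower bound.

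There is essentially no obstacle here: the only substantive content is the per-move defect bound already justified by the bulleted case analysis (and attributed to Polyak), plus the observation that simple curves have defect zero (immediate from the definition, since a simple curve has no vertices, hence no interleaved pairs, so the defining sum is empty). The real work of the paper lies in \emph{exhibiting} curves with large defect, which is what makes this trivial lemma into a useful $\Omega(n^{3/2})$ lower bound later on.
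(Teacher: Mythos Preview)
Your proof is correct and matches the paper's approach exactly: the paper simply declares the lemma ``trivial'' in light of the preceding case analysis and does not even write out a proof, and your telescoping argument is precisely the intended one-line justification. You have also correctly identified that the real content lies elsewhere, in constructing curves with large defect.
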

\unskip

\begin{figure}[htb]
\centering\small
\def\arraystretch{1.25}
\begin{tabular}{cc@{~}cc@{~}c}
	\toprule
	$1\arcto 0$ & \multicolumn{2}{c}{$2\arcto 0$} & \multicolumn{2}{c}{$3\arcto 3$}
	\\ \midrule
	\raisebox{-.5\height}{\includegraphics[scale=0.25]{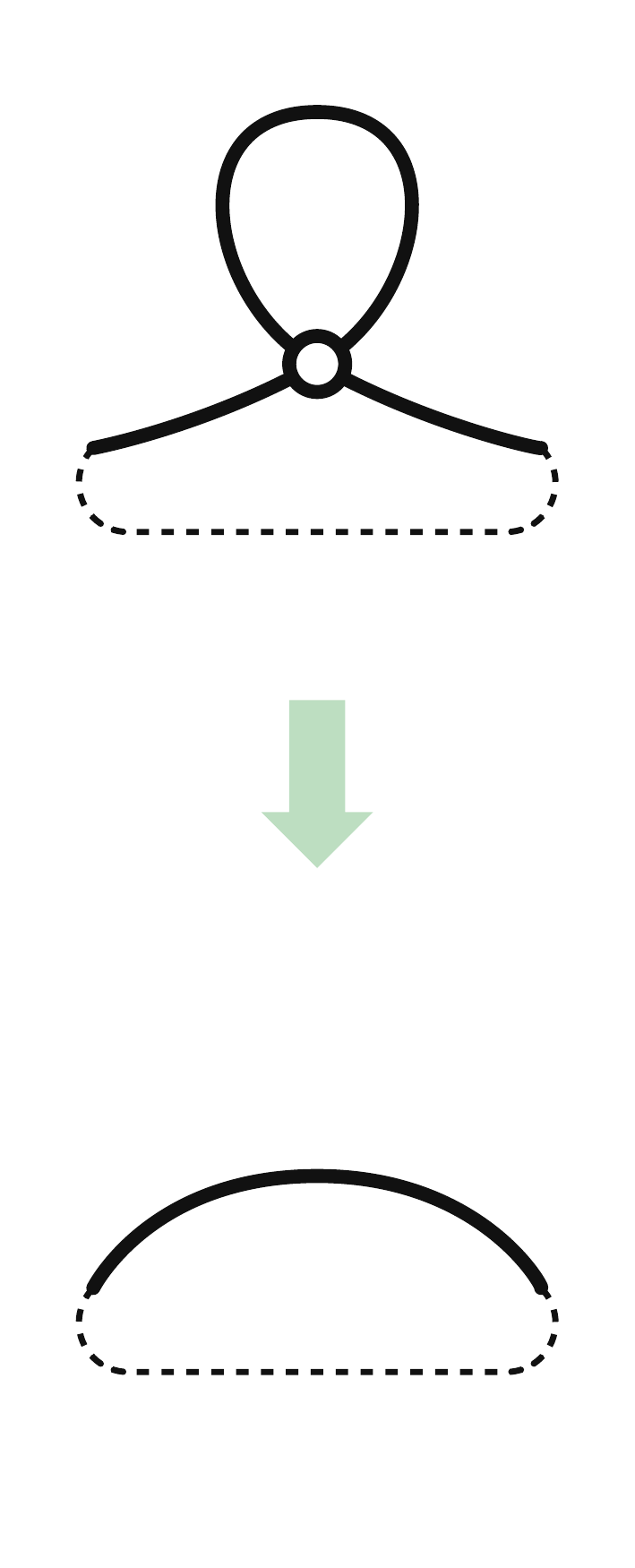}}
	&
	\raisebox{-.5\height}{\includegraphics[scale=0.25]{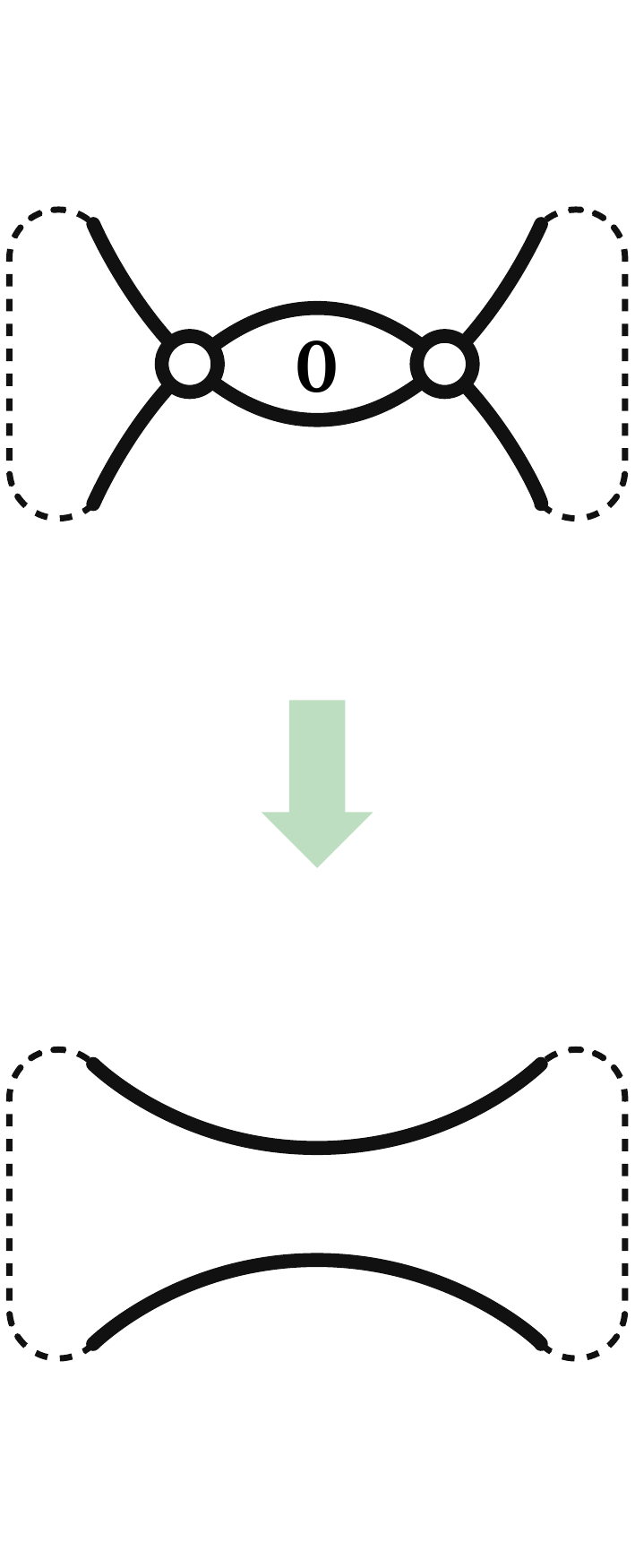}}
	&
	\raisebox{-.5\height}{\includegraphics[scale=0.25]{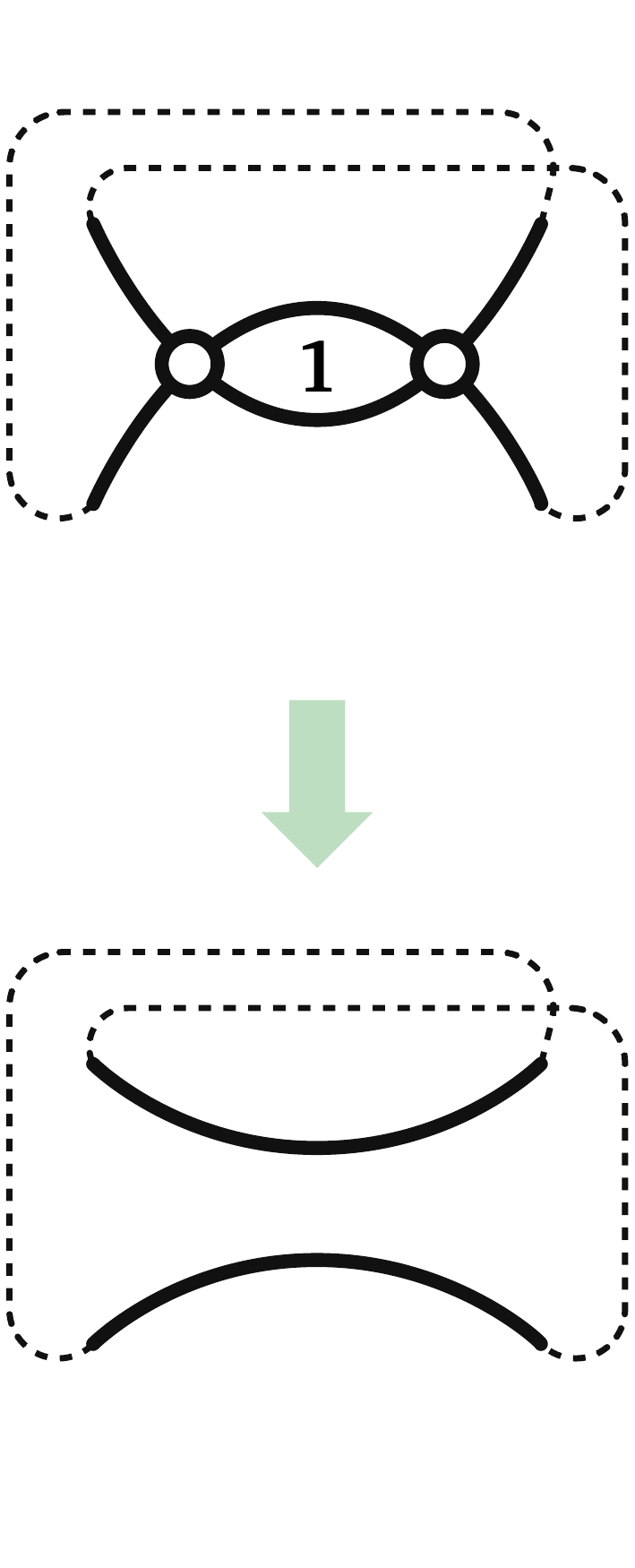}}
	&
	\raisebox{-.5\height}{\includegraphics[scale=0.25]{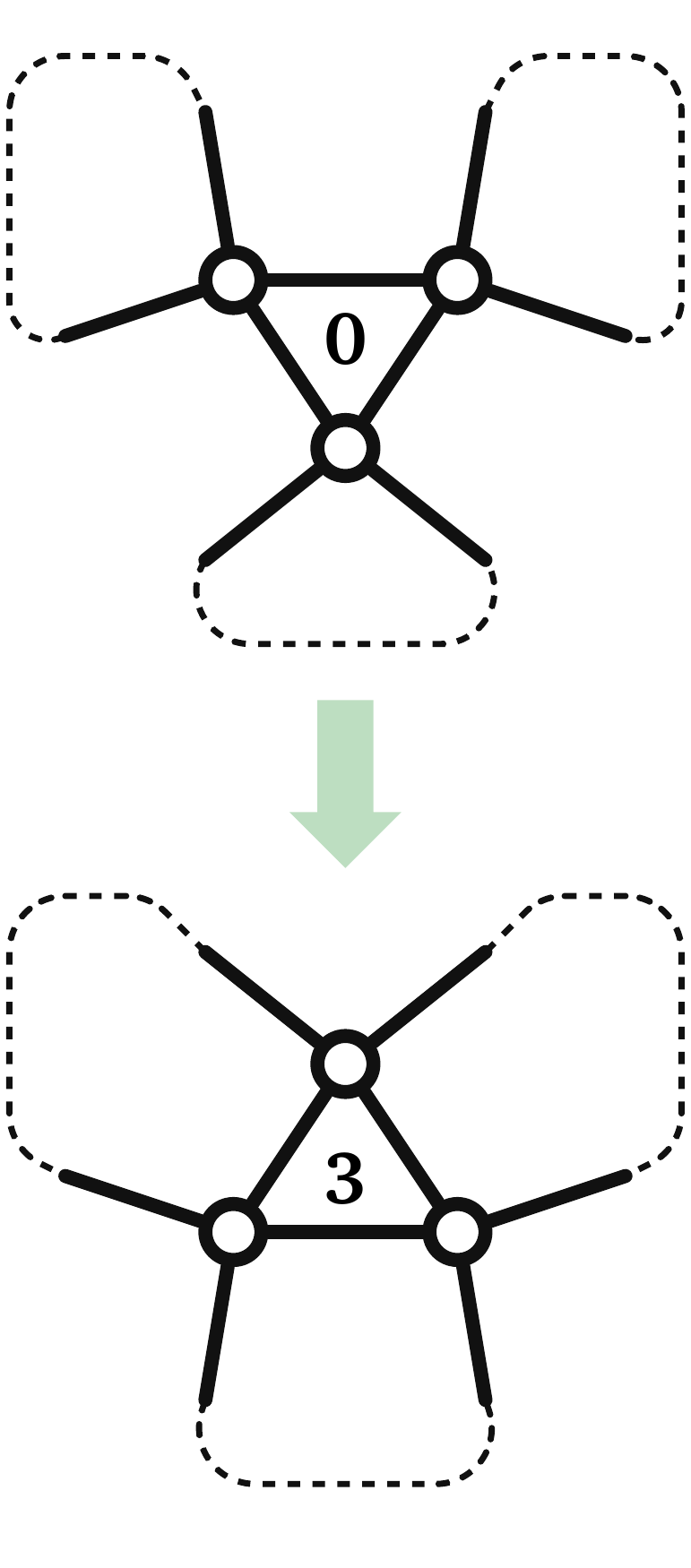}}
	&
	\raisebox{-.5\height}{\includegraphics[scale=0.25]{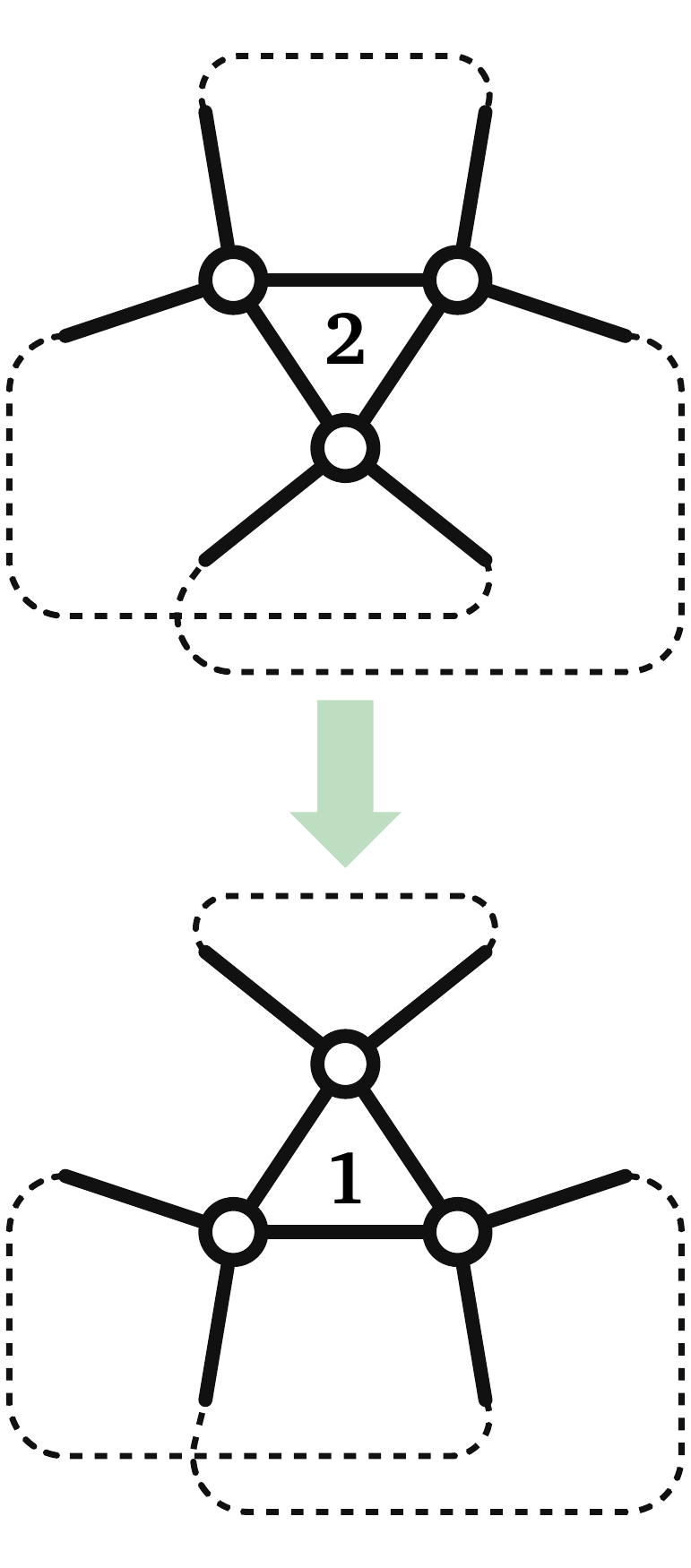}}
\\ \midrule
	$0$ & $0$ & $-2$ & $+2$ & $+2$
	\\ \bottomrule
\end{tabular}
\caption{Changes to defect incurred by homotopy moves. Numbers in each figure indicate how many pairs of vertices are interleaved; dashed lines indicate how the rest of the curve connects.}
\label{F:defect-change}
\end{figure}

\subsection{Flat Torus Knots}
\label{SS:torus-knots}

For any relatively prime positive integers $p$ and $q$, let \EMPH{$T(p,q)$} denote the curve with the following parametrization, where $\theta$ runs from~$0$ to $2\pi$:
\[
	T(p,q)(\theta) \coloneqq \left((\cos (q\theta)+2) \cos(p\theta),~ (\cos (q\theta)+2) \sin(p\theta)\right).
\]
The curve $T(p,q)$ winds around the origin $p$ times, oscillates $q$ times between two concentric circles, and crosses itself exactly $(p-1)q$ times.  We call these curves \EMPH{flat torus knots}.

\begin{figure}[ht]
\centering
\hfil
\includegraphics[width=1.75in]{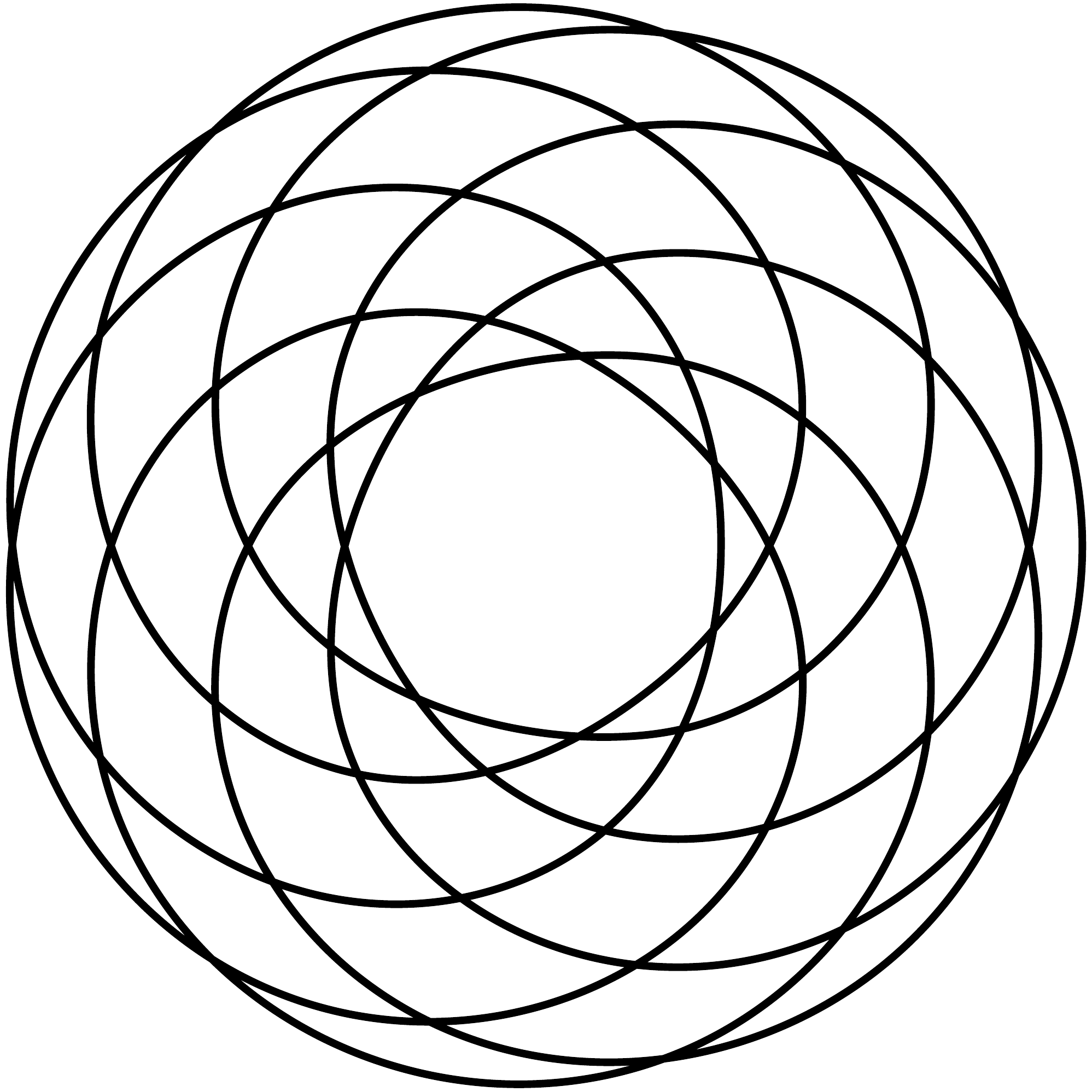}\hfil{}
\includegraphics[width=1.75in]{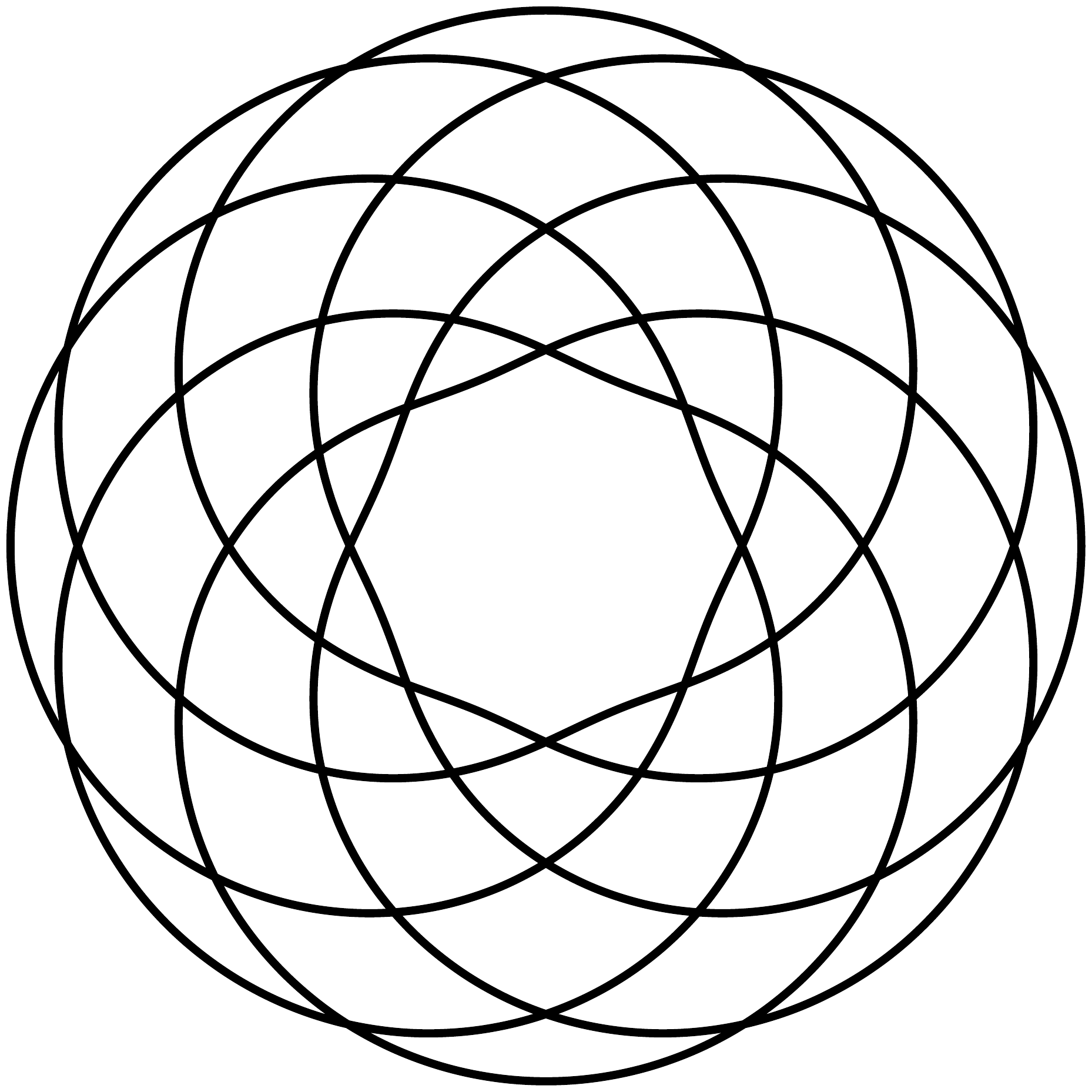}\hfil{}
\caption{The flat torus knots $T(8,7)$ and $T(7,8)$.}
\end{figure}

Hayashi \etal~\cite[Proposition~3.1]{hhsy-musrm-12} proved that for any integer~$q$, the flat torus knot $T(q+1,q)$ has defect $-2\binom{q}{3}$. Even-Zohar \etal~\cite{ehln-irkl-14} used a star-polygon representation of the curve $T(p, 2p+1)$ as the basis for a universal model of random knots; in our notation, they proved that $\Defect(T(p, 2p+1)) = 4\binom{p+1}{3}$ for any integer $p$. 
In this section we simplify and generalize both of these results to all flat torus knots $T(p,q)$ where either $q\bmod p = 1$ or $p\bmod q = 1$.  \EDIT{For purposes of illustration, we cut $T(p,q)$ along a spiral path parallel to a portion of the curve, and then deform the $p$ resulting subpaths, which we call \emph{strands}, into a “flat braid” between two fixed diagonal lines.  See Figure~\ref{F:flat-braid}.}

\begin{figure}[ht]
\centering
\includegraphics[scale=0.3]{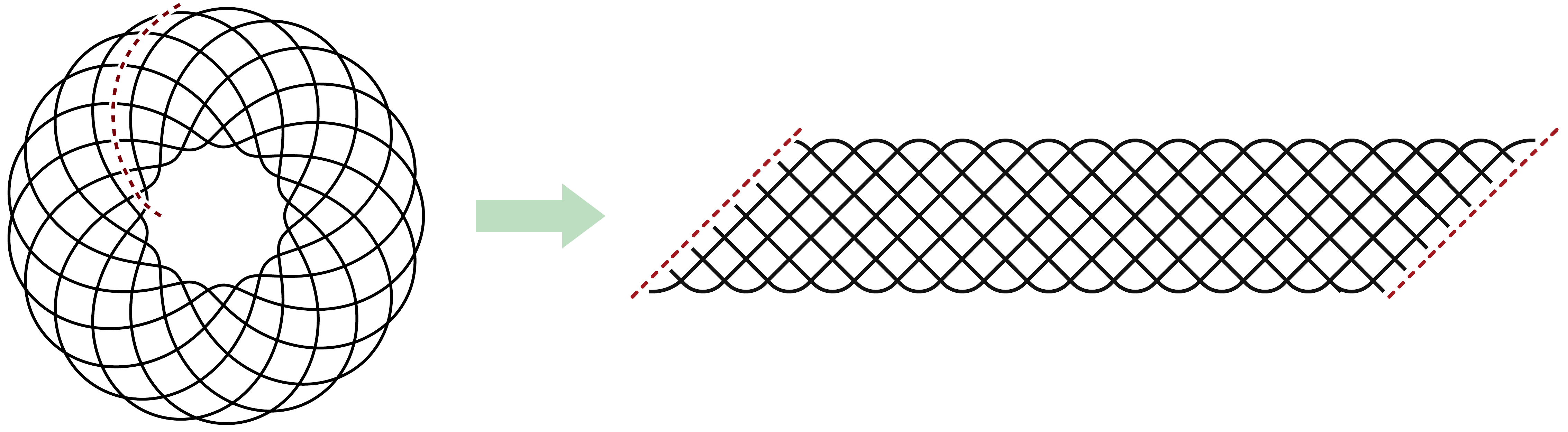}
\caption{Transforming $T(8,17)$ into a flat braid.}
\label{F:flat-braid}
\end{figure}

\begin{lemma}
\label{L:braid-wide}
$\Defect(T(p, ap+1)) = 2a \binom{p+1}{3}$ for all integers $a\ge 0$ and $p \ge 1$.
\end{lemma}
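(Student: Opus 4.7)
The plan is to exploit the flat braid representation of $T(p, ap+1)$ described in the paragraph just before the lemma. After cutting along the spiral, we obtain $p$ ``strands'' that together form a flat braid between two parallel diagonals. The hypothesis $q \equiv 1 \pmod p$ is crucial: under this congruence, successive strands are related by a precise $1/p$ phase shift in their radial oscillation, giving the braid a clean cyclic symmetry that drives all the subsequent counting.

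First I would enumerate and classify the crossings. For each ordered pair of strand indices $(i,j)$, I would compute the number of crossings between strand $i$ and strand $j$ using the parametrization; by the phase-shift structure, this count should depend only on $(j-i) \bmod p$, and summing over all unordered pairs should recover the known total $(p-1)(ap+1)$ crossings of $T(p,ap+1)$. Next I would determine the sign of each crossing via Gauss's convention, tracking the cyclic order in which the two strands visit the crossing along the full curve $T(p,ap+1)$. I expect the sign of a crossing between strands $i$ and $j$ to depend only on $(j-i) \bmod p$, which will organize the signed sum $\sum_{x \between y} \sgn(x)\sgn(y)$ by the pair of strand-pairs involved.

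The combinatorial core of the proof is then to count interleaved pairs of crossings. Because the curve visits the $p$ strands in the fixed cyclic order $0, 1, \ldots, p-1$ before returning to strand $0$, two crossings are interleaved precisely when their four strand-visits alternate along this cyclic traversal; this translates into a concrete condition on the involved strand indices and the positions of the crossings along each strand. I would split the signed interleaved-pair count into cases based on the structure of the four strand indices (two coinciding strand-pairs, two strand-pairs sharing one index, or four distinct strand indices), compute each contribution in closed form, and sum. The base case $a=0$ (where $T(p,1)$ is already simple or nearly so) should give defect zero and serve as a sanity check.

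The hardest step will be this final combinatorial count: despite the braid's cyclic symmetry, there are still many cases, and the interleaving condition couples both the strand-pair structure and the positional order of crossings along each strand. To calibrate, I would first verify the formula by direct enumeration on $T(2,3)$ (expected defect~$2$), $T(3,4)$ (expected defect~$8$), and $T(3,7)$ (expected defect~$16$), then look for a pattern of cancellations that collapses the signed sum to $-a\binom{p+1}{3}$, producing defect $2a\binom{p+1}{3}$ after multiplication by the leading $-2$. Induction on $a$ is an appealing alternative---each increment of $a$ adds one uniform ``layer'' of crossings to every strand---but reconciling new crossings with old ones in the interleavement count appears no simpler than the direct approach, so I would pursue the direct combinatorial sum first.
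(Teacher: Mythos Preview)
Your plan is not wrong in principle, but it is the hard way, and you have actually dismissed the easy way.  The paper does \emph{not} compute $\Defect(T(p,ap+1))$ by enumerating interleaved pairs from Polyak's formula.  Instead it uses the case analysis already established in Section~\ref{SS:defect} (Figure~\ref{F:defect-change}): each homotopy move changes defect by $0$ or $\pm 2$, with the sign determined by a local interleaving pattern among just two or three vertices.  So it suffices to exhibit an explicit sequence of homotopy moves from $T(p,ap+1)$ to $T(p,(a-1)p+1)$, classify each move, and add up the $\pm 2$'s.  The paper ``straightens'' one block of the flat braid, strand by strand: pulling one strand straight costs $\binom{p-1}{2}$ $\arc33$ moves (each with one interleaved pair, hence each decreasing defect by $2$) plus $p-1$ $\arc20$ moves (each also decreasing defect by $2$), for a total defect drop of $2\binom{p}{2}$ per strand and $2\binom{p+1}{3}$ per block.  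The recurrence $\Defect(T(p,ap+1)) = \Defect(T(p,(a-1)p+1)) + 2\binom{p+1}{3}$ follows, and the base case $\Defect(T(p,1))=0$ is immediate since $T(p,1)$ reduces by $\arc10$ moves alone.

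This is precisely the induction on $a$ you set aside---but the point you missed is that the inductive step requires \emph{no} global reconciliation of old and new interleaved pairs.  The change-of-defect rules localize everything to the three vertices involved in each $\arc33$ move, so you only ever inspect a constant-size picture.  Your direct combinatorial attack on the full signed interleaving sum would eventually succeed, but the case analysis you anticipate (two coinciding pairs, one shared index, four distinct indices, coupled with positional order) is exactly the pain the paper's approach sidesteps.  If you want to carry out your plan anyway, the small cases you list are correct and will calibrate you; but I would strongly recommend switching to the dynamic argument.
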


\begin{proof}
The curve $T(p, 1)$ can be reduced \EDIT{to a simple closed curve} using only $\arc10$ moves, so its defect is zero.  \EDIT{For the rest of the proof, assume $a\ge 1$.}

\EDIT{We define a \emph{stripe} of $T(p,ap+1)$ to be a subpath from some innermost point to the next outermost point, or equivalently, a subpath of any strand from the bottom to the top in the flat braid representation.  Each stripe contains exactly $p-1$ crossings.  A~\emph{block} of $T(p,ap+1)$ consists of $p(p-1)$ crossings in $p$ consecutive stripes; within any block, each pair of strands intersects exactly twice.}  We can reduce $T(p, ap+1)$ to $T(p, ({a-1})p+1)$ by straightening \EDIT{any block} one strand at a time.  Straightening the bottom strand of \EDIT{the} block requires the following $\binom{p}{2}$ moves, as shown in Figure \ref{F:braid-wide}.

\begin{itemize}
\item
$\binom{p-1}{2}$ $\arc33$ moves pull the bottom strand downward over one intersection point of every other pair of strands. Just before each $\arc33$ move, exactly one of the three pairs of the three relevant vertices is interleaved, so each move decreases the defect by $2$.

\item
$(p-1)$ $\arc20$ moves eliminate a pair of intersection points between the bottom strand and every other strand. Each of these moves also decreases the defect by $2$.
\end{itemize}
Altogether, straightening one strand decreases the defect by $\smash{2\binom{p}{2}}$. Proceeding similarly with the other strands, 
we conclude that $\Defect(T(p, ap+1)) = \Defect(T(p, {(a-1)p+1})) + 2\binom{p+1}{3}$. The lemma follows immediately by induction.
\end{proof}

\begin{figure}[ht]
\centering
\includegraphics[scale=0.275]{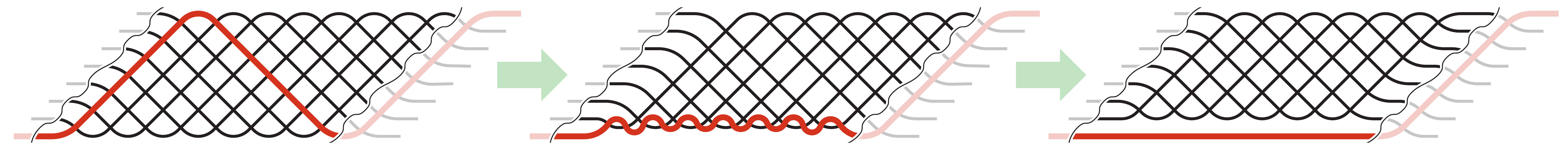}
\caption{Straightening one strand in a block of $T(8, 8a+1)$.}
\label{F:braid-wide}
\end{figure}

\vspace{3pt}

\begin{lemma}
\label{L:braid-deep}
$\Defect(T(aq+1, q)) = -2a \binom{q}{3}$ for all integers $a\ge 0$ and $q \ge 1$.
\end{lemma}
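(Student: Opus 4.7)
The plan is to proceed by induction on $a$, mirroring the strategy of Lemma~\ref{L:braid-wide} with the roles of strands and stripes exchanged. The base case $a = 0$ is immediate: $T(1, q)$ has $(1-1)q = 0$ self-intersections, and hence defect zero.

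For the inductive step, I would exhibit a sequence of homotopy moves transforming $T(aq+1, q)$ into $T((a-1)q+1, q)$ whose cumulative effect on the defect is $-2\binom{q}{3}$; the induction then closes immediately. The guiding structural observation is that although the flat braid representation of $T(aq+1, q)$ has many strands ($aq+1$) and few stripes ($q$), it still admits a ``block'' analogous to the one in Lemma~\ref{L:braid-wide}. Because each stripe is a cyclic shift of strand positions by one, only $q$ of the $aq+1$ strands sweep all the way across the braid during a single stripe (call these \emph{active}), while the other $(a-1)q+1$ merely shift one position per stripe (\emph{passive}). In consequence, the $q$ active strands pairwise cross twice, each active strand meets every passive strand in exactly one crossing, and no two passive strands cross. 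This gives a $q$-strand block completely analogous to the one exploited in Lemma~\ref{L:braid-wide}, though now spanning all $q$ stripes rather than a subset of them.

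I would then peel off $q$ strands from this configuration---via $\arc33$ moves that slide a strand through existing crossings of the other strands, followed by $\arc20$ moves that eliminate the pairs of crossings left behind, exactly as in Lemma~\ref{L:braid-wide}---leaving a curve combinatorially isomorphic to $T((a-1)q+1, q)$. By the identical local case analysis, each $\arc33$ move contains exactly one interleaved pair among its three relevant vertices and each $\arc20$ move destroys an interleaved pair, so by Figure~\ref{F:defect-change} every move in the sequence decreases the defect by $2$. A hockey stick summation $\sum_{k=2}^{q-1}\binom{k}{2} = \binom{q}{3}$ accumulates these decreases to exactly $2\binom{q}{3}$.

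The main obstacle is formulating the peeling precisely: identifying which $q$ strands to peel and in what order, justifying that the resulting curve is combinatorially $T((a-1)q+1, q)$, and verifying that every $\arc33$ and $\arc20$ move along the way really does decrease the defect by $2$ rather than leaving it unchanged. These combinatorial details are more delicate than in Lemma~\ref{L:braid-wide} because the block here spans the full set of $q$ stripes, so the peeling must interact with both active and passive strands. Once the setup is fixed, the defect accounting transcribes Lemma~\ref{L:braid-wide}'s argument term for term.
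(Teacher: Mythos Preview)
Your proposal has a fatal sign error.  You assert that ``by the identical local case analysis'' every $\arc33$ and $\arc20$ move in your peeling sequence decreases the defect by~$2$, just as in Lemma~\ref{L:braid-wide}.  But if that were true, your induction would yield $\Defect(T(aq+1,q)) = \Defect(T((a-1)q+1,q)) + 2\binom{q}{3}$, hence $\Defect(T(aq+1,q)) = +2a\binom{q}{3}$, with the wrong sign.  The interleaving pattern here is \emph{not} the same as in Lemma~\ref{L:braid-wide}: in the paper's proof, each $\arc33$ move encounters \emph{two} interleaved pairs among its three vertices (so it \emph{increases} defect by~$2$), and each $\arc20$ move involves strands oriented in opposite directions (so the two vanishing vertices are \emph{not} interleaved and the defect is unchanged).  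The whole point of the two lemmas is that the orientations conspire differently in the ``wide'' and ``deep'' cases, producing defects of opposite sign; you cannot transcribe the case analysis term for term.

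Your structural picture is also off.  By the rotational symmetry of $T(aq+1,q)$, all $aq+1$ strands in the flat braid play the same role; there is no intrinsic partition into $q$ ``active'' strands that pairwise cross twice and $(a-1)q+1$ ``passive'' strands that do not cross at all.  (Indeed, if the passive strands were pairwise disjoint, removing the active strands would leave a curve with zero crossings, not $T((a-1)q+1,q)$.)  The paper does not straighten strands in a block; instead it removes the $q$ innermost \emph{loops} one at a time, using a single $\arc10$ move per loop after clearing its interior—an operation with no counterpart in Lemma~\ref{L:braid-wide}.  The $j$th loop has $q-j$ strands inside it, so its removal contributes $2\binom{q-j}{2}$ to the defect increase, and the hockey-stick identity then gives the total $2\binom{q}{3}$.
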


\begin{proof}
The curve $T(1,q)$ is simple, so its defect is trivially zero. For any positive integer $a$, we can transform $T(aq+1, q)$ into $T((a-1)q+1, q)$ by incrementally removing the innermost $q$ \emph{loops}. We can remove the first loop using $\binom{q}{2}$ homotopy moves, as shown in Figure \ref{F:braid-deep}. (The first transition in Figure~\ref{F:braid-deep} just reconnects the top left and top right endpoints of the flat braid.)
\begin{itemize}
\item
$\binom{q-1}{2}$ $\arc33$ moves pull the left side of the loop to the right, over the crossings inside the loop. Just before each $\arc33$ move, the three relevant vertices contain two interleaved pairs, so each move \emph{increases} the defect by $2$.
\item
$(q-1)$ $\arc20$ moves pull the loop over $q-1$ strands. The strands involved in each move are oriented in opposite directions, so these moves leave the defect unchanged.
\item
Finally, we can remove the loop with a single $\arc10$ move, which does not change the defect.
\end{itemize}
Altogether, removing one loop increases the defect by $\smash{2\binom{q-1}{2}}$. Proceeding similarly with the other loops, 
we conclude that $\Defect(T(aq+1, q)) = \Defect(T((a-1)q+1, q)) - 2 \binom{q}{3}$. The lemma follows immediately by induction.
\end{proof}

\begin{figure}[ht]
\centering
\includegraphics[scale=0.275]{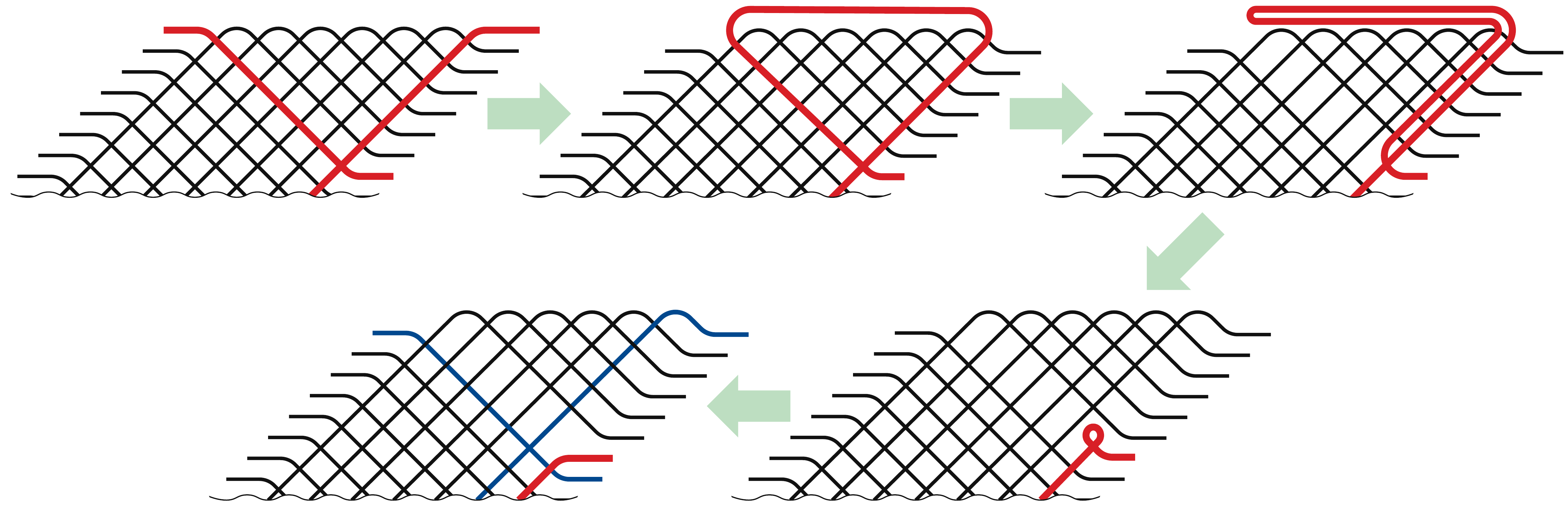}
\caption{Removing one loop from the innermost block of $T(7a+1, 7)$.}
\label{F:braid-deep}
\end{figure}

Either of the previous lemmas imply the following lower bound, which is also implicit in the work of Hayashi \etal~\cite{hhsy-musrm-12}.

\begin{theorem}
For every positive integer~$n$, there are closed curves with $n$ vertices whose defects are $n^{3/2}/3 - O(n)$ and $-n^{3/2}/3 + O(n)$, and therefore requires at least $n^{3/2}/6 - O(n)$ homotopy moves to reduce to a simple closed curve.
\end{theorem}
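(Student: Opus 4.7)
The plan is to realize the two claimed defects using the flat torus knot constructions from Lemmas~\ref{L:braid-wide} and~\ref{L:braid-deep}, then pad each curve with trivial loops to hit exactly $n$ vertices, and finally apply Lemma~\ref{L:defect}.

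For the positive defect bound, I would take $p \coloneqq \floor{\sqrt{n+1}}$ and consider the flat torus knot $T(p, p+1)$, which has $(p-1)(p+1) = p^2 - 1 \le n$ self-crossings. Lemma~\ref{L:braid-wide} applied with $a = 1$ gives
\[
\Defect(T(p, p+1)) = 2\binom{p+1}{3} = \frac{(p+1)p(p-1)}{3}.
\]
Since $p \ge \sqrt{n+1} - 1 \ge \sqrt{n} - 1$, expanding yields $\Defect(T(p,p+1)) \ge n^{3/2}/3 - O(n)$. Symmetrically, for the negative bound I would set $p \coloneqq \floor{\sqrt{n}}$ and use $T(p+1, p)$, which has $p\cdot p = p^2 \le n$ crossings; Lemma~\ref{L:braid-deep} with $a = 1$ gives $\Defect(T(p+1, p)) = -2\binom{p}{3} \le -n^{3/2}/3 + O(n)$.

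To hit exactly $n$ vertices, I would attach $n - (p^2 - 1)$ (respectively $n - p^2$) small disjoint \emph{trivial loops} to the curve, each introduced by the inverse of a $\arc{1}{0}$ move in an otherwise empty face. By the case analysis preceding Lemma~\ref{L:defect}, every $\arc{1}{0}$ move leaves the defect unchanged, so each added loop contributes one new vertex and zero to the defect. Since $n - (p^2-1) \le 2p + 1 = O(\sqrt{n})$, the padded curves have exactly $n$ vertices and defects $\pm n^{3/2}/3 - O(n)$. The lower bound on the number of homotopy moves is then immediate from Lemma~\ref{L:defect}, which gives $\abs{\Defect}/2 \ge n^{3/2}/6 - O(n)$.

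The construction itself is forced, so there is no real obstacle; the only subtlety is verifying that the $O(\sqrt{n})$ slack between $p^2$ and $n$ is absorbed into the $O(n)$ error term, which follows from a direct expansion of $(\sqrt{n}-1)^3$. All nontrivial work has already been done in Lemmas~\ref{L:braid-wide}, \ref{L:braid-deep}, and~\ref{L:defect}.
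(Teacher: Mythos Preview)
Your proposal is correct and essentially identical to the paper's proof: both apply Lemmas~\ref{L:braid-wide} and~\ref{L:braid-deep} with $a=1$ to the curves $T(\floor{\sqrt{n+1}}, \floor{\sqrt{n+1}}+1)$ and $T(\floor{\sqrt{n}}+1, \floor{\sqrt{n}})$, pad to exactly $n$ vertices using $\arc01$ moves (which leave the defect unchanged), and then invoke Lemma~\ref{L:defect}. Your exposition is slightly more explicit about the asymptotic verification, but the argument is the same.
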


\begin{proof}
The lower bound follows from the previous lemmas by setting $a=1$.  If $n$ is a prefect square, then the flat torus knot $T(\sqrt{n}+1, \sqrt{n})$ has $n$ vertices and defect $\smash{-2\binom{\sqrt{n}}{3}}$.  If $n$ is not a perfect square, we can achieve defect ${-2\binom{\floor{\sqrt{n}}}{3}}$ by applying $\arc01$ moves to the curve $T(\floor{\sqrt{n}}+1, \floor{\sqrt{n}})$.  Similarly, we obtain an $n$-vertex curve with defect $\smash{2\binom{\floor{\sqrt{n+1}}+1}{3}}$ by adding loops to the curve $T(\floor{\sqrt{n+1}}, \floor{\sqrt{n+1}}+1)$.  Lemma \ref{L:defect} now immediately implies the lower bound on homotopy moves.
\end{proof}

\subsection{Multicurves}
\label{SS:multi-lower}

Our previous results immediately imply that simplifying a multicurve with $n$ vertices requires at least $\Omega(n^{3/2})$ homotopy moves; in this section we derive additional lower bounds in terms of the number of constituent curves.  We distinguish between two natural variants of simplification: transforming a multicurve into an \emph{arbitrary} set of disjoint simple closed curves, or into a \emph{particular} set of disjoint simple closed curves.

Both lower bound proofs rely on the classical notion of \EMPH{winding number}.  Let $\gamma$ be an arbitrary closed curve in the plane, let $p$ be any point outside the image of $\gamma$, and let~$\rho$ be any ray from $p$ to infinity that intersects~$\gamma$ transversely.  The winding number of $\gamma$ around $p$, which we denote \EMPH{$\Wind(\gamma, p)$}, is the number of times~$\gamma$ crosses $\rho$ from right to left, minus the number of times $\gamma$ crosses $\rho$ from left to right.  The winding number does not depend on the particular choice of ray $\rho$.  All points in the same face of~$\gamma$ have the same winding number.  Moreover, if there is a homotopy from one curve $γ$ to another curve $γ’$, \EDIT{where the image of any intermediate curve} does not include $p$, then $\Wind(γ, p) = \Wind(γ', p)$ \cite{h-udtse-35}.

\begin{lemma}
Transforming a $k$-curve with $n$ vertices in the plane into $k$ arbitrary disjoint circles requires $\Omega(nk)$ homotopy moves in the worst case.
\end{lemma}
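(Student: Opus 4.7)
The plan is to construct a specific $k$-curve $\Gamma$ with $n$ vertices whose simplification into $k$ disjoint simple circles requires $\Omega(nk)$ homotopy moves, using a winding-number argument in the spirit of Lemma~\ref{L:defect}.  I would take $\Gamma$ to consist of a single ``highly wound'' component $\gamma$---for instance a rescaled flat torus knot $T(m,1)$ with $m\coloneqq \Theta(n)$---together with $k-1$ tiny nested simple circles $C_1\supset C_2\supset\cdots\supset C_{k-1}$ surrounding a point $O$ that lies inside the innermost face of $\gamma$.  By construction $\gamma$ is initially disjoint from every $C_i$, so the multicurve has $k$ components and $n$ vertices (all self-crossings of $\gamma$).

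The winding invariant is $\Wind(\gamma,O)$.  In any target configuration each simplified component is a simple closed curve, so $\abs{\Wind(\gamma,O)}\le 1$ at the end, while initially $\Wind(\gamma,O)=m$.  Because winding numbers change only when $\gamma$ passes through $O$, any simplifying homotopy must contain at least $m-1$ passage events, witnessed by parameters $s_1^\ast,\dots,s_{m-1}^\ast\in S^1$ lying in the $m-1$ distinct winds of the initial $\gamma$.  At the moment of each passage, an arc of $\gamma$ lies inside every $C_i$ (because $O$ does), yet at time~$0$ and time~$1$ every arc of $\gamma$ lies outside every $C_i$.  Hence each passage event forces, for every $i$, at least one $\arc02$ move bringing an arc of $\gamma$ across $\partial C_i$ and at least one $\arc20$ move taking it back out.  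The aim is to show that different passage events cannot all be served by a single entry/exit pair into $C_i$, so each $C_i$ contributes $\Omega(m)$ moves, and summing over the $k-1$ guard circles gives $\Omega(mk)=\Omega(nk)$.

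The main obstacle is the last step: an adversary could try to consolidate many winds of $\gamma$ into one arc inside some $C_i$ before performing the corresponding passages, so that one $\arc02$/$\arc20$ pair serves all of them.  I would address this by tracking, for each parameter $s\in S^1$, the time intervals during which $\gamma_t(s)\in\mathrm{int}(C_i)$: each homotopy move involving $C_i$ toggles this indicator on a short parameter interval, so the total number of indicator changes is $\Omega(m)$ per guard provided the $m-1$ passage parameters occupy distinct winds of $\gamma$.  The interleaved structure of the flat torus knot pins these parameters in $m-1$ disjoint parameter intervals, and any attempted consolidation of winds into a single arc of $\gamma$ inside $C_i$ must either be realized by $\arc02$/$\arc20$ moves against the other guard circles or by self-moves of $\gamma$ that themselves feed back into the winding count.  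Together these contributions yield the desired $\Omega(nk)$ lower bound.
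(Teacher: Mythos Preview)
Your construction is essentially the paper's---a curve winding $\Theta(n)$ times around $k-1$ small simple circles---and winding numbers are indeed the right tool.  The gap is in the invariant you chose and in the passage-event bookkeeping built on top of it.

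You track $\Wind(\gamma,O)$ for a \emph{fixed} point $O$ and then try to convert each of the $\Theta(n)$ passages of $\gamma$ through $O$ into a forced crossing with every guard circle $C_i$.  That conversion requires $O$ to still lie inside $C_i$ at the moment of passage, but the $C_i$ are constituent curves of the multicurve and move during the homotopy.  Consider the strategy that first slides every $C_i$ out to the unbounded face of $\gamma$ and only afterward simplifies $\gamma$.  When $\gamma$ finally sweeps through $O$, no $C_i$ is anywhere near~$O$, so no crossing is forced and your indicator ``$\gamma_t(s)\in\operatorname{int}(C_i)$'' never switches on.  Your last paragraph claims that such evasion must use moves ``against the other guard circles'' or ``self-moves of $\gamma$ that feed back into the winding count'', but sliding $C_i$ out uses neither: it consists entirely of $\arc02/\arc20$ moves between $C_i$ and $\gamma$, which are exactly the moves you are trying to lower-bound, and your potential $\Wind(\gamma,O)$ is blind to all of them.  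The consolidation argument is not merely incomplete; in this scenario it is vacuous.

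The repair is short and is what the paper does: anchor the winding number to the moving circles rather than to a fixed point.  Set $w_i \coloneqq \max_\theta \abs{\Wind(\gamma, C_i(\theta))}$ and $W \coloneqq \sum_i w_i$.  Then $W$ starts at $\Theta(nk)$, must drop to at most $k$ before $\gamma$ can become simple (each $C_i$ must lie in a face of $\gamma$ with winding number in $\{-1,0,1\}$), and every homotopy move changes $W$ by at most~$1$.  This single Lipschitz potential replaces all of the passage-event and indicator-interval machinery, and the $\Omega(nk)$ bound follows in one line.
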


\begin{proof}
For arbitrary positive integers $n$ and $k$, we construct a multicurve with $k$ disjoint constituent \EDIT{curves}, all but one of which are simple, as follows.  The first $k-1$ constituent \EDIT{curves} $\gamma_1, \dots, \gamma_{k-1}$ are disjoint circles inside the open unit disk centered at the origin.  (The precise configuration of these circles is unimportant.)  The remaining \EDIT{constituent} curve $\gamma_o$ is a spiral winding $n+1$ times around the closed unit disk centered at the origin, plus a line segment connecting the endpoints of the spiral; $\gamma_o$ is the simplest possible curve with winding number $n+1$ around the origin. Let $\gamma$ be the disjoint union of these $k$ curves; we claim that $\Omega(nk)$ homotopy moves are required to simplify $\gamma$.  See Figure \ref{F:winding}.

\begin{figure}[ht]
\centering
\includegraphics[scale=0.5]{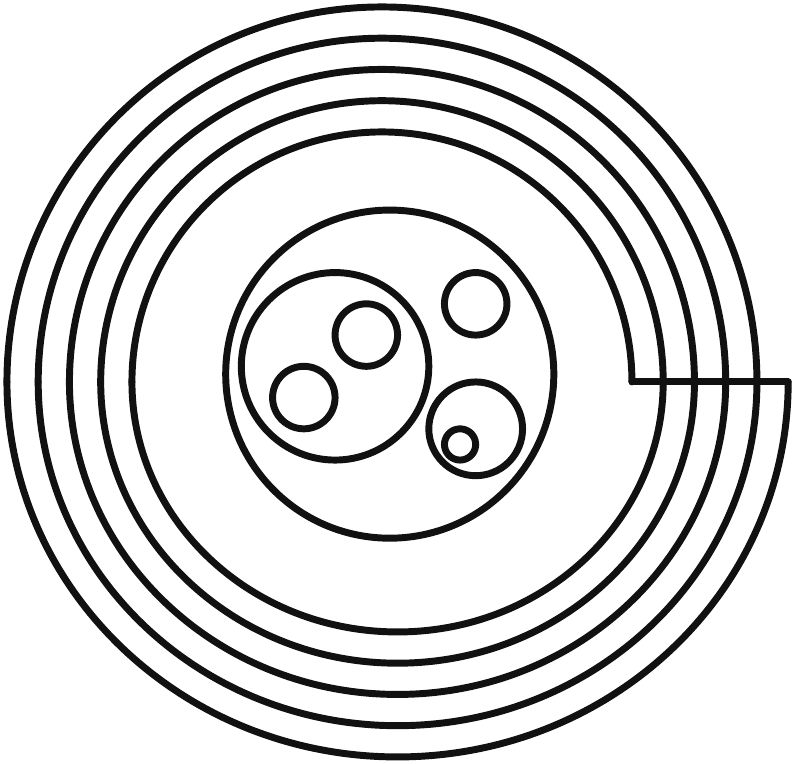}
\caption{Simplifying this multicurve requires $\Omega(nk)$ homotopy moves.}
\label{F:winding}
\end{figure}

Consider the faces of the outer curve $\gamma_o$ during any homotopy of $\gamma$.  Adjacent faces of $\gamma_o$ have winding numbers that differ by $1$, and the outer face has winding number $0$.  Thus, for any non-negative integer~$w$, as long as the maximum absolute winding number $\Abs{\max_p \Wind(\gamma_o,p)}$ is at least $w$, the curve $\gamma_o$ has at least $w+1$ faces (including the outer face) and therefore at least $w-1$ vertices, by Euler's formula.  On the other hand, if any curve~$\gamma_i$ intersects a face of $\gamma_o$, no homotopy move can remove that face \EDIT{until the intersection between $\gamma_i$ and $\gamma_o$ is removed}.  Thus, before the simplification of $\gamma_o$ is complete, each curve~$\gamma_i$ must intersect only faces with winding number $0$, $1$, or $-1$.

For each index $i$, let $w_i$ denote the maximum absolute winding number of $\gamma_o$ around any point of~$\gamma_i$:
\[
	w_i \coloneqq \max_\theta \Abs{\Wind\left(\gamma_o,\strut \gamma_i(\theta)\right)}.
\]
Let $W = \sum_i w_i$.  Initially, $W = k(n+1)$, and when $\gamma_o$ first becomes simple, we must have $W \le k$.  Each homotopy move changes $W$ by at most $1$; specifically, at most one term $w_i$ changes at all, and that term either increases or decreases by $1$.  The $\Omega(nk)$ lower bound now follows immediately.
\end{proof}

\begin{theorem}
\label{Th:multi-lower}
Transforming a $k$-curve with $n$ vertices in the plane into an arbitrary set of $k$ simple closed curves requires $\Omega(n^{3/2} + nk)$ homotopy moves in the worst case.
\end{theorem}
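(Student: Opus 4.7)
The plan is to interpret $\Omega(n^{3/2}+nk)$ as $\Omega(\max(n^{3/2},nk))$ and exhibit, for each $n$ and $k$, two $k$-curves with exactly $n$ self-crossings --- one requiring $\Omega(nk)$ homotopy moves and one requiring $\Omega(n^{3/2})$ homotopy moves. The $\Omega(nk)$ half is delivered directly by the preceding lemma. For the $\Omega(n^{3/2})$ half, I would take the multicurve $\gamma = \gamma_0 \sqcup \gamma_1 \sqcup \cdots \sqcup \gamma_{k-1}$, where $\gamma_0$ is a copy of the flat torus knot $T(\lfloor\sqrt n\rfloor+1,\lfloor\sqrt n\rfloor)$ padded by at most $O(\sqrt n)$ trivial $\arc01$ moves so that its vertex count is exactly $n$, and $\gamma_1,\ldots,\gamma_{k-1}$ are pairwise disjoint tiny round circles placed in a bounded region disjoint from $\gamma_0$. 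Because $\arc01$ moves preserve defect, Lemma~\ref{L:braid-deep} gives $\abs{\Defect(\gamma_0)} = \Omega(n^{3/2})$.

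The technical content of the plan is to check that the case analysis preceding Lemma~\ref{L:defect} extends to the defect of a single constituent curve under a multicurve homotopy. Each homotopy move is supported in a small disk, and I would verify: (a)~if every vertex involved in the move is a self-intersection of $\gamma_0$, then the move restricts to a homotopy move on $\gamma_0$ alone, changing $\Defect(\gamma_0)$ by at most $2$; (b)~otherwise no self-intersection of $\gamma_0$ is created, destroyed, or even reordered along $\gamma_0$. Case~(b) is immediate for $\arc10$ and $\arc20$ moves, since the deleted vertices are not self-intersections of $\gamma_0$. For $\arc33$ moves it follows because at most one of the three triangle vertices can be a self-intersection of $\gamma_0$---two would force the third to be one as well, as both local strands at a self-intersection are pieces of $\gamma_0$---and because the $\gamma_0$-boundary-endpoints of the supporting disk are preserved by the move, so the single $\gamma_0$-self-intersection in the disk keeps its cyclic position and Gauss sign along $\gamma_0$.

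Consequently $\Defect(\gamma_0)$ changes only through type~(a) moves, and by at most $2$ per move. Any simplification of $\gamma$ in particular makes $\gamma_0$ simple and thus drops $\abs{\Defect(\gamma_0)}$ from $\Omega(n^{3/2})$ to $0$, forcing $\Omega(n^{3/2})$ moves in total. The main obstacle I expect is the routine but slightly fiddly case~(b) check for $\arc33$ moves --- namely, confirming that a move whose triangle contains a single $\gamma_0$-self-intersection preserves every interleaving $x\between y$ of self-intersections of~$\gamma_0$. Once that is in place, the two constructions together realize the lower bound $\Omega(n^{3/2}+nk)$.
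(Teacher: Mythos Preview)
Your proposal is correct and matches the paper's approach. The paper actually states Theorem~\ref{Th:multi-lower} without proof, treating it as immediate from the preceding $\Omega(nk)$ lemma together with the single-curve $\Omega(n^{3/2})$ bound from Section~\ref{SS:torus-knots}; your construction---a flat torus knot plus $k-1$ disjoint circles---is exactly the intended one, and your case analysis simply makes explicit the step the paper leaves implicit, namely that $\Defect(\gamma_0)$ changes by at most~$2$ under any \emph{multicurve} homotopy move. The $\arc33$ check you flag is indeed routine: forgetting every strand not belonging to $\gamma_0$ turns each multicurve move into either a genuine homotopy move on $\gamma_0$ (your case~(a)) or a mere isotopy of $\gamma_0$ (your case~(b)), so the single-curve defect analysis carries over unchanged.
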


We say that a collection of $k$ disjoint simple closed curves is \EMPH{nested} if some point lies in the interior of every curve, and \EMPH{unnested} if the curves have disjoint interiors.

\begin{lemma}
Transforming $k$ nested circles in the plane into $k$ unnested circles requires $\Omega(k^2)$ homotopy moves.
\end{lemma}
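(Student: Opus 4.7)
The plan is to extend the pairwise winding-number idea from the previous lemma.  Orient the $k$ nested input circles consistently, and for each ordered pair of distinct constituent curves $\gamma_i, \gamma_j$ that happen to be disjoint as subsets of the plane, set $w_{ij} \coloneqq \Wind(\gamma_i, \gamma_j(0))$; because the image of $\gamma_j$ is connected and avoids $\gamma_i$, this equals the common value of $\Wind(\gamma_i, \cdot)$ at every point of $\gamma_j$.  Initially, $w_{ij} = 1$ when $\gamma_j$ lies inside $\gamma_i$ and $w_{ij} = 0$ otherwise, giving exactly $\binom{k}{2}$ ordered pairs with $w_{ij} = 1$; in the final unnested configuration, every $w_{ij} = 0$.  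The standard continuity argument already used in the previous lemma shows that $\Wind(\gamma_i^{(t)}, \gamma_j^{(t)}(0))$ is an integer-valued continuous function of time $t$ on any interval during which $\gamma_i$ and $\gamma_j$ remain disjoint, and therefore constant on such intervals.  Consequently, for each of the $\binom{k}{2}$ pairs that starts out nested, there must be some intermediate time at which $\gamma_i \cap \gamma_j \ne \emptyset$.

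Next, I would identify which homotopy moves can change the intersection count $|\gamma_i \cap \gamma_j|$ between two distinct constituent curves.  A $\arc{1}{0}$ move involves only one strand, and a $\arc{3}{3}$ move flips a triangle whose three strands still pairwise cross once after the move, so neither type alters any such intersection count.  Only $\arc{2}{0}$ moves and their inverses $\arc{0}{2}$ can, and each such move involves only two strands, hence at most one pair of distinct constituent curves, whose intersection count changes by exactly $\pm 2$.

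Combining these two observations: for each of the $\binom{k}{2}$ initially-nested pairs, the count $|\gamma_i \cap \gamma_j|$ must first rise from $0$ to a positive value (requiring at least one $\arc{0}{2}$ move involving both $\gamma_i$ and $\gamma_j$) and later return to $0$ (requiring at least one $\arc{2}{0}$ move involving both).  Since each such move serves at most one pair, summing gives at least $2\binom{k}{2} = k(k-1) = \Omega(k^2)$ homotopy moves, as claimed.  The only delicate point is verifying the continuity-of-winding-number invariance for \emph{immersed} (not necessarily embedded) intermediate curves, but this is the same argument used for the preceding lemma and poses no real obstacle.
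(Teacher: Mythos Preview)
Your proposal is correct and follows essentially the same argument as the paper: use the winding-number invariance to show that each of the $\binom{k}{2}$ initially nested pairs must cross at some point, observe that only $\biarc{0}{2}$ moves change the pairwise intersection count and each such move affects at most one pair, and conclude that at least $2\binom{k}{2}$ moves are required. Your version is slightly more explicit about why $\arc{1}{0}$ and $\arc{3}{3}$ moves leave pairwise intersection counts unchanged, but the structure is identical.
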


\begin{proof}
Let $\gamma$ and $\gamma'$ be two nested circles, with $\gamma'$ in the interior of~$\gamma$ and with $\gamma$ directed counterclockwise.  Suppose we apply an arbitrary homotopy to these two curves.  If the curves remain disjoint during the entire homotopy, then $\gamma'$ always lies inside a face of $\gamma$ with winding number~$1$; in short, the two curves remain nested.
%
%
Thus, any sequence of homotopy moves that takes $\gamma$ and $\gamma'$ to two non-nested simple closed curves contains at least one $\arc{0}{2}$ move that makes the curves cross (and symmetrically at least one $\arc{2}{0}$ move that makes them disjoint again).

\begin{figure}[ht]
\centering
\includegraphics[scale=0.5]{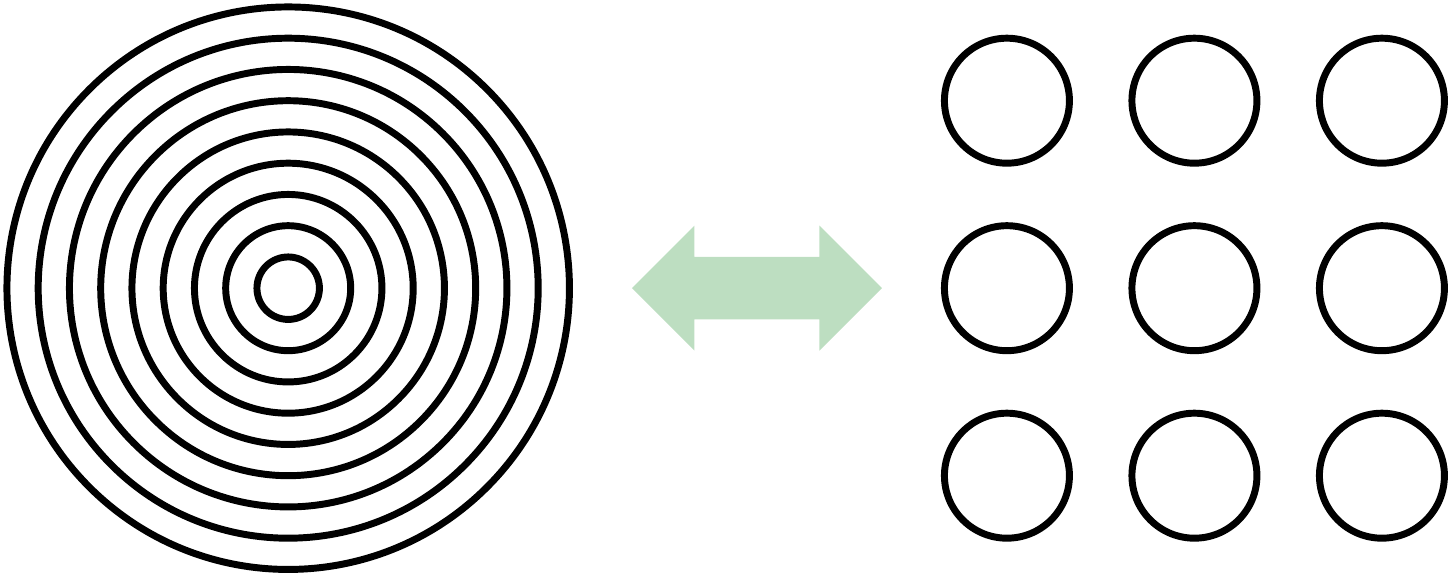}
\caption{Nesting or unnesting $k$ circles requires $\Omega(k^2)$ homotopy moves.}
\end{figure}

Consider a set of $k$ nested circles.  Each of the $\smash{\binom{k}{2}}$ pairs of circles requires at least one $\arc{0}{2}$ move and one $\arc{2}{0}$ move to unnest.  Because these moves involve distinct pairs of curves, at least $\smash{\binom{k}{2}}$ $\arc{0}{2}$ moves and $\smash{\binom{k}{2}}$ $\arc{2}{0}$ moves, and thus at least $k^2-k$ moves altogether, are required to unnest every pair.
\end{proof}

\begin{theorem}
\label{Th:multi-lower2}
Transforming a $k$-curve with $n$ vertices in the plane into $k$ nested (or unnested) circles requires $\Omega(n^{3/2} + nk + k^2)$ homotopy moves in the worst case.
\end{theorem}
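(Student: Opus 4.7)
The plan is to combine the three lower bounds by exhibiting a single $k$-curve with $n$ vertices, built as the disjoint union of three independent sub-configurations placed in three disjoint regions of the plane, so that each homotopy move can contribute to at most one of the three obstructions and the lower bounds add.

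In the first region I place a single constituent curve equal to the flat torus knot $T(p,p+1)$ with $p = \Theta(\sqrt{n})$, consuming $\Theta(n)$ of the $n$ vertices. By Lemma~\ref{L:braid-wide} this curve has defect $\Omega(n^{3/2})$, so Lemma~\ref{L:defect} forces $\Omega(n^{3/2})$ homotopy moves to reduce it. In the second region I replicate the proof of Theorem~\ref{Th:multi-lower} at reduced scale, using the remaining vertices: a single spiral constituent curve with $\Theta(n)$ self-crossings and winding number $\Theta(n)$ around a chosen center, surrounded by $\Theta(k)$ pairwise-disjoint inner circles placed in its innermost windings. The winding-potential argument of Theorem~\ref{Th:multi-lower} then forces $\Omega(nk)$ additional moves involving the spiral or its inner circles. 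In the third region I place $\Theta(k)$ further simple circles, arranged in the nesting opposite to the target: pairwise nested if the target is $k$ unnested circles, and pairwise unnested (disjoint) if the target is $k$ nested circles. The preceding lemma, applied in the appropriate direction (its proof is symmetric in time), then forces $\Omega(k^2)$ moves to rearrange those circles.

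The main obstacle will be verifying that the three sub-arguments are non-overlapping, so that the three lower bounds genuinely add. I expect this to resolve cleanly: each argument tracks a potential function---the defect of the torus knot in region one, the quantity $W = \sum_i w_i$ for the spiral and its inner circles in region two, and the pairwise crossing count among the simple circles in region three---that depends only on vertices of its own sub-configuration. Because every homotopy move is local to a small neighborhood of a single face, and the three regions can be placed far apart in the plane, any one move involves vertices from at most one region and hence can decrease at most one of the three potentials by a bounded amount. Summing the three individual bounds then yields $\Omega(n^{3/2} + nk + k^2)$. A routine final step is to distribute the $k$ target circles across the three regions so that the target configuration of $k$ nested (resp.\ unnested) simple closed curves is reachable from the combined initial configuration.
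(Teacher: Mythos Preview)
The paper does not give an explicit proof of this theorem; it is stated immediately after the $\Omega(k^2)$ nesting lemma as an immediate consequence of the preceding results.  The implicit argument is much simpler than yours: since $\Omega(n^{3/2}+nk+k^2)=\Omega(\max\{n^{3/2},nk,k^2\})$, it suffices for each pair $(n,k)$ to exhibit \emph{some} $k$-curve with $n$ vertices realizing the dominant term.  The construction behind Theorem~\ref{Th:multi-lower} already forces $\Omega(n^{3/2}+nk)$ moves even for the nested target (nested circles are one particular family of disjoint circles), and $k$ wrongly-nested circles padded with $n$ trivial loops force $\Omega(k^2)$.  There is no need to merge all three obstructions into a single multicurve.

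Your combined construction can be made to work, but your additivity justification has a real gap.  You argue that because the three regions start far apart, ``any one move involves vertices from at most one region.''  That is true only at time zero: to reach $k$ \emph{nested} circles, every pair of constituent curves---including pairs drawn from different regions---must eventually become nested, so curves from different regions \emph{must} cross during the homotopy, and moves mixing two regions are unavoidable.  The repair is to argue by \emph{constituent curve}, not by region: the defect of the torus-knot curve $\gamma_1$ changes only under moves all of whose strands lie on $\gamma_1$; the winding potential $W$ changes only under moves touching the spiral $\gamma_o$ or one of its inner circles; and the $\Omega(k^2)$ count concerns $\arc02$/$\arc20$ moves whose two strands both belong to region-three circles.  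These three sets of moves are pairwise disjoint regardless of where the curves have wandered, so the bounds add.  With that correction your argument goes through, but it is considerably more work than the paper's one-line combination.
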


\begin{corollary}
\label{C:multi-lower3}
Transforming one $k$-curve with at most $n$ vertices into another $k$-curve with at most $n$ vertices requires  $\Omega(n^{3/2} + nk + k^2)$ homotopy moves in the worst case.
\end{corollary}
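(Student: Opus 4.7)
The approach is to derive this corollary as an essentially immediate consequence of Theorem \ref{Th:multi-lower2}. The key observation is that the target configurations used in that theorem --- $k$ disjoint nested (or unnested) simple closed curves --- are themselves $k$-curves, just ones with zero self-intersections. In particular, any such target configuration is trivially a $k$-curve with at most $n$ vertices for every $n \ge 0$.

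Concretely, the plan is as follows. Fix $n$ and $k$, and let $\gamma$ be the hard $n$-vertex $k$-curve furnished by Theorem \ref{Th:multi-lower2}, whose transformation into the collection $\sigma$ of $k$ nested (or unnested) simple closed circles requires $\Omega(n^{3/2} + nk + k^2)$ homotopy moves. The pair $(\gamma, \sigma)$ then already witnesses the worst-case bound claimed by the corollary: both $\gamma$ and $\sigma$ are $k$-curves with at most $n$ vertices, and transforming $\gamma$ into $\sigma$ requires $\Omega(n^{3/2} + nk + k^2)$ moves.

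I do not foresee any real obstacle, since Theorem \ref{Th:multi-lower2} carries all of the lower-bound content. The only step is the set-theoretic observation that the corollary's setting (transforming an arbitrary $k$-curve with at most $n$ vertices into another such $k$-curve) strictly contains the setting of Theorem \ref{Th:multi-lower2} (transforming an $n$-vertex $k$-curve into a fixed simple target configuration), so any worst-case lower bound in the more restricted setting transfers verbatim to the more general one.
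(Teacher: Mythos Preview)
Your proposal is correct and matches the paper's approach: the corollary is stated in the paper without proof, as an immediate consequence of Theorem~\ref{Th:multi-lower2}, and your observation that the nested (or unnested) target configuration is itself a $k$-curve with zero (hence at most $n$) vertices is exactly the missing one-line justification.
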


Although our lower bound examples consist of disjoint curves, all of these lower bounds apply without modification to \emph{connected} multicurves, because any $k$-curve can be connected with at most $k-1$ $\arc{0}{2}$ moves.  On the other hand, any connected $k$-curve has at least $2k-2$ vertices, so the $\Omega(k^2)$ terms in Theorem~\ref{Th:multi-lower2} and Corollary \ref{C:multi-lower3} are redundant.

\section{Electrical Transformations}
\label{S:electric}

Now we consider a related set of local operations on plane graphs, called \EMPH{\EDIT{facial} electrical transformations}, consisting of six operations in three dual pairs, as shown in Figure \ref{F:elec-dual}.
\begin{itemize}\itemsep0pt
\item
\emph{degree-$1$ reduction}: Contract the edge incident to a vertex of degree $1$, or delete the edge incident to a face of degree $1$
\item
\emph{series-parallel reduction}: Contract either edge incident to a vertex of degree $2$, or delete either edge incident to a face of degree $2$
\item
\emph{$\Delta Y$ transformation}: Delete a vertex of degree 3 and connect its neighbors with three new edges, or delete the edges bounding a face of degree 3 and join the vertices of that face to a new vertex.
\end{itemize}

\begin{figure}[ht]
\centering
\includegraphics[width=0.85\textwidth]{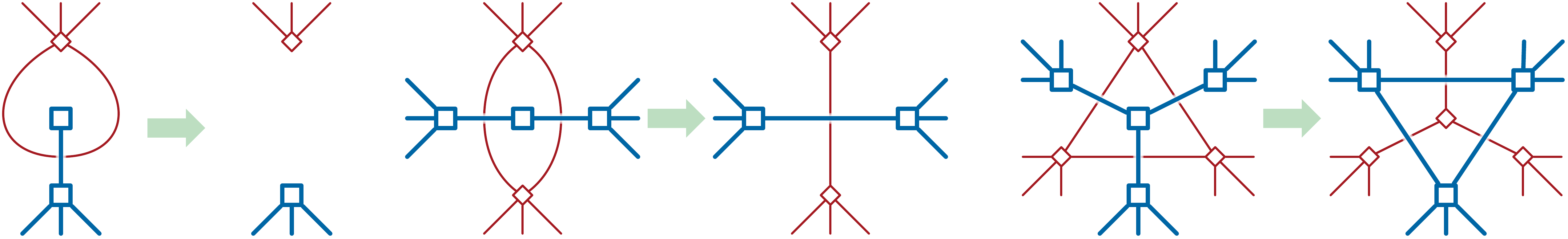}
\caption{Facial electrical transformations in a plane graph $G$ and its dual graph $G^*$.}
\label{F:elec-dual}
\end{figure}

\EDIT{Electrical transformations are usually defined more generally as a set of operations performed on abstract graphs, which} have been used since the end of the 19th century~\cite{k-etscn-1899,r-md-1904} to analyze resistor networks and other electrical circuits, but have since been applied to a number of other combinatorial problems on planar graphs, including shortest paths and maximum flows~\cite{a-wtns-60}; multicommodity flows~\cite{f-erpns-85}; and counting spanning trees, perfect matchings, and cuts~\cite{cpv-nastc-95}. We refer to our earlier preprint \cite[Section~1.1]{defect} for a more detailed history and an expanded list of applications.  \EDIT{However, all the algorithms we describe below reduce any plane graph to a single vertex using only \emph{facial} electrical transformations as defined above.}

In light of these applications, it is natural to ask \emph{how many} \EDIT{facial} electrical transformations are required in the worst case. 
%
\EDIT{The earliest algorithm for reducing a plane graph to a single vertex already follows from Steinitz's bigon-reduction argument, which we described in the introduction~\cite{s-pr-1916,sr-vtp-34}.  Steinitz reduced local transformations of \EDIT{plane} \emph{graphs} to local transformations of planar \emph{curves} by defining the \emph{medial graphs} (“$\Theta$-Prozess”), which we consider in detail below.  Later algorithms were given by Truemper \cite{t-drpg-89}, Feo and Provan \cite{fp-dtert-93}, and others.  Both Steinitz’s algorithm and Feo and Provan’s algorithm require at most $O(n^2)$ facial electrical transformations; this is the best upper bound known.}

Even the special case of regular grids is open and interesting. Truemper~\cite{t-drpg-89,t-md-92} describes a method to reduce the $p\times p$ grid in $O(p^3)$ steps.  Nakahara and Takahashi~\cite{nt-aafts-96} prove an upper bound of $O(\min\set{pq^2, p^2q})$ for the $p\times q$ cylindrical grid.  Because every $n$-vertex \EDIT{plane} graph is a minor of an $O(n)\times O(n)$ grid~\cite{v-ucvc-81,s-mncpe-84}, both of these results imply an $O(n^3)$ upper bound for arbitrary plane graphs; see Lemma \ref{L:smoothing}.  Feo and Provan~\cite{fp-dtert-93} claim without proof that Truemper's algorithm actually performs only $O(n^2)$ electrical transformations.  On the other hand, the smallest (cylindrical) grid containing every $n$-vertex plane graph as a minor has size $Ω(n) \times Ω(n)$~\cite{v-ucvc-81}.  Archdeacon \etal~\cite{acgp-frpwg-00} asked whether the $O(n^{3/2})$ upper bound for square grids can be improved to near-linear:
\begin{quote}\small
It is possible that a careful implementation and analysis of the grid-embedding schemes can lead to an $O(n\sqrt{n})$-time algorithm for the general planar case. It would be interesting to obtain a near-linear algorithm for the grid\dots. However, it may well be that reducing planar grids is $Ω(n\sqrt{n})$.
\end{quote}

\EDIT{Most of these earlier algorithms actually solve a more difficult problem, first considered by Akers~\cite{a-wtns-60} and Lehman~\cite{l-wtpn-63} and later solved by Epifanov \cite{e-rpges-66}, of reducing a planar graph with two special vertices called \emph{terminals} to a single edge between the two terminals.  In this context, any electrical transformation that contracts an edge incident to a terminal is forbidden.  Unfortunately, not every two-terminal plane graph can be reduced to a single edge using only facial electrical transformations; Figure~\ref{F:bullseye} shows two examples.  However, it is sufficient to allow loop reductions, parallel reductions, and $\arc\Delta Y$ transformations to be performed on faces that contain a terminal vertex of degree $1$ (and nothing else) \cite{fp-dtert-93}.  It is an open question whether our lower bound still holds if these additional non-facial transformations are allowed.}

\begin{figure}[ht]
\centering
\includegraphics[scale=0.3]{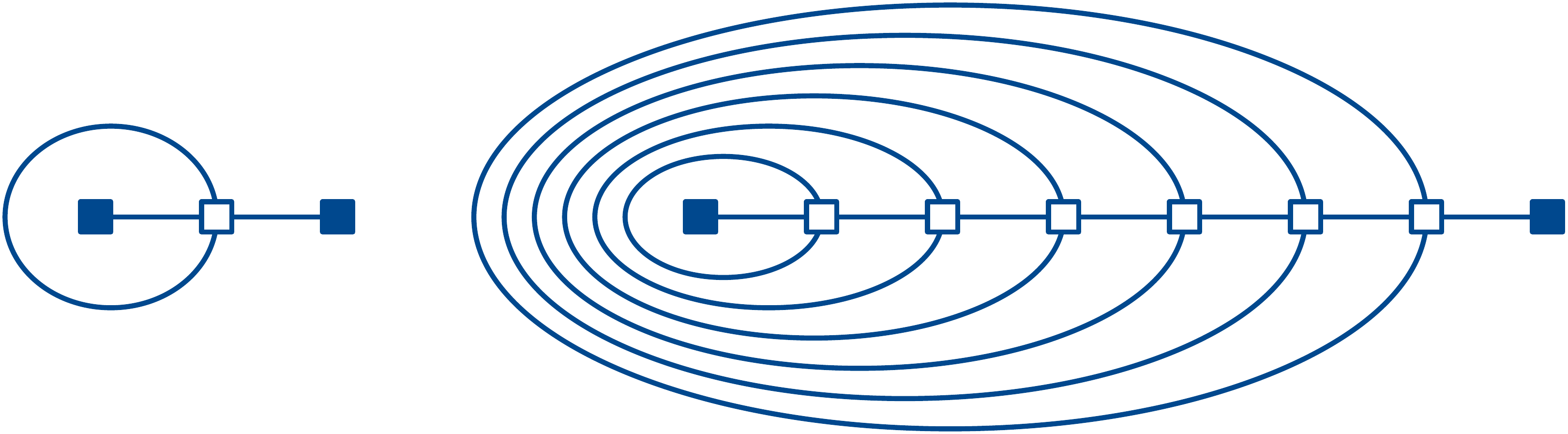}
\caption{Plane graphs with two terminals that cannot be further reduced using only facial electrical transformations.}
\label{F:bullseye}
\end{figure}

\subsection{Definitions}

The \EMPH{medial graph} of a plane graph $G$, which we denote \EMPH{$G^\times$}, is another plane graph whose vertices correspond to the edges of $G$ and whose edges correspond to incidences (with multiplicity) between vertices of $G$ and faces of~$G$. Two vertices of $G^\times$ are connected by an edge if and only if the corresponding edges in~$G$ are consecutive in cyclic order around some vertex, or equivalently, around some face in~$G$.  Every vertex in every medial graph has degree $4$; thus, every medial graph is the image of a multicurve.  The medial graphs of any plane graph~$G$ and its dual $G^*$ are identical.  
To avoid trivial boundary cases, we define the medial graph of an isolated vertex to be a circle.

\EDIT{Facial} electrical transformations in any plane graph $G$ correspond to local transformations in the medial graph $G^\times$ that are almost identical to homotopy moves. Each degree-$1$ reduction in $G$ corresponds to a $\arc10$ homotopy move in $G^\times$, and each $\Delta$Y transformation in~$G$ corresponds to a $\arc33$ homotopy move in~$G^\times$. A series-parallel reduction in $G$ contracts an empty bigon in $G^\times$ to a single vertex. Extending our earlier notation, we call this transformation a \EMPH{$\arc21$} move. We collectively refer to these transformations and their inverses as \EMPH{medial electrical moves}; see Figure~\ref{F:medial-elec}.

\begin{figure}[ht]
\centering
\includegraphics[width=0.9\linewidth]{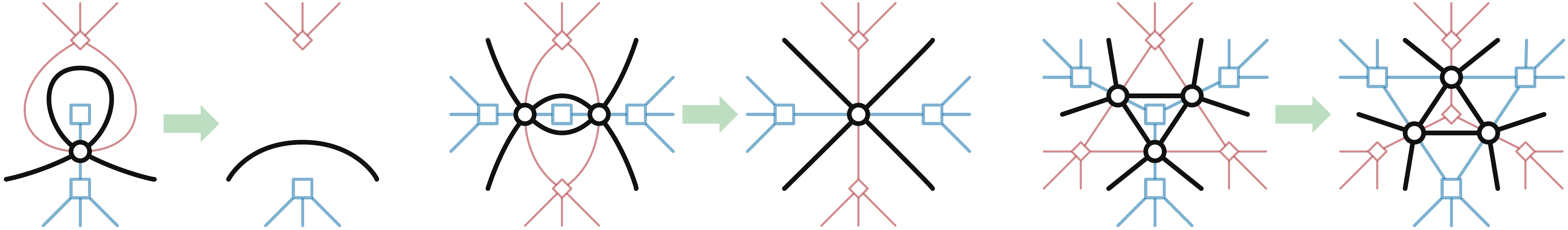}\\
\caption{Medial electrical moves $\arc10$, $\arc21$, and $\arc33$.}
\label{F:medial-elec}
\end{figure}

\EMPH{Smoothing} a \EDIT{multicurve} $γ$ at a vertex $x$ means replacing the intersection of $γ$ with a small neighborhood of $x$ with two disjoint simple paths, so that the result is another \EDIT{multicurve}. (There are two possible smoothings at each vertex; see Figure \ref{F:smoothing}.)  A \EMPH{smoothing} of $γ$ is any graph obtained by smoothing zero or more vertices of $γ$, and a \EMPH{proper smoothing} of $γ$ is any smoothing other than $\gamma$ itself. For any plane graph $G$, the (proper) smoothings of the medial graph $G^\times$ are precisely the medial graphs of (proper) minors of $G$.

\begin{figure}[ht]
\centering
\includegraphics[scale=0.3]{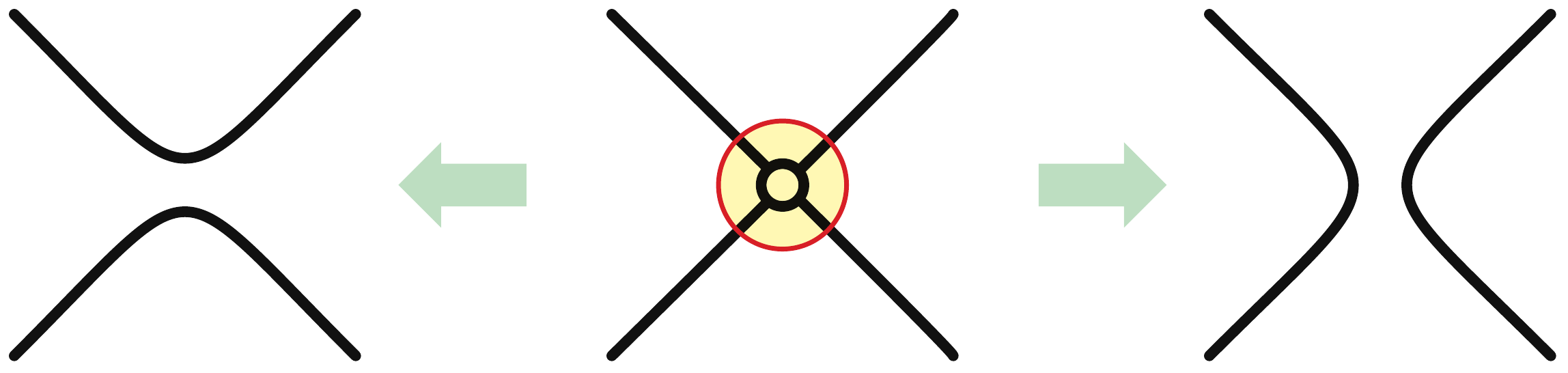}
\caption{Smoothing a vertex.}
\label{F:smoothing}
\end{figure}

\subsection{Electrical to Homotopy}

The main result of this section is that the number of \emph{homotopy} moves required to simplify a closed curve is a lower bound on the number of \emph{medial electrical moves} required to simplify the same closed curve.  This result is already implicit in the work of Noble and Welsh~\cite{nw-kg-00}, and most of our proofs closely follow theirs.  We include the proofs here to make the inequalities explicit and to keep the paper self-contained.

For any connected multicurve (or 4-regular \EDIT{plane} graph) $\gamma$, let \EMPH{$X(γ)$} denote the minimum number of medial electrical moves required to reduce $γ$ to a simple closed curve, and let \EMPH{$H(γ)$} is the minimum number of homotopy moves required to reduce $γ$ to an arbitrary collection of disjoint simple closed curves.

The following key lemma follows from close reading of proofs by Truemper~\cite[Lemma~4]{t-drpg-89} and several others~\cite{g-dtaa-91,nt-aafts-96,acgp-frpwg-00,nw-kg-00} that every minor of a ΔY-reducible graph is also ΔY-reducible.  Our proof most closely resembles an argument of Gitler~\cite[Lemma~2.3.3]{g-dtaa-91}, but restated in terms of medial electrical moves to simplify the case analysis.


  
\begin{lemma}
\label{L:smoothing}
For any connected plane graph $G$, reducing any connected proper minor of $G$ to a single vertex requires strictly fewer \EDIT{facial} electrical transformations than reducing $G$ to a single vertex.
Equivalently, $X(\overline{γ}) < X(γ)$ for every connected proper smoothing $\overline{γ}$ of every connected multicurve $γ$.
\end{lemma}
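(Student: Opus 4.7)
The plan is to prove the equivalent statement $X(\overline{\gamma}) < X(\gamma)$ by strong induction on $X(\gamma)$. The base case $X(\gamma)=0$ is vacuous, since a simple closed curve has no vertices and hence no proper smoothings. For the inductive step, fix any sequence of $X(\gamma)$ medial electrical moves reducing $\gamma$ to a simple closed curve, and let $\gamma'$ be the result of the first move in this sequence, so that $X(\gamma')=X(\gamma)-1$. Given an arbitrary connected proper smoothing $\overline{\gamma}$ of $\gamma$, I will construct a connected smoothing $\overline{\gamma'}$ of $\gamma'$ reachable from $\overline{\gamma}$ by at most one medial electrical move. Either $\overline{\gamma'}=\gamma'$ or $\overline{\gamma'}$ is a proper smoothing of $\gamma'$; in the latter case the inductive hypothesis gives $X(\overline{\gamma'})<X(\gamma')$. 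Either way,
\[
X(\overline{\gamma}) \le X(\overline{\gamma'}) + 1 \le X(\gamma') = X(\gamma)-1,
\]
as required.

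The construction of $\overline{\gamma'}$ proceeds by case analysis on the type of the first move ($\arc{1}{0}$, $\arc{2}{1}$, or $\arc{3}{3}$) and on which of its vertices lie in the smoothed set of $\overline{\gamma}$. Since the non-move vertices of $\gamma$ are also vertices of $\gamma'$, only the smoothing decisions at the move's vertices need to be transferred. If no vertex of the move is smoothed in $\overline{\gamma}$, then the same move is available there; applying it produces a smoothing $\overline{\gamma'}$ of $\gamma'$ whose smoothed set equals the (nonempty) smoothed set of $\overline{\gamma}$, hence a proper smoothing. If at least one vertex of the move is smoothed in $\overline{\gamma}$, I claim $\overline{\gamma}$ is itself isomorphic to a smoothing of $\gamma'$ (for a suitable smoothed subset of the post-move vertices), so we may take $\overline{\gamma'}=\overline{\gamma}$ with zero additional moves. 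The underlying reason is that smoothing at least one crossing of the pre-move pattern, in a connectivity-preserving way, produces a local configuration that combinatorially coincides with some smoothing of the post-move pattern.

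The main obstacle is verifying this correspondence case by case. For a $\arc{1}{0}$ move at a loop vertex $x$, the only connectivity-preserving smoothing of $x$ absorbs the empty loop into the main strand, yielding a configuration combinatorially identical to the result of the $\arc{1}{0}$ move itself. For a $\arc{2}{1}$ move at a bigon $\{x,y\}$, smoothing exactly one of the two vertices in a connectivity-preserving way corresponds to leaving the resulting vertex $z\in\gamma'$ unsmoothed, while smoothing both corresponds to smoothing $z$. For a $\arc{3}{3}$ move at an empty triangle $\{x,y,z\}$, a similar matching holds between the connectivity-preserving smoothings of nonempty subsets of the pre-move triangle and smoothings of the post-move triangle. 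In each subcase, the enumeration is routine but requires care to exclude those smoothings that would detach a component of the curve (for example, the loop-disconnecting smoothing in the $\arc{1}{0}$ case, or the parallel smoothings that split a bigon off from the rest of $\gamma$). Once the correspondence is verified, the induction closes and the strict inequality follows.
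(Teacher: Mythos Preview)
Your inductive strategy—fix an optimal reduction of $\gamma$, examine its first move, and transfer that move to the smoothing $\overline{\gamma}$—is exactly the paper's approach. But your central claim is false: when a vertex of the first move lies in the smoothed set, $\overline{\gamma}$ is \emph{not} in general isomorphic to a smoothing of $\gamma'$. The cleanest counterexample is the $\arc{2}{1}$ case. Let the bigon have vertices $a,b$ and internal arcs $\alpha,\beta$. One of the two smoothings at $a$ joins $\alpha$ to $\beta$; the bigon arcs then form a loop based at the surviving crossing $b$, while the two external half-edges at $a$ are joined by a direct arc. This $\overline{\gamma}$ is connected (the loop hangs off $b$), so your connectivity filter does not exclude it—yet no smoothing of $\gamma'$ has a loop in this region, since $\gamma'$ carries only the single crossing $c$ there and smoothing $c$ merely deletes it. One of the two smoothings of a triangle vertex in the $\arc{3}{3}$ case fails for the same reason. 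The paper closes this gap by allowing \emph{one} additional medial electrical move to pass from $\overline{\gamma}$ to a curve $\overline{\gamma}'$ that \emph{is} a connected smoothing of $\gamma'$; in the bigon example, a $\arc{1}{0}$ move removes the spurious loop at $b$. That extra move is essential, and your own inequality $X(\overline{\gamma}) \le X(\overline{\gamma'})+1$ already has room for it—so the repair is to weaken ``zero additional moves'' to ``at most one'' and redo the subcase check.

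Two smaller gaps remain. First, your case split lists only the forward moves $\arc{1}{0}$, $\arc{2}{1}$, $\arc{3}{3}$; you cannot yet assume an optimal sequence avoids $\arc{0}{1}$ and $\arc{1}{2}$, because the monotonicity statement that rules them out (Lemma~\ref{L:monotonicity}) is proved \emph{using} the present lemma. Second, treating arbitrary smoothed subsets directly multiplies the subcases once the extra move is restored. The paper sidesteps this by first reducing to a \emph{single} smoothed vertex—choosing an intermediate $\widetilde{\gamma}$ with one vertex smoothed and invoking the inductive hypothesis a second time—so that only eight local pictures need to be checked.
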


\begin{proof}
Let $γ$ be a connected multicurve, and let $\overline{γ}$ be a connected proper smoothing of $γ$.  If $γ$ is already simple, the lemma is vacuously true.  Otherwise, the proof proceeds by induction on $X(γ)$.

We first consider the special case where $\overline{γ}$ is obtained from $γ$ by smoothing a single vertex~$x$.  Let~$γ'$ be the result of the first medial electrical move in the minimum-length sequence that reduces $γ$ \EDIT{to a simple closed curve}.  We immediately have $X(γ) = X(γ')+1$.  There are two nontrivial cases to consider.

First, suppose the move from $γ$ to $γ'$ does not involve the smoothed vertex $x$.  Then we can apply the same move to $\overline{γ}$ to obtain a new graph $\overline{γ}'$; the same graph  can also be obtained from $γ'$ by smoothing~$x$.  We immediately have $X(\overline{γ}) \le X(\overline{γ}') + 1$, and the inductive hypothesis implies $X(\overline{γ}')+1 < X(γ')+1 = X(γ)$.

Now suppose the first move in $Σ$ does involve $x$. In this case, we can apply at most one medial electrical move to $\overline{γ}$ to obtain a (possibly trivial) smoothing $\overline{γ}'$ of $γ'$.  There are eight subcases to consider, shown in Figure \ref{F:smooth-moves}.  One subcase for the $\arc01$ move is impossible, because~$\overline{γ}$ is connected.  In the remaining $\arc01$ subcase and one $\arc21$ subcase, the curves $\overline{γ}$, $\overline{γ}'$ and $γ'$ are all isomorphic, which implies $X(\overline{γ}) = X(\overline{γ}') = X(γ') = X(γ)-1$.  In all remaining subcases, $\overline{γ}'$ is a connected proper smoothing of $γ'$, so the inductive hypothesis implies $X(\overline{γ}) ≤ X(\overline{γ}')+1 < X(γ')+1 = X(γ)$.

\begin{figure}[ht]
\centering
\includegraphics[width=0.8\textwidth]{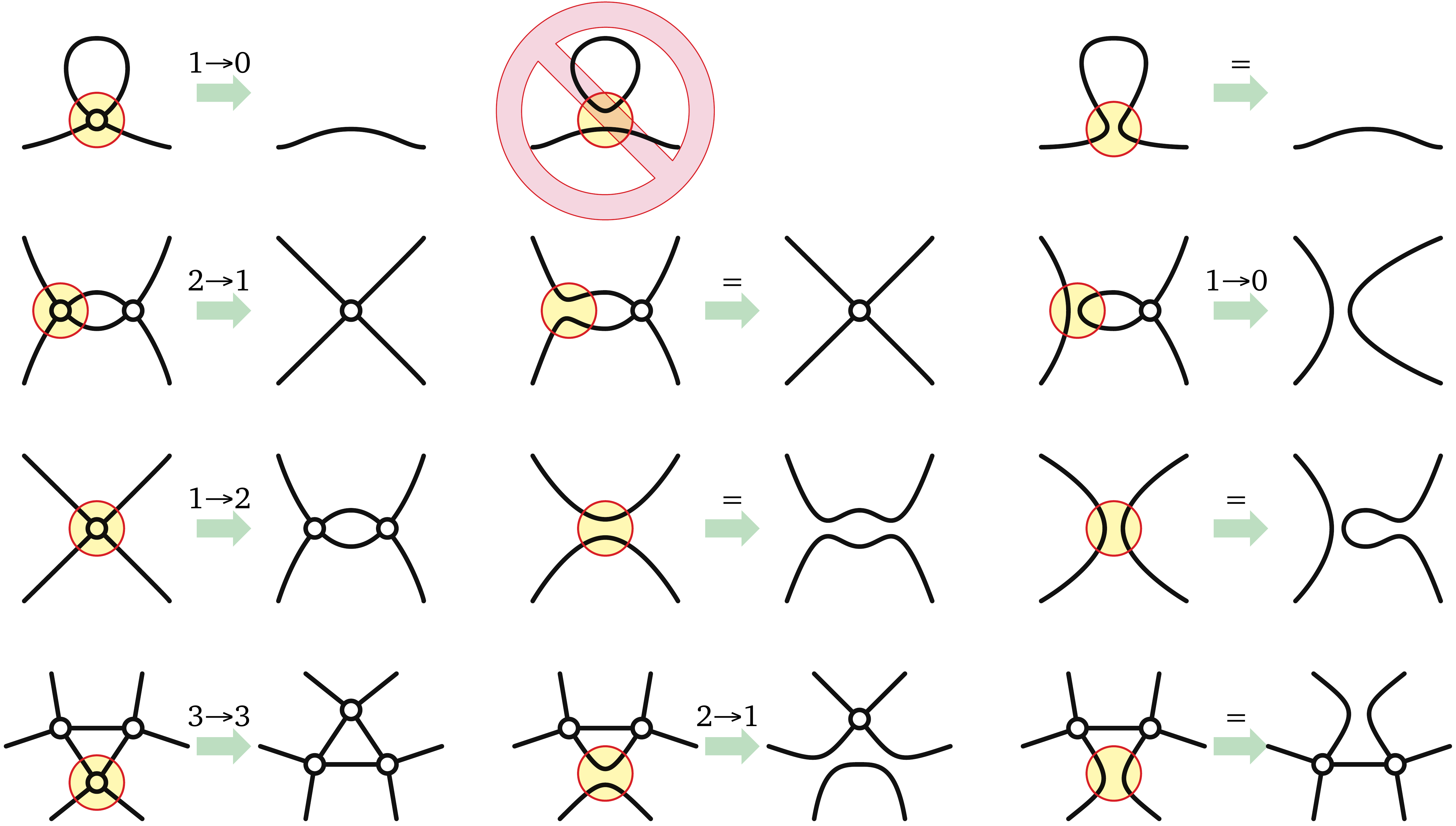}
\caption{Cases for the proof of the Lemma~\ref{L:smoothing}; the circled vertex is $x$.}
\label{F:smooth-moves}
\end{figure}

Finally, we consider the more general case where $\overline{γ}$ is obtained from $γ$ by smoothing more than one vertex.  Let $\widetilde{γ}$ be any intermediate curve, obtained from~$γ$ by smoothing just one of the vertices that were smoothed to obtain $\overline{γ}$.  As $\overline{γ}$ is a connected smoothing of $\widetilde{γ}$, the curve $\widetilde{γ}$ itself must be connected too.
Our earlier argument implies that $X(\widetilde{γ}) < X(γ)$.  Thus, the inductive hypothesis implies $X(\overline{γ}) < X(\widetilde{γ})$, which completes the proof.
\end{proof}

\begin{lemma}
\label{L:monotonicity}
For every connected multicurve $γ$, there is a minimum-length sequence of medial electrical moves that reduces $γ$ \EDIT{to a simple closed curve} and that does not contain $\arc01$ or $\arc12$ moves. 
\end{lemma}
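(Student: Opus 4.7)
The plan is to prove the stronger statement that \emph{no} minimum-length reduction of a connected multicurve $\gamma$ to a simple closed curve can contain a $\arc01$ or $\arc12$ move, by a direct contradiction argument that feeds each such move into Lemma~\ref{L:smoothing}. Before starting, I would record the small but essential observation that every medial electrical move—including the inverses $\arc01$, $\arc12$, and $\arc33$—preserves connectivity of the image. This is immediate by inspection of the six local pictures: adding a loop, adding an empty bigon, flipping an empty triangle, and their inverses all act inside one face and cannot split the image into two components. Consequently, if $\gamma$ is connected, then every intermediate multicurve appearing in any medial electrical sequence starting from $\gamma$ is also connected, so Lemma~\ref{L:smoothing} applies to each of them.

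Now I would take any minimum-length reduction $\gamma = \gamma_0 \to \gamma_1 \to \cdots \to \gamma_N$, with $N = X(\gamma)$ and $\gamma_N$ simple, and suppose for contradiction that the $i$-th move (taking $\gamma_{i-1}$ to $\gamma_i$) is a $\arc01$ or $\arc12$ move. Both of these moves strictly increase the number of vertices, and their inverses are obtained by smoothing one of the newly created vertices; thus $\gamma_{i-1}$ is a proper smoothing of $\gamma_i$, and by the connectivity observation both multicurves are connected. Lemma~\ref{L:smoothing} therefore gives $X(\gamma_{i-1}) < X(\gamma_i)$. The tail $\gamma_i \to \cdots \to \gamma_N$ is a reduction of $\gamma_i$ of length $N - i$, so $X(\gamma_i) \le N - i$, hence $X(\gamma_{i-1}) \le N - i - 1$. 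Concatenating the prefix $\gamma_0 \to \cdots \to \gamma_{i-1}$ of length $i-1$ with any optimal reduction of $\gamma_{i-1}$ yields a reduction of $\gamma$ of length at most $(i-1) + (N - i - 1) = N - 2$, contradicting the minimality of $N$. Hence no minimum-length sequence contains a $\arc01$ or $\arc12$ move, which is strictly more than the lemma requires.

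The main obstacle is not in the counting step—once Lemma~\ref{L:smoothing} is in hand, it saves two moves at a stroke—but in being sure that we may actually invoke that lemma, which demands that \emph{both} $\gamma_{i-1}$ and $\gamma_i$ be connected. The connectivity bookkeeping at the start of the proof is what licenses this, and it is the only nontrivial ingredient; everything else is a one-line exchange argument.
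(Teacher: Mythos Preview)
Your proof is correct and follows essentially the same approach as the paper: both argue by contradiction, observing that a $\arc01$ or $\arc12$ move makes $\gamma_{i-1}$ a connected proper smoothing of $\gamma_i$, and then invoking Lemma~\ref{L:smoothing} to contradict minimality.  Your version is a bit more explicit—you spell out the connectivity-preservation of medial electrical moves and unwind the exchange argument rather than citing $X(\gamma_i)=X(\gamma)-i$ directly—and you note that the argument actually yields the stronger conclusion that \emph{every} minimum-length sequence avoids these moves, but the substance is the same.
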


\begin{proof}
Our proof follows an argument of Noble and Welsh~\cite[Lemma~3.2]{nw-kg-00}.

Consider a minimum-length sequence of medial electrical moves that reduces an arbitrary connected multicurve 	$γ$ \EDIT{to a simple closed curve}. For any integer $i ≥ 0$, let $γ_i$ denote the result of the first $i$ moves in this sequence; in particular, $γ_0 = γ$ and $γ_{X(γ)}$ is a \EDIT{simple closed curve}. Minimality of the reduction sequence implies that $X(γ_i) = X(γ)-i$ for all $i$. Now let $i$ be an arbitrary index such that $γ_i$ has one more vertex than $γ_{i-1}$. Then $γ_{i-1}$ is a connected proper smoothing of $γ_i$, so Lemma~\ref{L:smoothing} implies that $X(γ_{i-1}) < X(γ_i)$, giving us a contradiction. 
\end{proof}

\begin{lemma}
\label{L:homotopy}
$X(γ) ≥ H(γ)$ for every closed curve $γ$. 
\end{lemma}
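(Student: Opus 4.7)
The plan is to prove the stronger statement $H(\bar\mu) \le X(\mu)$ for every connected multicurve $\mu$ and every smoothing $\bar\mu$ of $\mu$, by induction on $X(\mu)$; the lemma follows on taking $\mu = \bar\mu = \gamma$. The base case $X(\mu) = 0$ is immediate, since then $\mu$ is a simple closed curve with no vertices to smooth, so $\bar\mu = \mu$ and $H(\bar\mu) = 0$.

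For the inductive step, I would invoke Lemma~\ref{L:monotonicity} to fix a minimum-length electrical reduction of $\mu$ that uses only $\arc10$, $\arc21$, and $\arc33$ moves, and let $\mu_1$ denote the multicurve obtained from $\mu$ by its first move, so $X(\mu_1) = X(\mu) - 1$. The goal is to produce $\bar\mu_1$ from $\bar\mu$ using at most one homotopy move in such a way that $\bar\mu_1$ is itself a smoothing of $\mu_1$; the inductive hypothesis would then give $H(\bar\mu_1) \le X(\mu_1)$ and hence $H(\bar\mu) \le 1 + H(\bar\mu_1) \le X(\mu)$. The natural conversion rule is to simulate an electrical $\arc10$ or $\arc33$ whose involved vertices all survive in $\bar\mu$ by the homotopy move of the same name; to simulate an electrical $\arc21$ at a bigon with both vertices surviving by a $\arc20$ homotopy move at the same bigon; and to either suppress the move entirely or replace it by an empty-loop removal whenever an involved vertex has already been smoothed in $\bar\mu$. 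The crucial observation that makes the $\arc21$-becomes-$\arc20$ substitution consistent is that smoothing the vertex newly created by an electrical $\arc21$ yields exactly the multicurve that would arise from a $\arc20$ on the original bigon, so adding this new vertex to the smoothed set preserves the smoothing invariant.

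The main obstacle will be the case-by-case verification that $\bar\mu_1$ is indeed a smoothing of $\mu_1$ in every subcase, most delicately for the $\arc33$ and $\arc21$ moves when one or more of the involved vertices already lie in the smoothed set of $\bar\mu$. These configurations closely parallel the ones appearing in the proof of Lemma~\ref{L:smoothing}, with each electrical $\arc21$ replaced throughout by a homotopy $\arc20$; each subcase determines both the single homotopy move (or no-op) to apply to $\bar\mu$ and the updated smoothed set for $\bar\mu_1$. Once all subcases are checked, the induction closes and gives $H(\gamma) \le X(\gamma)$.
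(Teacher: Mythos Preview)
Your strengthening is correct and the induction closes, but the route is heavier than the paper's.  The paper inducts directly on the single closed curve $\gamma$: when the first electrical move is $\arc21$, it performs the corresponding $\arc20$ homotopy move to obtain $\bar\gamma$, observes that $\bar\gamma$ is still a single closed curve (homotopy moves never change the number of constituent curves) and is a connected proper smoothing of $\gamma'$, and then invokes Lemma~\ref{L:smoothing} \emph{as a black box} to get $X(\bar\gamma) < X(\gamma') < X(\gamma)$, so the induction can recurse on $\bar\gamma$ itself.  No smoothed set is carried through the argument, and no subcase analysis is repeated.  Your version instead bakes the smoothing into the inductive hypothesis and, in effect, re-derives the case analysis of Lemma~\ref{L:smoothing} (with every electrical $\arc21$ replaced by a homotopy $\arc20$ and the new vertex added to the smoothed set) inside the induction for Lemma~\ref{L:homotopy}.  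That buys a self-contained proof that does not cite Lemma~\ref{L:smoothing}, at the cost of redoing its work.

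One small wrinkle in your sketch: in the $\arc10$ case when the loop vertex already lies in the smoothed set and the chosen smoothing splits the empty loop off as a disjoint simple circle, $\bar\mu$ is \emph{not} a smoothing of $\mu_1$, and neither of your stated remedies (``suppress the move'' or ``remove an empty loop'') applies.  The fix is immediate---discarding a disjoint simple constituent circle leaves $H$ unchanged, and what remains \emph{is} the desired smoothing of $\mu_1$---but your framing ``produce $\bar\mu_1$ from $\bar\mu$ using at most one homotopy move'' should be relaxed accordingly.
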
 

\begin{proof}
The proof proceeds by induction on $X(γ)$, following an argument of Noble and Welsh~\cite[Proposition 3.3]{nw-kg-00}.

Let $γ$ be a closed curve.  If $X(γ) = 0$, then $γ$ is already simple, so $H(γ) = 0$.  Otherwise, let $Σ$ be a minimum-length sequence of medial electrical moves that reduces $γ$ to a \EDIT{simple closed curve}.  Lemma~\ref{L:monotonicity} implies that we can assume that the first move in $Σ$ is neither $\arc01$ nor $\arc12$. If the first move is $1\arcto 0$ or $3\arcto 3$, the theorem immediately follows by induction. 

The only interesting first move is $2\arcto 1$. Let $γ'$ be the result of this $\arc21$ move, and let $\overline{γ}$ be the result of the corresponding $\arc20$ homotopy move.  The minimality of $Σ$ implies that $X(γ) = X(γ')+1$, and we trivially have $H(γ) \le H(\overline{γ}) + 1$.  Because $γ$ consists of \emph{one} single curve, $\overline{γ}$ \EDIT{is also a single curve and is therefore} connected.  The curve $\overline{γ}$ is also a proper smoothing of $γ'$, so the Lemma~\ref{L:smoothing} implies $X(\overline{γ}) < X(γ') < X(γ)$.  Finally, the inductive hypothesis implies that $X(\overline{γ}) \ge H(\overline{γ})$, which completes the proof.
\end{proof}

We call a plane graph $G$ \EMPH{unicursal} if its medial graph $G^\times$ is the image of a single closed curve.

\begin{theorem}
\label{Th:lowerbound}
For every connected \EDIT{plane} graph $G$ and every unicursal minor~$H$ of $G$, reducing $G$ to a single vertex requires at least $\abs{\Defect(H^\times)}/2$ \EDIT{facial} electrical transformations. 
\end{theorem}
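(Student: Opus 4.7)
The plan is to chain the three main lemmas already proved in Sections \ref{S:lower-bound} and \ref{S:electric} through the medial-graph correspondence spelled out at the start of Section \ref{S:electric}. First I would translate the statement from plane graphs to curves: by the definitions of the medial graph and of medial electrical moves, each facial electrical transformation on a plane graph corresponds bijectively to one medial electrical move on its medial graph, and reducing $G$ to a single (isolated) vertex is the same as reducing $G^\times$ to a simple closed curve. So the number of facial electrical transformations required to reduce $G$ is exactly $X(G^\times)$.

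Next I would use the correspondence ``the (proper) smoothings of $G^\times$ are precisely the medial graphs of (proper) minors of $G$.'' Since $H$ is a minor of $G$, $H^\times$ is a smoothing of $G^\times$; since $H$ is unicursal, $H^\times$ is a single closed curve and therefore connected. Now I apply Lemma \ref{L:smoothing}: if $H = G$ we trivially have $X(H^\times) = X(G^\times)$, and if $H$ is a proper minor of $G$ then $H^\times$ is a connected proper smoothing of $G^\times$, so the lemma gives $X(H^\times) < X(G^\times)$. Either way, $X(H^\times) \le X(G^\times)$.

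Finally, because $H^\times$ is a single closed curve, Lemma \ref{L:homotopy} applies and yields $X(H^\times) \ge H(H^\times)$, and Lemma \ref{L:defect} gives $H(H^\times) \ge |\Defect(H^\times)|/2$. Concatenating,
\[
\#\{\text{facial electrical transformations to reduce } G\} = X(G^\times) \ge X(H^\times) \ge H(H^\times) \ge \tfrac{1}{2}|\Defect(H^\times)|,
\]
which is the claimed bound.

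There is no real obstacle here; the whole argument is a three-line chain of inequalities whose ingredients have all been set up earlier. The only point that deserves a short sentence of justification is the implicit use of the minor/smoothing dictionary together with the fact that ``unicursal'' forces $H^\times$ to be a single connected closed curve, so that both Lemma \ref{L:smoothing} (which needs a connected smoothing) and Lemma \ref{L:homotopy} (which is stated for a single closed curve) can be applied to $H^\times$.
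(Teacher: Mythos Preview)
Your proposal is correct and follows essentially the same approach as the paper's proof: both chain Lemma~\ref{L:smoothing} (to pass from $G$ to the minor $H$, or equivalently from $G^\times$ to its smoothing $H^\times$), then Lemma~\ref{L:homotopy}, then Lemma~\ref{L:defect}. The only cosmetic difference is that the paper invokes Lemma~\ref{L:smoothing} in its graph formulation (facial electrical transformations on $G$ versus on the minor $H$) before translating to medial curves, whereas you translate to $X(G^\times)$ first and then use the curve formulation; since Lemma~\ref{L:smoothing} is stated both ways, this is immaterial.
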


\begin{proof}
Either $H$ equals $G$, or Lemma~\ref{L:smoothing} implies that reducing a proper minor $H$ of $G$ to a single vertex requires strictly fewer \EDIT{facial} electrical transformations than reducing $G$ to a single vertex.  Note that \EDIT{facial} electrical transformations performed on $H$ corresponds precisely to medial electrical moves performed on $H^\times$.  Now because $γ \coloneqq H^\times$ is unicursal, Lemma~\ref{L:defect} and Lemma~\ref{L:homotopy} implies that $X(γ) ≥ H(γ) ≥ \abs{\Defect(\gamma)}/2$.
\end{proof}

\subsection{Cylindrical Grids}

Finally, we derive explicit lower bounds for the number of \EDIT{facial} electrical transformations required to reduce any cylindrical grid to a single vertex.  For any positive integers $p$ and $q$, we define two cylindrical grid graphs; see Figure \ref{F:cylinders}.
\begin{itemize}
\item
\EMPH{$C(p,q)$} is the Cartesian product of a cycle of length $q$ and a path of length $p-1$.  If $q$ is odd, then the medial graph of $C(p,q)$ is the flat torus knot $T(2p, q)$.

\item
\EMPH{$C'(p,q)$} is obtained by connecting a new vertex to the vertices of one of the $q$-gonal faces of $C(p,q)$, or equivalently, by contracting one of the $q$-gonal faces of $C(p+1,q)$ to a single vertex.  If~$q$ is even, then the medial graph of $C'(p,q)$ is the flat torus knot $T(2p+1, q)$.
\end{itemize}
\unskip

\begin{figure}[ht]
\centering
\hfil
\includegraphics[width=1.75in]{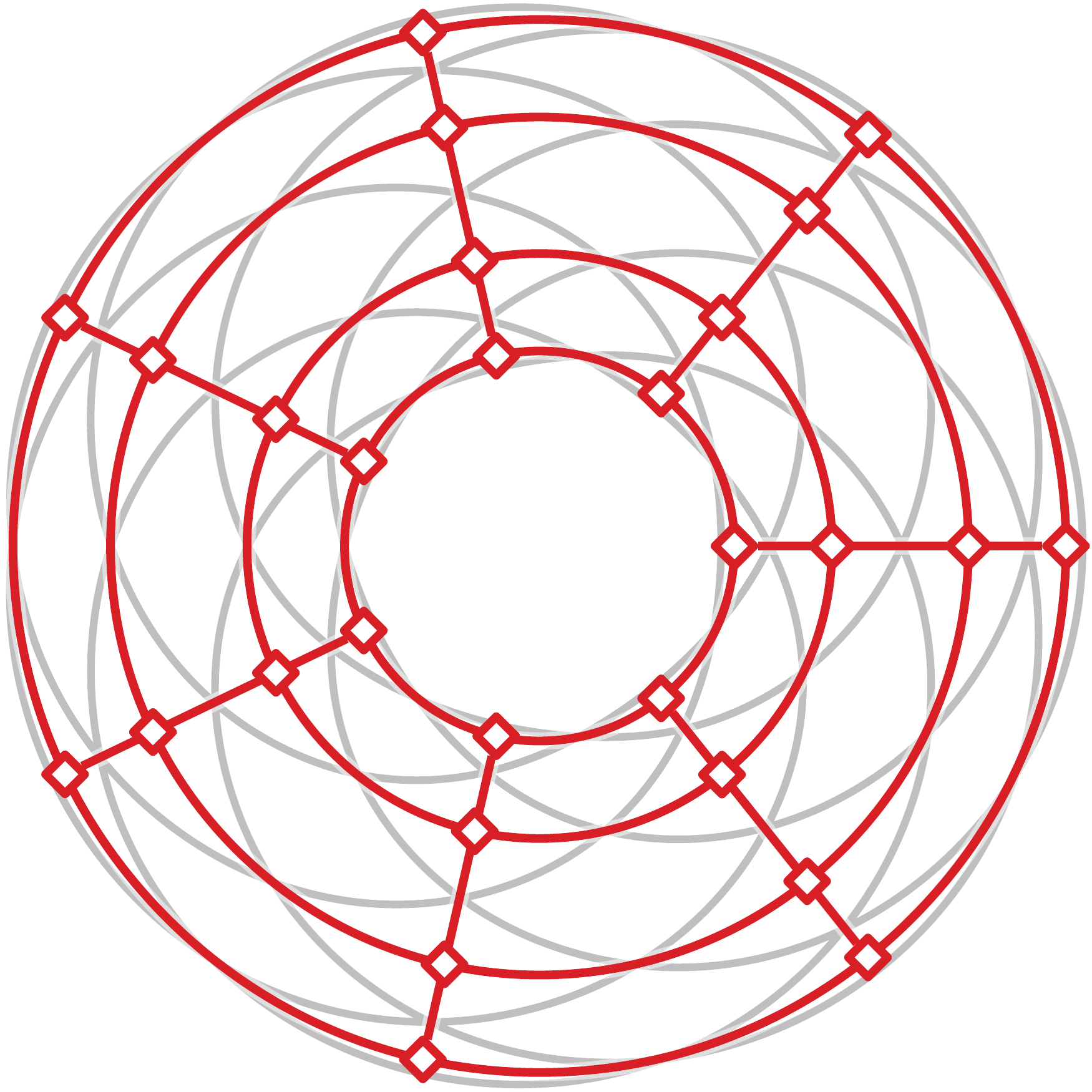}\hfil
\includegraphics[width=1.75in]{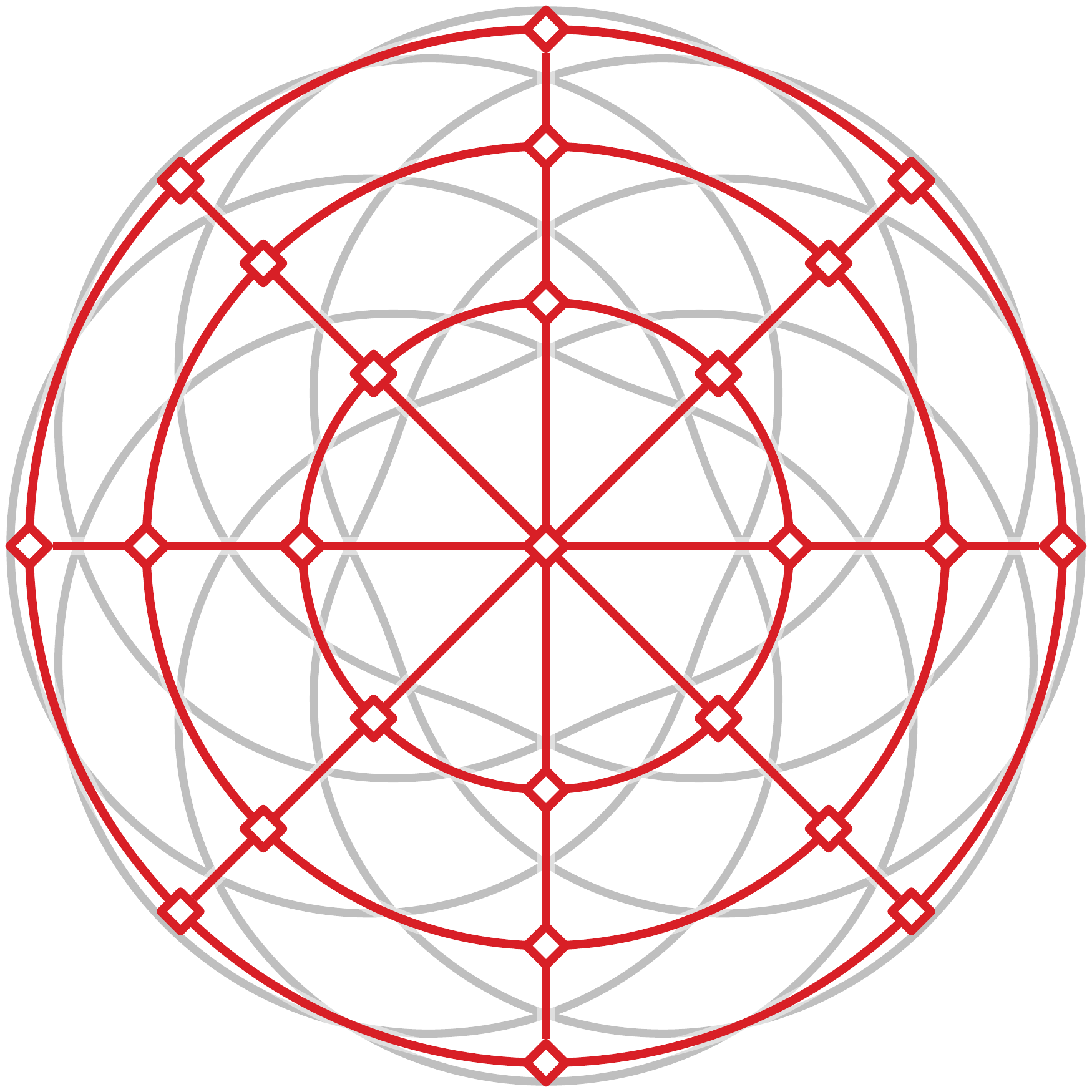}\hfil{}
\caption{The cylindrical grid graphs $C(4,7)$ and $C'(3,8)$ and (in light gray) their medial graphs $T(8,7)$ and $T(7,8)$.}
\label{F:cylinders}
\end{figure}

\begin{corollary}
\label{C:cylindrical-grid}
For all positive integers $p$ and $q$, the cylindrical grid $C(p,q)$ requires $Ω(\min\set{p^2 q, p q^2})$ \EDIT{facial} electrical transformations to reduce to a single vertex.
\end{corollary}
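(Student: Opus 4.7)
The plan is to invoke Theorem~\ref{Th:lowerbound}: for each pair $(p,q)$ it suffices to exhibit a connected unicursal minor of $C(p,q)$ whose medial graph is a flat torus knot of the form covered by Lemma~\ref{L:braid-wide} or Lemma~\ref{L:braid-deep}, with defect of absolute value $\Omega(\min\{p^2q,\, pq^2\})$.

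First I would record two elementary minor relations. For any $p' \leq p$ and $q' \leq q$ with $q' \geq 3$, the cylindrical grid $C(p', q')$ is a minor of $C(p, q)$, obtained by contracting $p-p'$ path edges and $q-q'$ cycle edges. Similarly, since $C'(p', q')$ is obtained by contracting a $q'$-face of $C(p'+1, q')$, it is a minor of $C(p, q)$ whenever $p'+1 \leq p$ and $q' \leq q$.

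In the regime $q \leq p$, the plan is to apply Lemma~\ref{L:braid-deep}, which evaluates the defect of $T(aq+1,\, q)$. If $q$ is odd, I would take the largest odd $a$ with $aq+1 \leq 2p$ and use the minor $C((aq+1)/2,\, q)$; if $q$ is even, the largest integer $a$ with $aq/2 + 1 \leq p$ together with the minor $C'(aq/2,\, q)$. In either subcase the medial graph is exactly $T(aq+1,\, q)$ and $a = \Omega(p/q)$, so Lemma~\ref{L:braid-deep} yields $|\Defect|/2 = a\binom{q}{3} = \Omega(pq^2)$.

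In the symmetric regime $p \leq q$, I would apply Lemma~\ref{L:braid-wide}, which evaluates the defect of $T(2p,\, 2ap+1)$: set $a = \lfloor(q-1)/(2p)\rfloor$ and use the minor $C(p,\, 2ap+1)$, whose medial graph is $T(2p,\, 2ap+1)$ since $2ap+1$ is always odd. For $q \geq 2p+1$ this gives $a = \Omega(q/p)$ and $|\Defect|/2 = a\binom{2p+1}{3} = \Omega(p^2q)$; the residual window $p \leq q \leq 2p$ has $p = \Theta(q)$, where a single application of Lemma~\ref{L:braid-deep} with $a=1$ to the minor $C((q'+1)/2,\, q')$ or $C'(q'/2,\, q')$ for an appropriate $q' \in \{q-1,\, q\}$ already delivers an $\Omega(q^3) = \Omega(\min\{p^2q,\, pq^2\})$ bound. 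I expect the only mildly delicate step to be the parity juggling needed to force the medial graphs into the exact forms covered by the two defect lemmas; once the correct $C$ or $C'$ minor is chosen, the defect computation and its comparison with $\min\{p^2q,\, pq^2\}$ are routine.
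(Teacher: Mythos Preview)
Your proposal is correct and follows essentially the same approach as the paper: invoke Theorem~\ref{Th:lowerbound} on a suitable $C$ or $C'$ minor whose medial graph is a flat torus knot covered by Lemma~\ref{L:braid-wide} or Lemma~\ref{L:braid-deep}, splitting on whether $p\le q$ or $q\le p$.  The paper's parameter choices are a bit cleaner: in the $p\le q$ case it first passes to $p$ even and then uses the minor $C(p/2,\,ap+1)$ with medial graph $T(p,\,ap+1)$, so $a=\lfloor(q-1)/p\rfloor\ge 1$ already whenever $p<q$ and no residual window $p\le q\le 2p$ arises; similarly, in the $p>q$ case it first passes to $q$ odd and uses $C'(aq,\,q)$, avoiding your parity subcases.
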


\begin{proof}
First suppose $p \le q$.  Because $C(p-1,q)$ is a minor of $C(p,q)$, we can assume without loss of generality that $p$ is even and $p<q$.  Let $H$ denote the cylindrical grid $C(p/2, ap+1)$, where $a \coloneqq \floor{(q-1)/p} \ge 1$.  $H$~is a minor of $C(p, q)$ (because $ap+1 \le q$), and the medial graph of $H$ is the flat torus knot $T(p, ap+1)$.  Lemma~\ref{L:braid-wide} implies
\[
	\Defect\!\left(T(p,\strut ap+1)\right) = 2a\binom{p+1}{3} = Ω(ap^3) = Ω(p^2q).
\]
Theorem~\ref{Th:lowerbound} now implies that reducing $C(p,q)$ requires at least $\Omega(p^2 q)$ \EDIT{facial} electrical transformations.

The symmetric case $p > q$ is similar.  We can assume without loss of generality that $q$ is odd.  Let $H$ denote the cylindrical grid $C'(aq, q)$, where $a \coloneqq \floor{(p-1)/q} \ge 1$.  $H$ is a proper minor of $C(p, q)$ (because $aq<p$), and the medial graph of $H$ is the flat torus knot $T(2aq+1, q)$.  Lemma~\ref{L:braid-deep} implies
\[
	\Abs{ \Defect\!\left(T(2aq+1,\strut q)\right)}
		= 4a\binom{q}{3} = Ω(aq^3) = Ω(pq^2).
\]
Theorem~\ref{Th:lowerbound} now implies that reducing $C(p,q)$ requires at least $\Omega(p q^2)$ \EDIT{facial} electrical transformations.
\end{proof}

In particular, reducing any $\Theta(\sqrt{n})\times\Theta(\sqrt{n})$ cylindrical grid requires at least $\Omega(n^{3/2})$ \EDIT{facial} electrical transformations.  Our lower bound matches an $O(\min\set{pq^2, p^2q})$ upper bound by Nakahara and Takahashi~\cite{nt-aafts-96}.  Because every $p\times q$ rectangular grid contains $C(\floor{p/3}, \floor{q/3})$ as a minor, the same $Ω(\min\set{p^2 q, p q^2})$ lower bound applies to rectangular grids.  In particular, Truemper's $O(p^3) = O(n^{3/2})$ upper bound for the $p\times p$ square grid~\cite{t-drpg-89} is tight.  Finally, because every \EDIT{plane} graph with treewidth~$t$ contains an $Ω(t)\times Ω(t)$ grid minor~\cite{rst-qepg-94}, reducing \emph{any} $n$-vertex \EDIT{plane} graph with treewidth~$t$ requires at least $Ω(t^3 + n)$ \EDIT{facial} electrical transformations. 

Like Gitler~\cite{g-dtaa-91}, Feo and Provan~\cite{fp-dtert-93}, and Archdeacon \etal~\cite{acgp-frpwg-00}, we conjecture that any $n$-vertex \EDIT{plane} graph can be reduced \EDIT{to a vertex} using $O(n^{3/2})$ \EDIT{facial} electrical transformations.  More ambitiously, we conjecture an upper bound of $O(nt)$ for any $n$-vertex \EDIT{plane} graph with treewidth $t$.

\section{Upper Bound}
\label{S:upper}

For any point $p$, let \EMPH{$\Depth(p, \gamma)$} denote the minimum number of times a path from~$p$ to infinity crosses~$\gamma$. Any two points in the same face of $\gamma$ have the same depth, so each face~$f$ has a well-defined depth, which is its distance to the outer face in the dual graph of~$\gamma$; see Figure~\ref{F:curve-depth-contract}. The depth of the curve, denoted \EMPH{$\Depth(\gamma)$}, is the maximum depth of the faces of~$\gamma$; and the \EMPH{potential $D(\gamma)$} is the sum of the depths of the faces.  Euler's formula implies that any 4-regular \EDIT{plane} graph with $n$ vertices has exactly $n+2$ faces; thus, for any curve $\gamma$ with $n$ vertices, we have $n+1 \le D(\gamma) \le (n+1)\cdot \Depth(\gamma)$.

\subsection{Contracting Simple Loops}

\begin{lemma}
\label{L:contract}
Every closed curve $\gamma$ in the plane can be simplified using at most $3D(\gamma) - 3$ homotopy moves.
\end{lemma}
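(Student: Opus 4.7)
The plan is to prove Lemma~\ref{L:contract} by induction on $D(\gamma)$, realising the ``variant of Steinitz's algorithm that empties and removes loops instead of bigons'' mentioned in the introduction. The base case is when $\gamma$ is simple, in which case the bound $0 \le 3D(\gamma) - 3$ holds trivially since $D(\gamma) \ge 1$ for any nonempty curve. In the inductive step I~would identify a small substructure of $\gamma$, contract it using a bounded sequence of moves, and charge those moves against the induced decrease in $D$.

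First I~would locate a \emph{minimal simple loop} $L$ of $\gamma$. Any vertex $x$ visited twice by $\gamma$ yields a subpath from $x$ back to $x$; iteratively shortening this subpath across any repeated intermediate vertex produces a loop in the sense of the Definitions subsection. Among all loops of $\gamma$, I~would select one whose bounded disk $\Delta$ contains no strictly smaller loop---for instance, one minimizing the number of vertices of $\gamma$ lying inside $\Delta$.

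Second, by a direct combinatorial argument analogous to Steinitz's analysis of minimal bigons~\cite{s-pr-1916,sr-vtp-34}, I~would show that such a minimal loop $L$ admits a triangular face $T \subset \Delta$ sharing an edge with $L$.  A single $\arc33$ move then flips $T$ to the exterior of $L$ and strictly decreases the number of vertices inside $L$; iterating produces an empty loop using $k$ such moves, where $k$ is the initial interior vertex count, after which a single $\arc10$ move removes $L$ altogether.

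Finally, I~would analyze $D(\gamma)$ throughout the contraction. Each $\arc33$ flip transfers a triangle from inside $L$ (of depth at least that of $\Delta$) to outside (of strictly smaller depth), and the terminal $\arc10$ destroys the interior face of $L$ (of depth at least $1$), so each complete removal of a minimal loop decreases $D$ by at least a constant fraction of the number of moves it uses.  Combined inductively with the reduction on the simplified curve, this yields the $3D(\gamma) - 3$ bound.  The main obstacle is the amortized accounting: individual $\arc33$ flips may leave $D$ momentarily unchanged when the ear triangle shares depth with adjacent faces, so the charging argument must spread the $D$-decrease carefully across consecutive moves, with the factor $3$ emerging as the worst case across the range of loop-contraction configurations.
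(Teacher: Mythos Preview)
Your proposal has a genuine gap in the second step.  The claim that a minimal simple loop $L$ always admits a triangular face $T\subset\Delta$ sharing an edge with $L$ is false, so the loop cannot in general be emptied using $\arc33$ moves alone.  The simplest counterexample is a loop traversed by a single simple chord of $\gamma$: the interior of $L$ then consists of exactly two bigon faces and no triangles, there are \emph{zero} interior vertices, and yet the loop is not empty.  Your procedure would perform $k=0$ moves and declare the loop empty, which it is not.  More generally, once all interior vertices have been pushed out, several parallel chords may remain; these require $\arc20$ moves, which your scheme never invokes.  Steinitz's analysis for minimal bigons succeeds because irreducibility of a bigon forces the internal chords into a braid pattern that always exposes a triangle; minimality of a loop in your sense (no smaller loop inside) does not rule out bigon faces.

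The paper's proof differs from your plan in two further ways that are worth noting.  First, it does not seek a minimal loop at all: it simply takes the first vertex repeated after the basepoint, whose loop $\alpha$ is automatically simple.  Second, it empties $\alpha$ using a mix of $\arc20$, $\arc02$, and $\arc33$ moves---whenever an empty bigon lies against $\alpha$, a $\arc20$ removes a strand; otherwise some interior vertex adjacent to $\alpha$ is pushed out by an $\arc02$ followed by a $\arc33$.  The accounting then goes through Euler's formula rather than a per-move depth analysis: if $\alpha$ has $m$ interior vertices and $s$ interior strands, it encloses $m+s+1$ faces, each of which loses at least one unit of depth when $\alpha$ is removed, while the number of moves used is at most $3m+s+1\le 3(m+s+1)$.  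This is where the factor~$3$ actually comes from, not from a worst-case per-move charge as you suggest.
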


\begin{proof}
\EDIT{We prove the statement by induction on the number of vertices in $\gamma$.}
The lemma is trivial if $\gamma$ is already simple, so assume otherwise.
Let $x \coloneqq \gamma(\theta) = \gamma(\theta')$ be the first vertex to be visited twice by $\gamma$ after the (arbitrarily chosen) basepoint $\gamma(0)$.
%
%
Let~$\alpha$ denote the subcurve of~$\gamma$ from $\gamma(\theta)$ to $\gamma(\theta')$; our choice of $x$ implies that $\alpha$ is a simple loop. Let $m$ and $s$ denote the number of vertices and maximal subpaths of $\gamma$ in the interior of $\alpha$ respectively.

Finally, let $\gamma'$ denote the closed curve obtained from $\gamma$ by removing $\alpha$. The first stage of our algorithm transforms~$\gamma$ into $\gamma'$ by contracting the loop $\alpha$ via homotopy moves.  

\begin{figure}[ht]
\centering\includegraphics[scale=0.5]{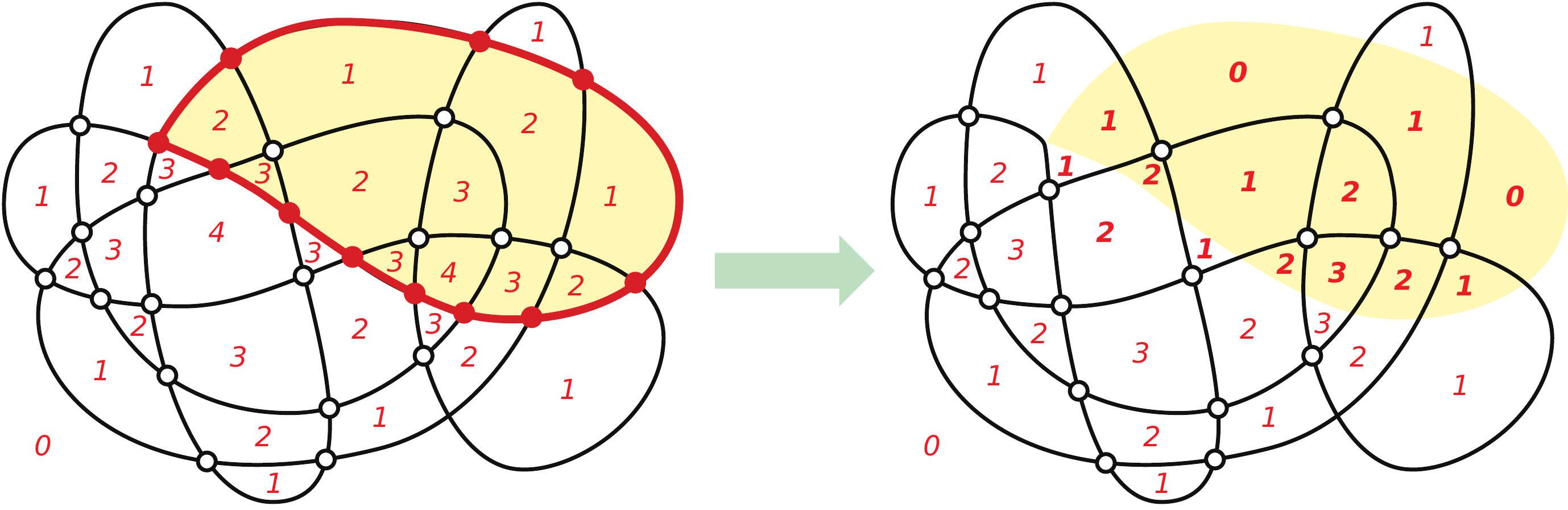}
\caption{Transforming $\gamma$ into $\gamma'$ by contracting a simple loop. Numbers are face depths.}
\label{F:curve-depth-contract}
\end{figure}

We remove the vertices and edges from the interior of $\alpha$ one at a time as follows \EDIT{(see Figure \ref{F:move-strand})}.
If we can perform a $\arc20$ move to remove one edge of $\gamma$ from the interior of $\alpha$ and decrease $s$, we do so.  Otherwise, either $\alpha$ is empty, or some vertex of $\gamma$ lies inside $\alpha$.  In the latter case, at least one vertex $x$  inside~$\alpha$ has a neighbor that lies on $\alpha$.  We move $x$ outside $\alpha$ with a $\arc02$ move (which increases $s$ \EDIT{by $1$}) followed by a $\arc33$ move \EDIT{(which decreases $m$ by $1$)}. 
Once $\alpha$ is an empty loop, we remove it with a single $\arc10$ move. Altogether, our algorithm transforms~$\gamma$ into~$\gamma'$ using at most $3m + s + 1$ homotopy moves.  Let $M$ denote the actual number of homotopy moves used.

\begin{figure}[ht]
\centering
\includegraphics[scale=0.5]{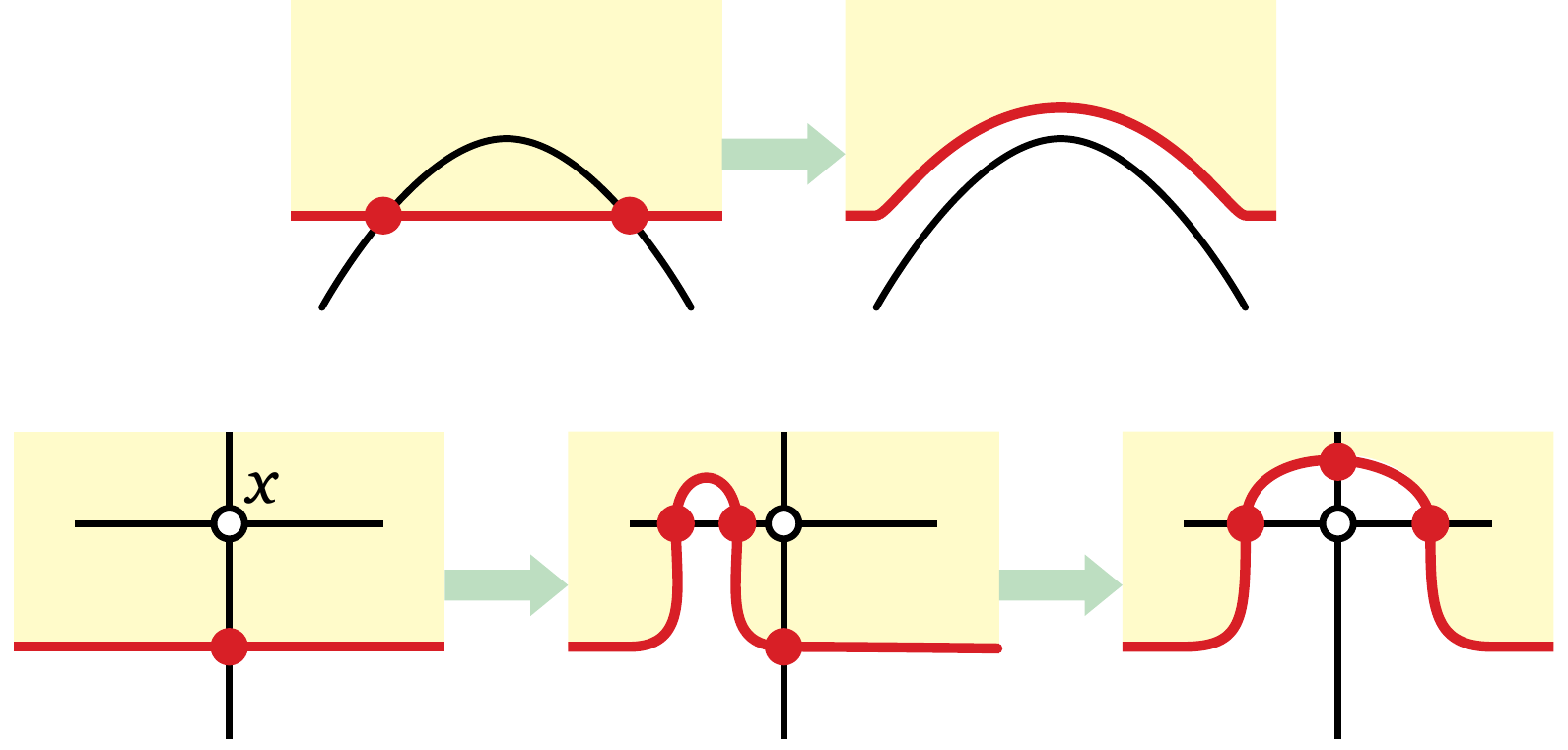}
\caption{Moving a loop over an interior empty bigon or an interior vertex. }
\label{F:move-strand}
\end{figure}

Euler's formula implies that $\alpha$ contains exactly \EDIT{$m+s+1$} faces of $\gamma$.  The Jordan curve theorem implies that $\Depth(p, \gamma') \le \Depth(p, \gamma)-1$ for any point $p$ inside $\alpha$, and trivially $\Depth(p, \gamma') \le \Depth(p, \gamma)$ for any point $p$ outside $\alpha$. It follows that $D(\gamma') \le D(\gamma) - (\EDIT{m+s+1})  \le D(\gamma) - M/3$, and therefore $M \le 3D(\gamma) - 3 D(\gamma')$. The induction hypothesis implies that we can recursively simplify $\gamma'$ using at most $3D(\gamma') - 3$ moves. The lemma now follows immediately.
\end{proof}

Our upper bound is a factor of 3 larger than Feo and Provan's \cite{fp-dtert-93}; however our algorithm has the advantage that it extends to \emph{tangles}, as described in the next subsection.

\subsection{Tangles}
\label{SS:tangles}

A \EMPH{tangle} is a collection of boundary-to-boundary paths $\gamma_1, \gamma_2, \dots,\gamma_s$ in a closed topological disk~$\Sigma$, which (self-)intersect only pairwise, transversely, and away from the boundary of~$\Sigma$.  This terminology is borrowed from knot theory, where a tangle usually refers to the intersection of a knot or link with a closed 3-dimensional ball \cite{c-eklst-70,cdm-ivki-12}; our tangles are perhaps more properly called \emph{flat tangles}, as they are  images of tangles under  appropriate projection.  (Our tangles are unrelated to the obstructions to small branchwidth introduced by Robertson and Seymour \cite{rs-gm10-91}.)  Transforming a curve into a tangle is identical to (an inversion of) the \emph{flarb} operation defined by Allen \etal~\cite{abil-ivd-16}.

We call each individual path $\gamma_i$ a \EMPH{strand} of the tangle. The \EMPH{boundary} of a tangle is the boundary of the disk $\Sigma$ that contains it; we usually denote the boundary by $\sigma$.  By the Jordan-Schönflies theorem, we can assume without loss of generality that $\sigma$ is actually a circle.  We can obtain a tangle from any closed curve $\gamma$ by considering its restriction to any closed disk whose boundary~$\sigma$ intersects $\gamma$ transversely away from its vertices; we call this restriction the \EMPH{interior tangle} of $\sigma$. 

The strands and boundary of any tangle define a plane graph $T$ whose boundary vertices each have degree $3$ and whose interior vertices each have degree $4$.  Depths and potential \EDIT{of a tangle} are defined exactly as for closed curves: The depth of any face $f$ of $T$ is its distance to the outer face in the dual graph $T^*$; the depth of the tangle is its maximum face depth; and the potential $D(T)$ of the tangle is the sum of its face depths.

A tangle is \EMPH{tight} if every pair of strands intersects at most once and \EMPH{loose} otherwise. Every loose tangle contains either an empty loop or a (not necessarily empty) bigon. Thus, any tangle with $n$ vertices can be transformed into a tight tangle---or less formally, \emph{tightened}---in $O(n^2)$ homotopy moves using Steinitz's algorithm.  On the other hand, there are infinite classes of loose tangles for which no homotopy move \EDIT{that} decreases the potential, so we cannot directly apply Feo and Provan's algorithm to this setting.

\EDIT{We describe a two-phase algorithm to tighten any tangle.  First, we remove any self-intersections in the individual strands, by contracting loops as in the proof of Lemma \ref{L:contract}.  Once each strand is simple, we move the strands so that each pair intersects at most once.  See Figure \ref{F:tighten-tangle}.}

\begin{figure}[ht]
\centering\includegraphics[width=0.9\linewidth]{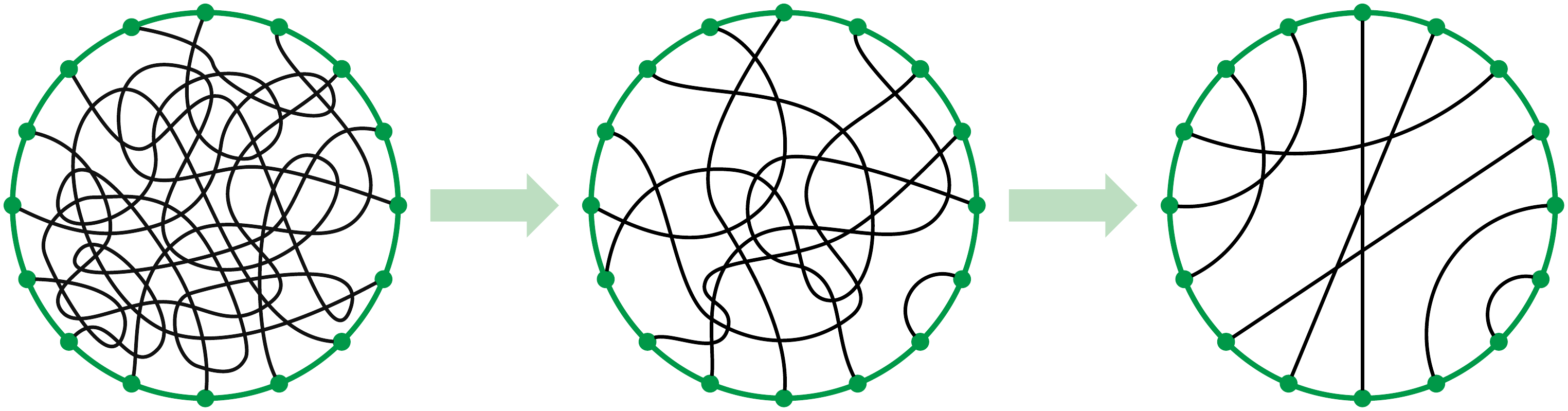}
\caption{Tightening a tangle in two phases: First simplifying the individual strands, then removing excess crossings between pairs of strands.}
\label{F:tighten-tangle}
\end{figure}

\begin{lemma}
\label{L:pretangle}
\EDIT{Every tangle $T$ with $n$ vertices and $s$ strands, such that every strand is simple, can be tightened using at most $3ns$ homotopy moves.}
\end{lemma}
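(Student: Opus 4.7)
The plan is to mimic the loop-contraction strategy of Lemma~\ref{L:contract}, replacing each single-curve loop by a bigon formed between two distinct strands.  Because every strand is simple, the tangle fails to be tight precisely when some pair of strands $\gamma_i, \gamma_j$ cross at least twice; two consecutive crossings of this pair along $\gamma_i$ then bound a disk $B$ whose boundary consists of a subarc of $\gamma_i$ and a subarc of $\gamma_j$, and whose interior contains no further $\gamma_i$-$\gamma_j$ crossing (though it may meet other strands).  Throughout the process I would select an \emph{innermost} such bigon $B$, meaning one whose interior contains no other bigon between any pair of strands.

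I would then empty $B$ following the interior-clearing template of Lemma~\ref{L:contract}.  As long as $B$ is nonempty: if some interior arc bounds an empty bigon with a subarc of $\partial B$, remove it with a single $\arc20$ move, decreasing the interior strand-piece count $s_B$ by one; otherwise some interior vertex $x$ is adjacent to $\partial B$, and a $\arc02$ followed by a $\arc33$ pushes $x$ outside $B$, decreasing the interior vertex count $m$ by one.  After $B$ is empty, a single $\arc20$ removes it and reduces the $\gamma_i$-$\gamma_j$ crossing count by two.  All of these moves preserve simplicity of each strand, so the hypothesis of the lemma is maintained throughout.

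For the bound, I would charge each homotopy move to a (vertex, strand) pair.  Specifically, the $\arc02$ and $\arc33$ used to push an interior vertex $v$ across the part of $\gamma_i$ bounding $B$ are both charged to $(v, \gamma_i)$; each $\arc20$ cleanup move is charged to its disappearing vertices and one of the bounding strands.  Choosing an innermost $B$ is essential here: it ensures that when $v$ is pushed out of $B$, no smaller bigon inside $B$ must be resolved first, and, together with the fact that each strand remains a simple arc dividing the tangle disk $\Sigma$ into two fixed regions, it implies that $v$ is pushed across any given strand only a constant number of times during the entire algorithm.  Summing over all $n$ vertices and $s$ strands then yields the $3ns$ bound, with the constant $3$ arising exactly as in Lemma~\ref{L:contract}.

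The hard part will be verifying the ``pushed only a constant number of times'' invariant used in the charging.  Unlike the loop-contraction argument, where a contracting loop is entirely removed after one pass, here clearing one bigon might be followed later by processing a disjoint bigon involving the same pair of strands, and the intermediate $\arc02$ moves briefly raise crossing counts.  The key technical step will be to show, using the innermost selection together with the preserved simplicity of each strand, that the cumulative migration of any vertex relative to any fixed strand is $O(1)$, so that the total charge is indeed $O(ns)$.
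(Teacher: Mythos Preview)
Your approach differs from the paper's and carries a gap that you yourself flag.  The paper does not process bigons at all; instead it inducts on the number of strands.  At each step it selects an \emph{extremal} strand $\gamma_i$---one whose two boundary endpoints cut off a boundary arc $\sigma_i$ containing no other endpoints---and sweeps $\gamma_i$ toward $\sigma_i$ using exactly the loop-contraction moves from Lemma~\ref{L:contract} to clear the disk $\Sigma_i$ bounded by $\gamma_i\cup\sigma_i$.  This costs at most $3m_i+s_i\le 3n$ moves, after which the tangle boundary is shrunk to exclude $\gamma'_i$ entirely, leaving a tangle with $s-1$ simple strands.  The induction is then immediate: $3n$ moves per strand, $s$ strands, total $3ns$.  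Crucially, once a strand is straightened it is removed from the picture and never touched again, so no charging argument is needed.

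Your bigon-by-bigon scheme requires the invariant that each vertex crosses each strand $O(1)$ times over the whole run, and ``innermost'' selection does not obviously deliver this.  Clearing a bigon between $\gamma_i$ and $\gamma_j$ pushes interior vertices across (say) the $\gamma_i$ side via $\arc02$ moves that create \emph{new} crossings on $\gamma_i$; those new crossings can later sit inside other $\gamma_i$-bigons and be pushed again.  Since the number of $\gamma_i$--$\gamma_j$ bigons is not a priori bounded, and since the $\arc02$ moves temporarily increase the vertex count, neither ``number of original vertices'' nor ``current vertices'' gives a stable quantity to charge against without a further monotone invariant.  You would need something like a potential that strictly decreases with every bigon removal and absorbs the cost of the intermediate $\arc02$ moves; absent that, the $O(ns)$ bound is unproven.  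The paper's strand-at-a-time strategy sidesteps the whole issue.
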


\begin{proof}~
\EDIT{We prove the lemma by induction on $s$.  The base case when $s=1$ is trivial, so assume $s\ge 2$.}

Fix an arbitrary reference point on the boundary circle $\sigma$ that is not an endpoint of a strand. For each index $i$, let $\sigma_i$ be the arc of $\sigma$ between the endpoints of $\gamma_i$ that does not contain the reference point. A strand $\gamma_i$ is \emph{extremal} if the corresponding arc $\sigma_i$ does not contain any other arc $\sigma_j$.

Choose an arbitrary extremal strand $\gamma_i$. Let $m_i$ denote the number of tangle vertices in the interior of the disk bounded by $\gamma_i$ and the boundary arc $\sigma_i$; \EDIT{call this disk $\Sigma_i$}.  Let $s_i$ denote the number of intersections between $\gamma_i$ and other strands.  Finally, let $\gamma'_i$ be a path inside the disk $\Sigma$ defining tangle~$T$, with the same endpoints as $\gamma_i$, that intersects each other strand in $T$ at most once, such that the disk bounded by $\sigma_i$ and $\gamma'_i$ has no tangle vertices inside its interior. 
\EDIT{(See Figure~\ref{F:move-strand-tangle} for an illustration; the red strand in the left tangle is $\gamma_i$, the red strand in the middle tangle is $\gamma'_i$, and the shaded disk is $\Sigma_i$.)}.

We can deform $\gamma_i$ into $\gamma'_i$ using essentially the algorithm from Lemma \ref{L:contract}; \EDIT{the disk $\Sigma_i$ is contracted along with $\gamma_i$ in the process}. If \EDIT{$\Sigma_i$} contains an empty bigon \EDIT{with one side in $\gamma_i$}, remove it with a $\arc20$ move \EDIT{(which decreases~$s_i$ by $1$)}. If \EDIT{$\Sigma_i$} has an interior vertex with a neighbor on $\gamma_i$, remove it using at most two homotopy moves (which increases~$s_i$ \EDIT{by $1$ and decreases $m_i$ by $1$}). Altogether, this deformation requires at most $3m_i + s_i \le 3n$ homotopy moves.

\begin{figure}[ht]
\centering\includegraphics[width=0.9\linewidth]{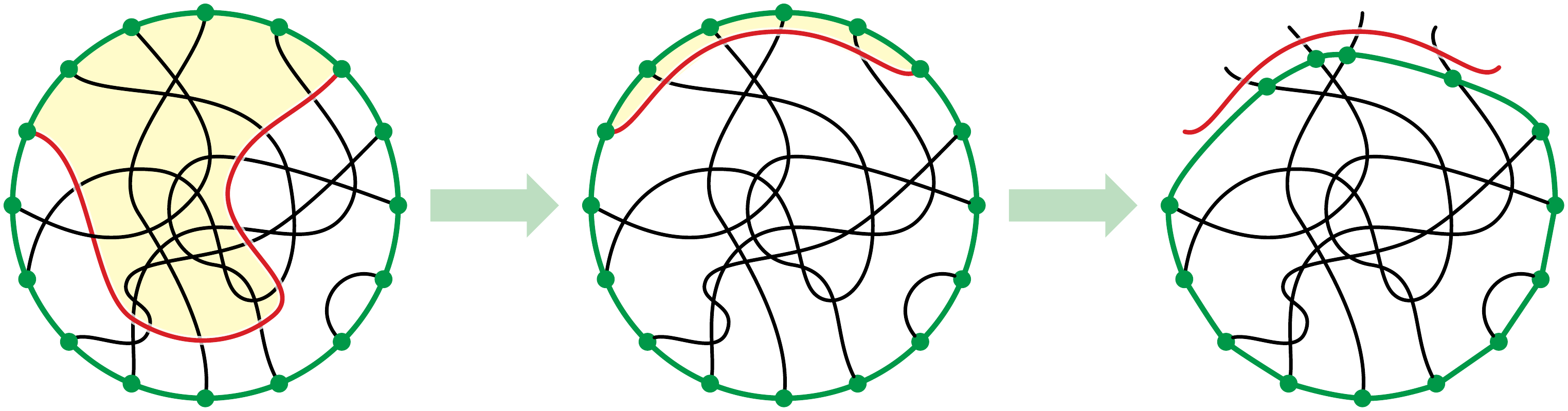}
\caption{Moving one strand out of the way and shrinking the tangle boundary.}
\label{F:move-strand-tangle}
\end{figure}

After deforming $\gamma_i$ to $\gamma_i'$, we \EDIT{redefine the tangle by ``shrinking'' its boundary curve} slightly to exclude~$\gamma'_i$, without creating or removing any \EDIT{vertices in the tangle or endpoints on the boundary} \EDIT{(see the right of Figure~\ref{F:move-strand-tangle})}.  We emphasize that shrinking the boundary does not modify the strands and therefore does not require any homotopy moves. The resulting smaller tangle has exactly $s-1$ strands, each of which is simple. Thus, the induction hypothesis implies that we can recursively tighten this smaller tangle using at most $3n(s-1)$ homotopy moves.  
\end{proof}

\begin{corollary}
\label{C:tangle}
Every tangle $T$ with $n$ vertices and $s$ strands can be tightened using at most $3D(T) + 3ns$ homotopy moves.
\end{corollary}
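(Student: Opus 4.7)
The plan is to tighten $T$ in two phases: first, contract self-loops so that every strand is simple, then apply Lemma~\ref{L:pretangle} to remove excess crossings between distinct strands. Phase~1 will be shown to cost at most $3D(T)$ moves, and Phase~2 at most $3ns$ moves, for a total of $3D(T)+3ns$.

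For Phase~1, we adapt the proof of Lemma~\ref{L:contract} to tangles. While some strand $\gamma_i$ has a self-intersection, fix an orientation of $\gamma_i$ from one boundary endpoint to the other and let $x = \gamma_i(\theta) = \gamma_i(\theta')$ with $\theta < \theta'$ be the first vertex to be visited twice along $\gamma_i$. Because boundary endpoints of distinct strands are distinct and $x$ is an interior vertex, the subpath $\alpha$ of $\gamma_i$ between parameters $\theta$ and $\theta'$ is a simple loop whose interior is an open disk in $\Sigma$. Let $m$ and $s'$ denote the number of tangle vertices and maximal subpaths of $T$ strictly inside $\alpha$. Exactly as in Lemma~\ref{L:contract}, we contract $\alpha$ using at most $3m + s' + 1$ homotopy moves (an alternating sequence of $\arc20$, $\arc02$, $\arc33$ moves, ending with a single $\arc10$ move). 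By Euler's formula the interior of $\alpha$ contains $m + s' + 1$ faces of~$T$, each of strictly smaller depth than its image after contraction, so the potential $D(T)$ drops by at least $m + s' + 1$. Hence the number of moves used to remove this loop is at most $3$ times the drop in~$D(T)$.

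Summing over all loops contracted in Phase~1, if $T'$ denotes the resulting tangle, then the total number of homotopy moves is at most $3\bigl(D(T) - D(T')\bigr) \le 3D(T)$. Every strand of $T'$ is simple, and $T'$ has at most $n$ vertices and $s$ strands, so Lemma~\ref{L:pretangle} tightens $T'$ in at most $3ns$ additional moves. Adding the two bounds gives the claimed $3D(T) + 3ns$.

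The main conceptual point---and the reason the two phases can be analyzed separately---is that Phase~1 only contracts simple loops in the \emph{interior} of a strand and never creates crossings between distinct strands that were not already present, so Lemma~\ref{L:pretangle} applies directly to $T'$. No step requires a novel gadget; the analysis is essentially the same amortized argument as Lemma~\ref{L:contract}, reused once per self-intersecting strand.
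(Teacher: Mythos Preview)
Your proof is correct and follows essentially the same two-phase strategy as the paper: repeatedly contract simple loops in non-simple strands, bounding the total cost by $3D(T)$ via the potential argument of Lemma~\ref{L:contract}, and then invoke Lemma~\ref{L:pretangle} on the resulting tangle (which still has $s$ strands and at most $n$ vertices) for the remaining $3ns$ moves. One small slip in wording: you wrote that each interior face of $\alpha$ has ``strictly smaller depth than its image after contraction,'' but you mean strictly \emph{greater}---removing $\alpha$ lowers the depth of every point inside it by at least $1$, which is precisely why $D(T)$ drops by at least $m+s'+1$.
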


\begin{proof}
\EDIT{As long as $T$ contains at least one non-simple strand, we identify a simple loop $\alpha$ in that strand and remove it as described in the proof of Lemma \ref{L:contract}.  Suppose there are $m$ vertices and $t$ maximal subpaths in the interior of $\alpha$, and let $M$ be the number of homotopy moves required to remove $\alpha$.  The algorithm in the proof of Lemma \ref{L:contract} implies that $M\le 3m+t+1$, and Euler’s formula implies that $\alpha$ contains $m+t+1 \ge M/3$ faces.  Removing $\alpha$ decreases the depth of each of these faces by at least $1$ and therefore decreases the potential of the tangle by at least $M/3$.}

\EDIT{Let $T'$ be the remaining tangle after all such loops are removed.  Our potential analysis for a single loop implies inductively that transforming~$T$ into~$T'$ requires at most $3D(T) - 3D(T') \le 3D(T)$ homotopy moves.
Because $T'$ still has $s$ strands and at most $n$ vertices, Lemma \ref{L:pretangle} implies that we can tighten $T'$ with at most $3ns$ additional homotopy moves.}
\end{proof}

\subsection{Main Algorithm}

We call a simple closed curve $\sigma$ \EMPH{useful} for $\gamma$ if $\sigma$ intersects $\gamma$ transversely away from its vertices, and the interior tangle $T$ of $\sigma$ has at least $s^2$ vertices, where $s \coloneqq \abs{\sigma\cap\gamma}/2$ is the number of strands.
Our main algorithm repeatedly finds a useful closed curve whose interior tangle has $O(\sqrt{n})$ depth, and tightens its interior tangle;  if there are no useful closed curves, then we fall back to the loop-contraction algorithm of Lemma \ref{L:contract}.

%
%
%
%

\begin{lemma}
\label{L:useful}
Let $\gamma$ be an arbitrary non-simple closed curve in the plane with $n$ vertices.  Either there is a useful simple closed curve for $\gamma$ whose interior tangle has depth $O(\sqrt{n})$, or the depth of $\gamma$ is $O(\sqrt{n})$.
\end{lemma}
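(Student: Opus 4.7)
The plan is to construct a useful $\sigma$ explicitly whenever $\Depth(\gamma)$ is large. Set $d := \Depth(\gamma)$; if $d \le c\sqrt{n}$ for a suitable absolute constant $c$, the second alternative already holds, so I will assume $d \ge c\sqrt{n}$. The strategy is to locate a depth $k^*$ close to $d$ at which few edges of $\gamma$ separate shallow faces from deep ones, and then take $\sigma$ to be the outer boundary of the connected component of the deep region containing a deepest face.

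For each $k \ge 1$, let $e_k$ denote the number of edges of $\gamma$ whose two adjacent faces have depths $k-1$ and $k$.  Every edge of $\gamma$ contributes to at most one $e_k$ (neighboring faces differ in depth by at most one), so $\sum_k e_k \le 2n$.  Averaging over the $\lceil\sqrt{n}\rceil+1$ values of $k$ in the window $W := [d - \lceil \sqrt{n}\rceil,\, d]$ yields some $k^* \in W$ with $e_{k^*} = O(\sqrt{n})$. Let $R_{\ge k^*}$ denote the closed union of faces of depth at least $k^*$, and let $C^*$ be the connected component of $R_{\ge k^*}$ containing a deepest face of $\gamma$.  A key structural step is to show that the complement of $C^*$ in the plane is connected, i.e., $C^*$ has no ``islands''. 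If a bounded complement component $I$ contained a face $g$ of $\gamma$-depth $<k^*$, then every dual-graph path from $g$ to the outer face must exit $I$ through a face of $C^*$, forcing a face of depth $\ge k^*$ to appear on the path and hence contradicting the assumed depth bound on $g$; if instead $g$ has depth $\ge k^*$, then $g$ lies in a different component $C'$ of $R_{\ge k^*}$, but any edge of $\partial I$ with one side in $C'$ and the other in $C^*$ connects two adjacent $R_{\ge k^*}$-faces and thus forces $C' = C^*$, another contradiction.

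After a small perturbation to avoid vertices of $\gamma$, the outer boundary of $C^*$ becomes a single simple closed curve $\sigma$ whose transverse crossings with $\gamma$ are exactly the edges of $\partial C^*$, so $s = |\sigma \cap \gamma|/2 \le e_{k^*}/2 = O(\sqrt{n})$.  Every face $f$ in the interior of $\sigma$ lies in $C^*$ and has $\gamma$-depth $d_f \in [k^*, d]$; following a BFS-monotone dual path from $f$ that stays in $C^*$ until the moment it first crosses $\sigma$ certifies that the tangle depth of $f$ is at most $d_f - k^* + 1 \le d - k^* + 1 = O(\sqrt{n})$. The main obstacle is verifying the usefulness condition $m \ge s^2$ for the number $m$ of interior vertices of the tangle. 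I expect to handle this either by refining the choice of $k^*$ within $W$ to simultaneously control the boundary size and the enclosed vertex count, by averaging over the several components of $R_{\ge k^*}$, or by an isoperimetric-style argument in the 4-regular plane graph showing that a tangle whose deepest face sits at dual-distance $\Omega(\sqrt{n})$ from its boundary must contain $\Omega(s^2)$ vertices.  Any residual failure case should degenerate to the conclusion $\Depth(\gamma) = O(\sqrt{n})$, landing us in the second alternative.
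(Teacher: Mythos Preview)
Your averaging construction gives a simple closed curve $\sigma$ whose interior tangle has at most $O(\sqrt{n})$ strands and depth $O(\sqrt{n})$, but it leaves the defining condition of usefulness---namely $m \ge s^2$---entirely unverified, as you yourself acknowledge. This is not a detail to be patched later; it is the content of the lemma. Averaging bounds $s$ from above but says nothing about $m$ from below, and none of your three suggested fixes closes the gap in any evident way. In particular, the isoperimetric idea points in the wrong direction: a tangle with $s$ strands and large depth need not have $\Omega(s^2)$ interior vertices (picture $s$ nearly-parallel arcs with one additional strand woven back and forth across them to create depth). Committing to a single depth level $k^*$ discards exactly the information needed to relate $m$ to $s$.

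The paper avoids this difficulty by considering the entire nested family of depth contours $\sigma_1,\ldots,\sigma_{d-1}$ around a deepest point, where the interior tangle of $\sigma_j$ has depth $j+1$. A counting argument (each strand crossing $\sigma_j$ deposits a vertex in the annulus $\Sigma_j\setminus\Sigma_{j-1}$) gives $n_j \ge \sum_{i\le j} s_i$. If $\sigma_1,\ldots,\sigma_j$ are all non-useful, then $s_j^2 > n_j \ge \sum_{i\le j} s_i$, and induction forces $s_j \ge (j+1)/2$; summing yields $n \ge n_{d-1} > d^2/4$. Thus either some $\sigma_j$ with $j = O(\sqrt{n})$ is useful, or $d \le 2\sqrt{n}$. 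The crucial difference from your approach is that the paper never tries to certify usefulness of a specifically chosen contour; instead, the simultaneous \emph{failure} of usefulness across all shallow contours is what accumulates enough vertices to bound the depth.
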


\begin{proof}
To simplify notation, let $d \coloneqq \Depth(\gamma)$. For each integer $j$ between $1$ and $d$, let $R_j$ be the set of points $p$ with $\Depth(p, \gamma) \ge d+1-j$, 
and let~$\tilde{R}_j$ denote a small open neighborhood of the closure of $R_j \cup \tilde{R}_{j-1}$, where $\tilde{R}_0$ is the empty set.  Each region~$\tilde{R}_j$ is the disjoint union of closed disks, whose boundary cycles intersect $\gamma$ transversely away from its vertices, if at all. In particular, $\tilde{R}_d$ is a disk containing the entire curve $\gamma$. 

Fix a point $z$ such that $\Depth(z,\gamma) = d$. For each integer $j$, let $\Sigma_j$ be the unique component of $\tilde{R}_j$ that contains $z$, and let $\sigma_j$ be the boundary of $\Sigma_j$. Then~$\sigma_1, \sigma_2, \dots, \sigma_d$ are disjoint, nested, simple closed curves; see Figure \ref{F:curve-levels}. Let $n_j$ be the number of vertices and let $s_j \coloneqq \abs{\gamma \cap \sigma_j}/2$ be the number of strands of the interior tangle of $\sigma_j$. For notational convenience, we define $\Sigma_0 \coloneqq \varnothing$ and thus $n_0 = s_0 = 0$.  We ignore the outermost curve $\sigma_d$, because it contains the entire curve $\gamma$.  The next outermost curve $\sigma_{d-1}$ contains every vertex of $\gamma$, so $n_{d-1} = n$.

\begin{figure}[ht]
\centering
\includegraphics[scale=0.5]{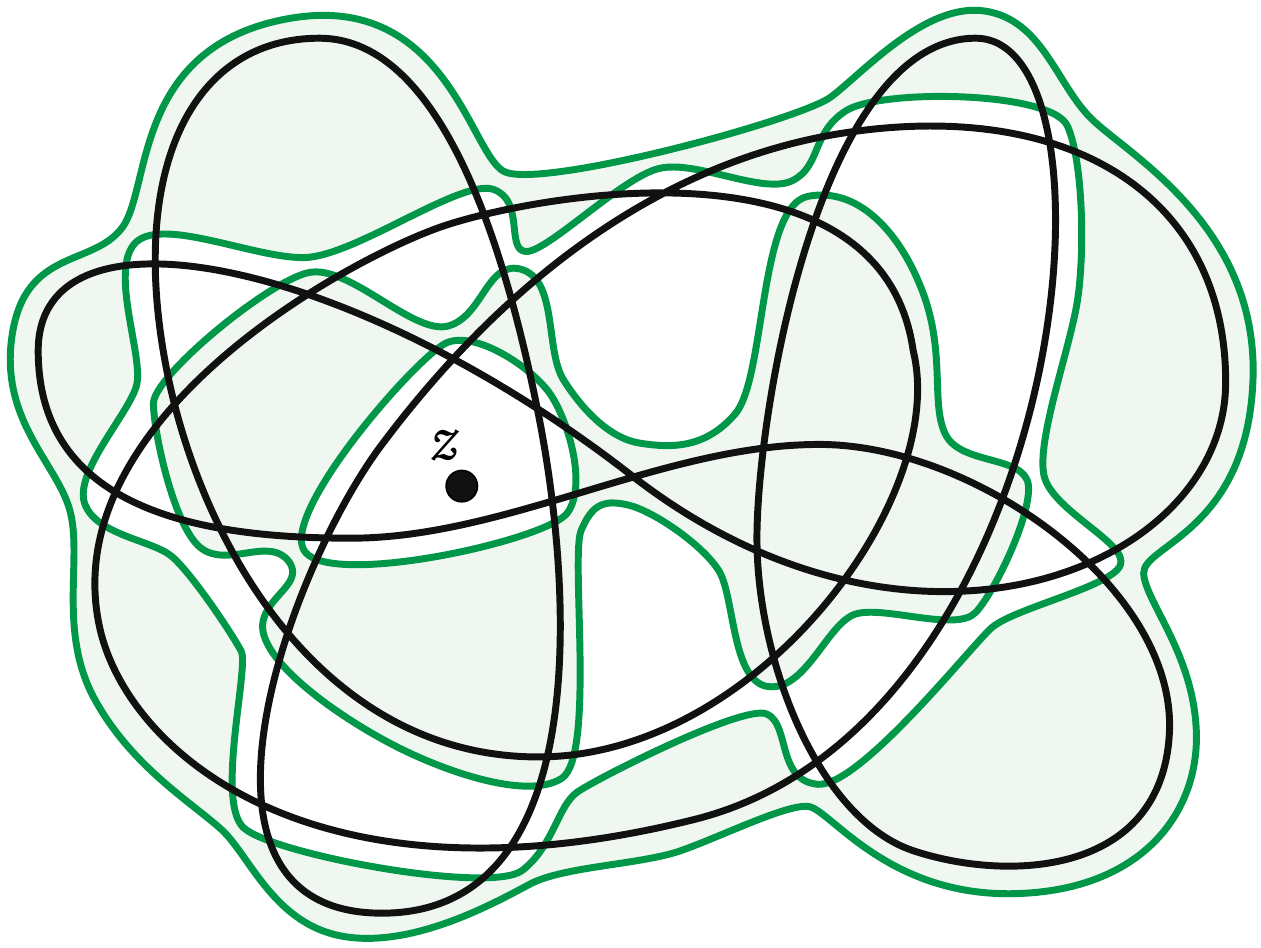}
\caption{Nested depth cycles around a point of maximum depth.}
\label{F:curve-levels}
\end{figure}

By construction, for each $j$, the interior tangle of $\sigma_j$ has depth $j+1$.  Thus, to prove the lemma, it suffices to show that if none of the curves $\sigma_1, \sigma_2, \dots, \sigma_{d-1}$ is useful, then $d = O(\sqrt{n})$.

Fix an index $j$. Each edge of $\gamma$ crosses $\sigma_j$ at most twice. Any edge of $\gamma$ that crosses~$\sigma_j$ has at least one endpoint in the annulus $\Sigma_j \setminus \Sigma_{j-1}$, and any edge that crosses $\sigma_j$ twice has both endpoints in~$\Sigma_j \setminus \Sigma_{j-1}$. Conversely, each vertex in $\Sigma_j$ is incident to at most two edges that cross $\sigma_j$ and no edges that cross $\sigma_{j+1}$. It follows that $\abs{\sigma_j\cap\gamma} \le 2(n_j - n_{j-1})$, and therefore $n_j \geq n_{j-1} + s_j$. 
Thus, by induction, we have
\[
	n_j \ge \sum_{i=1}^j s_i
\]
for every index $j$. 

Now suppose no curve $\sigma_j$ with $1\le j <d$ is useful.  Then we must have $s_j^2 > n_j$ and therefore 
\[
	s_j^2 > \sum_{i=1}^j s_i
\]
for all $1\le j < d$.  Trivially, $s_1\ge 1$, because $\gamma$ is non-simple. 
A straightforward induction argument implies that $s_j \ge (j+1)/2$
and therefore
\[
	n
	~=~ n_{d-1}
	~\ge~ \sum_{i = 1}^{d-1} \frac{i+1}{2}
	~\ge~ \frac{1}{2} \binom{d+1}{2}
	~>~ \frac{d^2}{4}.
\]
We conclude that $d \le 2\sqrt{n}$, which completes the proof.
\end{proof}

\begin{theorem}
\label{Th:upper}
Every closed curve in the plane with $n$ vertices can be simplified in $O(n^{3/2})$ homotopy moves.
\end{theorem}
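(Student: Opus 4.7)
The plan is to combine Lemma~\ref{L:contract}, Corollary~\ref{C:tangle}, and Lemma~\ref{L:useful} into a recursive algorithm whose analysis naturally yields an $O(n^{3/2})$ bound. Given a curve $\gamma$ with $n$ vertices, apply Lemma~\ref{L:useful}. In the first branch, $\Depth(\gamma) = O(\sqrt{n})$, so $D(\gamma) \le (n+1)\cdot \Depth(\gamma) = O(n^{3/2})$, and Lemma~\ref{L:contract} simplifies $\gamma$ in $O(n^{3/2})$ moves. In the second branch, there is a useful simple closed curve $\sigma$ whose interior tangle $T$ has some number $m$ of interior vertices, $s \le \sqrt{m}$ strands (by the definition of usefulness), and depth $O(\sqrt{n})$.

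Tighten $T$ using the algorithm behind Corollary~\ref{C:tangle}. Because $T$ has $O(m)$ faces, each of depth $O(\sqrt{n})$, we have $D(T) = O(m\sqrt{n})$; together with $ms \le m\sqrt{m} \le m\sqrt{n}$, the corollary bounds the tightening cost by $O(m\sqrt{n})$ moves. Once $T$ is tight, no two strands cross more than once, so at most $\binom{s}{2} \le m/2$ interior vertices remain. Consequently the ambient curve loses at least $m/2$ vertices, and we recurse on the smaller result.

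To bound the total cost, let $N$ denote the initial number of vertices, and let $m_1, m_2, \dots$ be the interior vertex counts of the tangles tightened in successive iterations. Because each iteration removes at least $m_i/2$ vertices from the curve and never creates new ones, $\sum_i m_i \le 2N$. The total cost of all tangle tightenings is therefore $O(\sqrt{N}) \cdot \sum_i m_i = O(N^{3/2})$, and the final fall-through to Lemma~\ref{L:contract} contributes another $O(N^{3/2})$. The main obstacle, namely the simultaneous existence of a useful curve of small interior depth whenever $\gamma$ is deep, is exactly what Lemma~\ref{L:useful} provides; the rest is an amortization calibrated so that both the $D(T)$ and $ns$ terms in Corollary~\ref{C:tangle} are $O(m\sqrt{n})$ at once, after which the saving of $m/2$ vertices per step telescopes correctly.
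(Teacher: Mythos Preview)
Your proof is correct and follows essentially the same approach as the paper: apply Lemma~\ref{L:useful}, fall through to Lemma~\ref{L:contract} when the depth is $O(\sqrt{n})$, and otherwise tighten a useful tangle via Corollary~\ref{C:tangle} at cost $O(m\sqrt{n})$, saving at least $m/2$ vertices before recursing. The only cosmetic difference is that the paper phrases the amortization as ``charge $O(\sqrt{n})$ moves to each deleted vertex,'' whereas you sum the $m_i$'s directly and invoke $\sum_i m_i \le 2N$; these are equivalent bookkeeping for the same argument.
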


\begin{proof}
Let $\gamma$ be an arbitrary closed curve in the plane with $n$ vertices.  If $\gamma$ has depth $O(\sqrt{n})$, Lemma~\ref{L:contract} and the trivial upper bound $D(\gamma) \le {(n+1)}\cdot \Depth(\gamma)$ imply that we can simplify $\gamma$ in $O(n^{3/2})$ homotopy moves. For purposes of analysis, we charge $O(\sqrt{n})$ of these moves to each vertex of $\gamma$. 
%

Otherwise, let $\sigma$ be an arbitrary useful closed curve chosen according to Lemma \ref{L:useful}. Suppose the interior tangle of $\sigma$ has $m$ vertices, $s$ strands, and depth~$d$.  Lemma \ref{L:useful} implies that $d = O(\sqrt{n})$, and the definition of useful implies that $s \le \sqrt{m}$, which is $O(\sqrt{n})$. Thus, by \EDIT{Corollary \ref{C:tangle}}, we can tighten the interior tangle of $\sigma$ in $O(m d + m s) = O(m \sqrt{n})$ moves. This simplification removes at least $m - s^2/2 \ge m/2$ vertices from $\gamma$, as the resulting tight tangle has at most $s^2/2$ vertices.  Again, for purposes of analysis, we charge $O(\sqrt{n})$ moves to each deleted vertex. We then recursively simplify the resulting closed curve.

In either case, each vertex of $\gamma$ is charged $O(\sqrt{n})$ moves as it is deleted. Thus, simplification requires at most $O(n^{3/2})$ homotopy moves in total.
\end{proof}

\subsection{Efficient Implementation}

Here we describe how to implement our curve-simplification algorithm to run in $O(n^{3/2})$ time; in fact, our implementation spends only constant amortized time per homotopy move.  We assume that the input curve is given in a data structure that allows fast exploration and modification of plane graphs, such as a quad-edge data structure \cite{gs-pmgsc-85} or a doubly-connected edge list \cite{bcko-cgaa-08}.  If the curve is presented as a polygon with $m$ edges, an appropriate graph representation can be constructed in $O(m\log m + n)$ time using classical geometric algorithms \cite{cs-arscg-89,m-fppa-90,ce-oails-92}; more recent algorithms can be used  for piecewise-algebraic curves \cite{ek-ee2aa-08}.

\begin{theorem}
\label{Th:upper-algo}
Given a simple closed curve $\gamma$ in the plane with $n$ vertices, we can compute a sequence of $M = O(n^{3/2})$ homotopy moves that simplifies $\gamma$ in $O(M)$ time.
\end{theorem}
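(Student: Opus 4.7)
The plan is to maintain $\gamma$ throughout the execution in a plane-graph data structure such as a quad-edge or a doubly-connected edge list, so that each individual homotopy move can be performed by local pointer surgery in $O(1)$ time once its location has been specified. The control flow mirrors the proof of Theorem \ref{Th:upper} verbatim: at the top level we check whether the current curve has depth $O(\sqrt{n})$, and either run the loop-contraction procedure of Lemma \ref{L:contract} directly, or alternately locate a useful closed curve $\sigma$ via Lemma \ref{L:useful}, tighten its interior tangle using Corollary \ref{C:tangle}, and repeat on the result.  Since Theorem \ref{Th:upper} already bounds the total move count by $M = O(n^{3/2})$, it suffices to show that each of the three subroutines---loop contraction, tangle tightening, and useful-curve discovery---runs in $O(1)$ amortized time per homotopy move it performs, plus an $O(n)$ start-up overhead.

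For the loop-contraction procedure of Lemma \ref{L:contract}, we walk $\gamma$ from the current basepoint, marking vertices as visited, until the first repetition identifies a simple loop $\alpha$; this walk is amortized against the moves that immediately follow. While emptying $\alpha$, we maintain a doubly-linked list of (i) empty bigons one of whose sides lies on $\alpha$ and (ii) interior vertices of $\alpha$ adjacent to $\alpha$.  Since each homotopy move affects only a constant neighborhood in the graph, both lists can be refreshed in $O(1)$ per move, and the next move to execute is extracted from the head of the list.  The tangle-tightening procedure of Corollary \ref{C:tangle} reuses these primitives together with a doubly-linked list of strand endpoints along the tangle boundary, which supplies the next extremal strand in $O(1)$ and allows the boundary circle to be ``shrunk'' past each finished strand in $O(1)$ as well.

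The main obstacle is implementing the useful-curve discovery of Lemma \ref{L:useful} so that its cost is amortized against the moves performed by the subsequent tightening.  The plan is to compute all face depths once at the start of the algorithm with a single breadth-first search in the dual graph, costing $O(n)$ time, and then, after each tightening of the interior of a useful curve $\sigma$, to refresh the depths only at faces inside $\sigma$, which touches only the $O(m)$ faces whose depths could have changed, where $m$ is the number of vertices of the interior tangle.  Given the current depths, we sweep outward level by level from a deepest face, maintaining the counters $n_j$ and $s_j$ from the proof of Lemma \ref{L:useful} incrementally, and halt as soon as $s_j^2 \le n_j$; because this sweep examines only faces that end up inside the returned useful curve, it also runs in $O(m)$ time.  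Each subsequent tightening then performs $\Omega(m)$ homotopy moves, since by the proof of Theorem \ref{Th:upper} it removes at least $m/2$ vertices of $\gamma$, absorbing the $O(m)$ setup cost at $O(1)$ amortized per move.  The one-time $O(n)$ initial BFS is dominated by $M$, so the total running time is $O(M) = O(n^{3/2})$, as required.
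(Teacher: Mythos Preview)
Your implementation is sound in its outline and shares the paper's high-level structure, but it locates useful curves differently.  The paper precomputes the entire contour tree of~$\gamma$ once, along with strand and interior-vertex counts for every contour, and maintains a queue of \emph{leafmost} useful contours; after tightening a tangle bounded by~$\sigma$, it simply walks \emph{up} the (unchanged) contour tree from~$\sigma$, updating interior-vertex counts, until it reaches a useful ancestor.  Because each contour is visited at most once across all such walks, the total search time is~$O(n)$.  Your scheme instead maintains face depths dynamically, refreshing them inside~$\sigma$ after each tightening, and rediscovers a useful curve each time by sweeping outward from a deepest face.  Both approaches amortize to~$O(1)$ per move, but the contour-tree version avoids recomputing any depths at all and makes the ``no useful contour lies inside a tight tangle'' observation do the work.

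There is one step you leave implicit that deserves an explicit argument: how do you find ``a deepest face'' (equivalently, perform the top-level depth check) after each tightening?  A linear scan costs~$\Theta(n)$ per iteration, and since a useful tangle may contain as few as~$O(1)$ vertices, naively this could sum to~$\Theta(n^2)$.  The fix is routine---bucket the faces by depth and keep a pointer to the current maximum, updating the buckets during each refresh and walking the pointer down when its bucket empties---but note that the global maximum depth can \emph{increase} after tightening (a tight tangle with $s$ strands can have depth up to~$s+1$), so you must also bump the pointer up when the refresh reveals a deeper face.  A short calculation (total increase $\sum_i s_i \le \sum_i \sqrt{m_i} = O(n)$ by Cauchy--Schwarz, since $\sum_i m_i = O(n)$) shows the total pointer movement is still~$O(n)$.
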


\begin{proof}
We begin by labeling each face of \EDIT{$\gamma$} with its depth, using a breadth-first search of the dual graph in $O(n)$ time.  Then we construct the depth contours of \EDIT{$\gamma$}—the boundaries of the regions~$\tilde{R}_j$ from the proof of Lemma \ref{L:useful}—and organize them into a \emph{contour tree} in $O(n)$ time by brute force.  Another $O(n)$-time breadth-first traversal computes the number of strands and the number of interior vertices of every contour's interior tangle; in particular, we identify which depth contours are useful.  To complete the preprocessing phase, we place all the leafmost useful contours into a queue.  We can charge the overall $O(n)$ preprocessing time to the $\Omega(n)$ homotopy moves needed to simplify the curve.

As long as the queue of leafmost useful contours is non-empty, we extract one contour $\sigma$ from this queue and simplify its interior tangle $T$ as follows.  Suppose $T$ has $m$ interior vertices.

Following the proof of Theorem \ref{Th:upper}, we first simplify every loop in each strands of $T$.  We identify loops by traversing the strand from one endpoint to the other, marking the vertices as we go; the first time we visit a vertex that has already been marked, we have found a loop $\alpha$.  We can perform each of the homotopy moves required to shrink $\alpha$ in $O(1)$ time, because each such move modifies only a constant-radius boundary of a vertex on $\alpha$.  After the loop is shrunk, we continue walking along the strand starting at the most recently marked \EDIT{vertex}.

The second phase of the tangle-simplification algorithm proceeds similarly.  We walk around the boundary of $T$, marking vertices as we go.  As soon as we see the second endpoint of any strand $\gamma_i$, we pause the walk to straighten $\gamma_i$.  As before, we can execute each homotopy move used to move $\gamma_i$ to $\gamma'_i$ in $O(1)$ time. We then move the boundary of the tangle over the vertices of $\gamma'_i$, and remove the endpoints of $\gamma'_i$ from the boundary curve, in $O(1)$ time per vertex.

The only portions of the running time that we have not already charged to homotopy moves are the time spent marking the vertices on each strand and the time to update the tangle boundary after moving a strand aside.  Altogether, the uncharged time is $O(m)$, which is less than the number of moves used to tighten $T$, because the contour $\sigma$ is useful.  Thus, tightening the interior tangle of a useful contour requires $O(1)$ amortized time per homotopy move.

Once the tangle is tight, we must update the queue of useful contours.  The original contour $\sigma$ is still a depth contour in the modified curve, and tightening $T$ only changes the depths of faces that intersect~$T$.  Thus, we could update the contour tree in $O(m)$ time, which we could charge to the moves used to tighten $T$; but in fact, this update is unnecessary, because no contour in the interior of $\sigma$ is useful.  We then walk up the contour tree from $\sigma$, updating the number of interior vertices until we find a useful ancestor contour.  The total time spent traversing the contour tree for new useful contours is $O(n)$; we can charge this time to the $\Omega(n)$ moves needed to simplify the curve.
\end{proof}

\subsection{Multicurves}

Finally, we describe how to extend our $O(n^{3/2})$ upper bound to multicurves.  Just as in Section~\ref{SS:multi-lower}, we distinguish between two variants, depending on whether the target of the simplification is an \emph{arbitrary} set of disjoint cycles or a \emph{particular} set of disjoint cycles.  In both cases, our upper bounds match the lower bounds proved in Section~\ref{SS:multi-lower}.

First we extend our loop-contraction algorithm from Lemma \ref{L:contract} to the multicurve setting.  
\EDIT{Recall that a \emph{component} of a multicurve $\gamma$ is any multicurve whose image is a component of the image of $\gamma$, and the individual closed curves that comprise $\gamma$ are its \emph{constituent curves}.}
The main difficulty is that \EDIT{one} component of the multicurve might lie inside a face of another component, making progress on the larger component impossible.  To handle this potential obstacle, we simplify the \emph{innermost} components of the \EDIT{multicurve} first, and we move isolated simple closed curves toward the outer face as quickly as possible.  Figure \ref{F:multi-shrink} sketches the basic steps of our algorithm when the input multicurve is connected.

\begin{figure}[ht]
\centering
\includegraphics[width=0.95\linewidth]{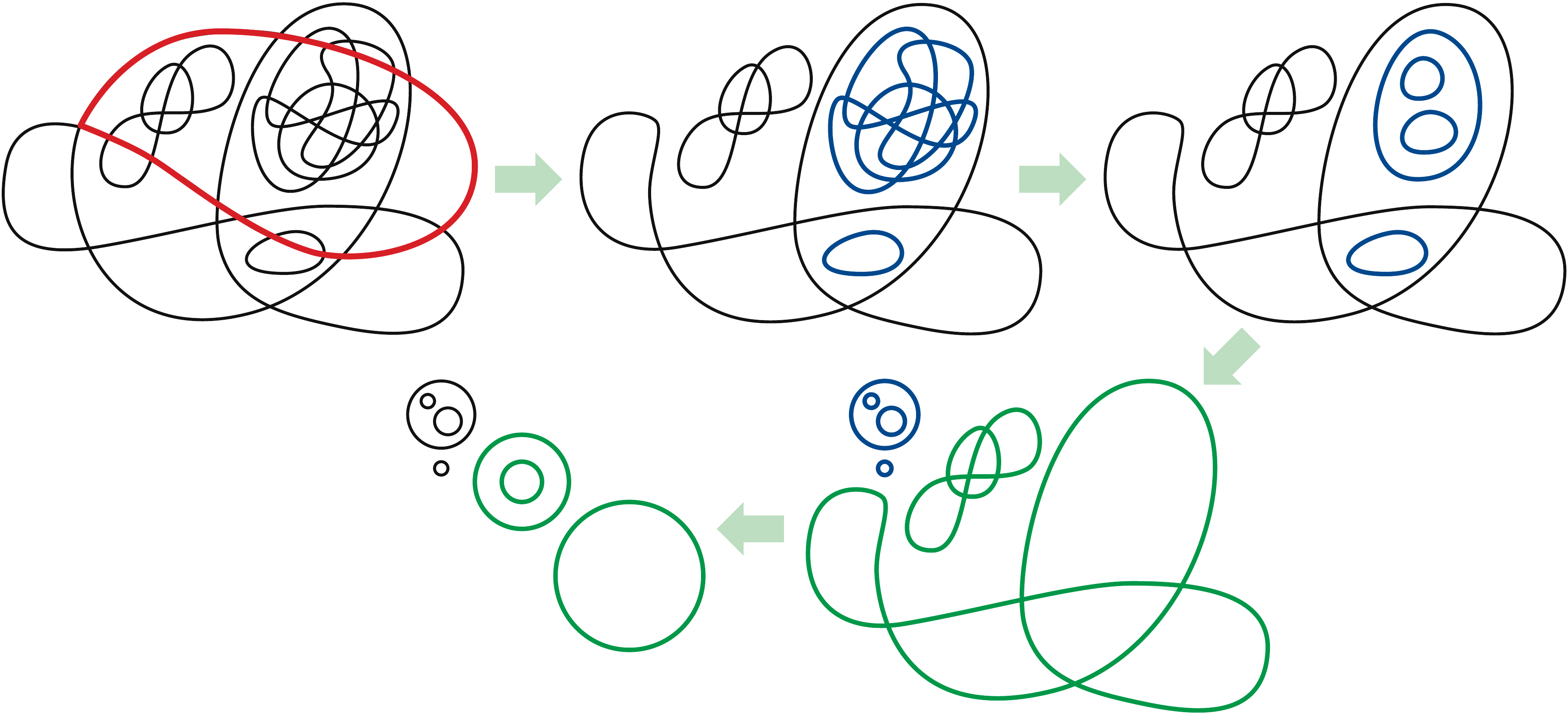}
\caption{Simplifying a connected multicurve: shrink an arbitrary simple loop or cycle, recursively simplify any inner components, translate inner circle clusters to the outer face, and recursively simplify the remaining non-simple components.}
\label{F:multi-shrink}
\end{figure}
\unskip
\begin{lemma}
\label{L:multi-contract}
Every $n$-vertex $k$-curve $\gamma$ in the plane can be transformed into $k$ disjoint simple closed curves using at most $3D(\gamma) + 4nk$ homotopy moves.
\end{lemma}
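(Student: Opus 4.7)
The plan is to generalize the loop-contraction argument of Lemma~\ref{L:contract} by induction on the number of vertices and constituent curves of $\gamma$. The main new obstacle compared to the single-curve setting is that a simple loop in one constituent may enclose entire other components of $\gamma$, which must be processed first before the loop can be fully contracted. The base case $k=1$ is covered directly by Lemma~\ref{L:contract}, since $3D(\gamma)-3 \le 3D(\gamma)+4n$.

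For the inductive step I would first search for a simple loop $\alpha$ in some non-simple constituent of $\gamma$, chosen in the basepoint-first fashion of Lemma~\ref{L:contract}; if every constituent is simple but crossings still remain, an analogous role is played by an innermost empty bigon formed between two simple constituents. Let $\Sigma$ denote the closed disk bounded by $\alpha$. Following Figure~\ref{F:multi-shrink}, I then process $\alpha$ in three phases.

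First, for each component $C_i$ of $\gamma$ lying strictly inside $\Sigma$, apply the inductive hypothesis to simplify $C_i$; since each $C_i$ is disjoint from $\alpha$, the recursion is well-defined and the counts of vertices and constituents inside $\Sigma$ strictly drop. After this phase each $C_i$ is a disjoint collection of simple circles still lying inside $\Sigma$. Second, translate each such simple circle out through $\alpha$ and onward to the outer face of $\gamma$, using paired $\arc02$ and $\arc20$ moves, one pair per strand crossed. Third, once $\Sigma$ contains only material from $\alpha$'s own component, contract $\alpha$ exactly as in Lemma~\ref{L:contract}: this uses at most $3m+s+1$ moves and drops the potential $D(\gamma)$ by at least $m+s+1$, where $m$ and $s$ count the vertices and strands of $\gamma$ remaining inside $\Sigma$. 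Then recurse on the resulting smaller multicurve.

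The charging argument has two parts. The loop-contraction cost in the third phase telescopes against the drop in potential, contributing at most $3D(\gamma)$ in total across the recursion exactly as in Lemma~\ref{L:contract}. For the translation cost in the second phase, the key invariant is that once a constituent curve is pushed all the way out to the outer face, no subsequent loop contracted by the algorithm can re-enclose it; thus each of the $k$ constituents contributes at most $4n$ translation moves across the entire execution, for a total of at most $4nk$. The main obstacle I anticipate is making this translation bookkeeping precise—in particular, verifying that always translating all the way to the outer face (rather than merely out of the current $\Sigma$) is both feasible within $4n$ moves per constituent and suffices to prevent any constituent from being charged more than once globally.
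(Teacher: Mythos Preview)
Your approach is close in spirit to the paper's, but the paper organizes the induction differently in a way that directly dissolves the translation-bookkeeping obstacle you flag at the end.

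The paper does \emph{not} pick a loop first and then handle whatever components it happens to enclose. Instead it cases on whether $\gamma$ is connected. If $\gamma$ is connected, there are no separate components inside any loop, so one simply contracts a loop (or, if some constituent is already simple, contracts that whole constituent to an isolated circle) exactly as in Lemma~\ref{L:contract}, at cost at most $3(D(\gamma)-D(\gamma'))$; the lemma then follows by induction with \emph{no translation step at all}. Only when $\gamma$ is disconnected with a unique non-simple outer component $\gamma_o$ does translation enter: the paper recursively simplifies each inner subcurve $\gamma_f$ (cost $\le 3D(\gamma_f)+4n_fk_f$ by induction), translates each resulting cluster of circles across only the edges of $\gamma_o$ (cost $\le 4n_ok_f$), and finally recurses on $\gamma_o$ alone (cost $\le 3D(\gamma_o)+4n_ok_o$). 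The arithmetic closes because $\sum_f D(\gamma_f)+D(\gamma_o)\le D(\gamma)$ and $\sum_f(n_f+n_o)k_f + n_ok_o \le n\sum_f k_f + nk_o = nk$.

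This answers your bookkeeping worry structurally rather than by a global charging argument: the edges a constituent crosses at successive recursion levels are disjoint (first the $2n_f$ edges of its own $\gamma_f$, then the $2n_o$ edges of the enclosing $\gamma_o$, and so on), so the per-constituent translation cost is at most $4n$ by construction. Your loop-level decomposition can be pushed through with essentially the same inequality, but because you interleave the recursive Phase~1 with a top-level Phase~2, you are forced into the global once-per-constituent argument you sketch—correct, but less clean.

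One smaller point: in the ``all constituents simple but crossings remain'' case, the paper does not search for a bigon; it contracts a simple constituent to an isolated circle using the same $3(D-D')$ bound as for loops. This keeps the argument uniform and sidesteps the issue that an innermost bigon may still enclose other simple circles.
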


\begin{proof}
Let $\gamma$ be an arbitrary $k$-curve with $n$ vertices.
If $\gamma$ is connected, we either contract and delete a loop, exactly as in Lemma \ref{L:contract}, or we contract a simple constituent curve to an isolated circle, using essentially the same algorithm.  In either case, the number of moves performed is at most $3D(\gamma) - 3D(\gamma')$, where $\gamma'$ is the multicurve after the contraction.  The lemma now follows immediately by induction.

We call a component of $\gamma$ an \EMPH{outer component} if it is incident to the unbounded outer face of $\gamma$, and an \EMPH{inner component} otherwise.  If $\gamma$ has more than one outer component, we partition $\gamma$ into subcurves, each consisting of one outer component $\gamma\!_o$ and all inner components located inside faces of $\gamma\!_o$, and we recursively simplify each subcurve independently; the lemma follows by induction.  If any outer component is simple, we ignore that component and simplify the rest of $\gamma$ recursively; again, the lemma follows by induction.

Thus, we can assume without loss of generality that our multicurve $\gamma$ is disconnected but has only one outer component $\gamma\!_o$, which is non-simple.  For each face $f$ of $\gamma\!_o$, let $\gamma\!_f$ denote the union of all components inside $f$.  Let $n_f$ and $k_f$ respectively denote the number of vertices and constituent \EDIT{curves} of~$\gamma\!_f$.  Similarly, let~$n_o$ and $k_o$ respectively denote the number of vertices and constituent \EDIT{curves} of the outer component $\gamma\!_o$.

We first recursively simplify each subcurve $\gamma\!_f$; let $\kappa_f$ denote the resulting \emph{cluster} of $k_f$ simple closed curves.  By the induction hypothesis, this simplification requires at most $3D(\gamma\!_f) + 4 n_f k_f$ homotopy moves.  We \emph{translate} each cluster $\kappa_f$ to the outer face of $\gamma\!_o$ by shrinking $\kappa_f$ to a small $\e$-ball and then moving the entire cluster along a shortest path in the dual graph of $\gamma\!_o$.  This translation requires at most $4n_o k_f $ homotopy moves; each circle in $\kappa_f$ uses one $\arc{2}{0}$ move and one $\arc{0}{2}$ move to cross any edge of $\gamma\!_o$, and in the worst case, the cluster might cross all $2n_0$ edges of $\gamma\!_o$.  After all circle clusters are in the outer face, we recursively simplify $\gamma\!_o$ using at most $3 D(\gamma\!_o) + 4 n_o k_o$ homotopy moves.  

The total number of homotopy moves used in this case is
\[
	\sum_f 3D(\gamma\!_f) + 3 D(\gamma\!_o)
	~+~
	\sum_f 4 n_f k_f + \sum_f 4 n_o k_f + 4 n_o k_o.
\]
Each face of $\gamma\!_o$ has the same depth as the corresponding face of $\gamma$, and for each face~$f$ of $\gamma\!_o$, each face of the subcurve $\gamma\!_f$ has lesser depth than the corresponding face of $\gamma$.  It follows that 
\[
	\sum_f D(\gamma\!_f) + D(\gamma\!_o) \le D(\gamma).
\]
Similarly, $\sum_f n_f + n_o = n$ and $\sum_f k_f + k_o = k$.  The lemma now follows immediately.
\end{proof}

To reduce the leading term to $O(n^{3/2})$, we extend the definition of a tangle to the intersection of a multicurve $\gamma$ with a closed disk whose boundary intersects the multicurve transversely away from its vertices, or not at all.  Such a tangle can be decomposed into boundary-to-boundary paths, called \emph{open} strands, and closed curves that do not touch the tangle boundary, called \emph{closed} strands.  \EDIT{Each closed strand is a constituent curve of $\gamma$.}  A tangle is \emph{tight} if every strand is simple, every pair of open strands intersects at most once, and otherwise all strands are disjoint.

\begin{theorem}
\label{Th:multi-upper}
Every $k$-curve in the plane with $n$ vertices can be transformed into a set of $k$ disjoint simple closed curves using $O(n^{3/2} + nk)$ homotopy moves.
\end{theorem}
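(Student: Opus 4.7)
The plan is to prove Theorem~\ref{Th:multi-upper} by strong induction on $n$, combining the tangle-tightening algorithm of Theorem~\ref{Th:upper} with the multicurve loop-contraction algorithm of Lemma~\ref{L:multi-contract}.  The core observation is that Lemma~\ref{L:useful} refers only to Euler's formula and the depth-contour structure of the underlying $4$-regular plane graph, so its statement holds verbatim for multicurves: either $\gamma$ admits a useful simple closed curve $\sigma$ whose interior tangle has depth $O(\sqrt{n})$, or $\Depth(\gamma)=O(\sqrt{n})$.

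The algorithm mirrors Theorem~\ref{Th:upper}, with two adjustments.  If $\gamma$ has several outer components, we first partition $\gamma$ as in Lemma~\ref{L:multi-contract} into subcurves, each containing one outer component and its inner components, and recurse on each independently; the bound $\sum_i (n_i^{3/2} + n_i k_i) \le n^{3/2} + nk$ follows from convexity.  If $\Depth(\gamma)=O(\sqrt{n})$, we invoke Lemma~\ref{L:multi-contract} directly; since $D(\gamma) \le (n+1)\Depth(\gamma)$, this yields $3D(\gamma)+4nk = O(n^{3/2}+nk)$ moves.  Otherwise, we apply the generalized Lemma~\ref{L:useful} to obtain a useful $\sigma$ whose interior multicurve tangle $T$ has $m$ vertices, at most $\sqrt{m}$ open strands, depth $O(\sqrt{n})$, and some number $k_T \le k$ of closed strands (constituent curves of $\gamma$ entirely inside $\sigma$).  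We tighten $T$ in three phases: first, remove every self-intersection of every strand using the potential-based loop-contraction argument from the proof of Corollary~\ref{C:tangle}, which extends to closed strands without change and uses $O(D(T))=O(m\sqrt{n})$ moves; second, translate each now-simple closed strand across $T$ to an empty face, using one $\arc{0}{2}$/$\arc{2}{0}$ pair per edge crossed, at a cost of $O(m)$ moves per closed strand and $O(mk_T)$ in total; and third, apply Corollary~\ref{C:tangle} to tighten the remaining open-strand tangle in $O(m\sqrt{m})=O(m\sqrt{n})$ moves.  Tightening $T$ removes at least $m/2$ vertices from $\gamma$, and we recurse on the resulting smaller multicurve.

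The main obstacle is showing that the $O(mk_T)$ translation cost sums to $O(nk)$ across the entire recursion.  This follows from the bounds $k_T \le k$ and $\sum_T m_T = O(n)$; the latter holds because each tightened tangle of size $m_T$ deletes at least $m_T/2$ vertices, and there are only $n$ vertices to delete overall.  The same $\sum_T m_T = O(n)$ bound shows that the $O(m\sqrt{n})$ tightening moves accumulate to $O(n^{3/2})$.  Combined with the partition and low-depth cases, the total cost is $O(n^{3/2}+nk)$, matching the lower bound from Theorem~\ref{Th:multi-lower}.
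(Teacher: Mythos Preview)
Your overall architecture is right and very close to the paper's, but there is a real gap in the disconnected case.  You reduce to a single outer component by partitioning, but you never reduce to a \emph{connected} multicurve before invoking Lemma~\ref{L:useful}.  If the unique outer component $\gamma_o$ contains an inner component that is just a simple circle $C$, then the deepest point $z$ may sit inside $C$.  In that case the first depth contour $\sigma_1$ is a small circle just outside $C$; it meets no strand of $\gamma$, so $s_1=0$ and $n_1=0$.  This $\sigma_1$ is technically ``useful'' (since $0\ge 0^2$), but its interior tangle has $m=0$ vertices: tightening it removes nothing, and your induction on $n$ does not advance.  The paper's claim that Lemma~\ref{L:useful} extends verbatim is made only for \emph{connected} multicurves, precisely because connectedness is what forces $s_1\ge 1$ (and hence $m\ge s^2\ge 1$ for any useful contour).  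The paper therefore inserts an extra step you omit: when $\gamma$ has a single non-simple outer component but is disconnected, it first recursively simplifies each inner subcurve $\gamma_f$ and translates the resulting circle cluster all the way to the outer face of $\gamma$, and only then applies the useful-contour strategy to the remaining connected $\gamma_o$.

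A secondary issue: your Phase~1 invokes the loop-contraction argument of Corollary~\ref{C:tangle} ``without change,'' but that argument empties a loop $\alpha$ by removing interior vertices and interior \emph{arcs}; a simple closed strand sitting inside $\alpha$ is neither, so $\alpha$ never becomes empty and the final $\arc10$ move cannot be performed.  The paper handles this by interleaving loop contraction with Lemma~\ref{L:multi-contract}-style extraction of any component that becomes isolated during the process, translating it out of $\gamma$ entirely.  Your alternative charging scheme---moving closed strands only within the current tangle at cost $O(m_T)$ each and bounding $\sum_T m_T k_T \le k\sum_T m_T = O(nk)$---is a nice idea and would work, but to make it go through you must first specify where ``an empty face'' is (so that Phase~3's strand-straightening does not collide with the parked circles) and make Phase~1 aware of closed strands inside the loop being contracted.
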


\begin{proof}
Let $\gamma$ be an arbitrary $k$-curve with $n$ vertices.  Following the proof of Lemma \ref{L:multi-contract}, we can assume without loss of generality that $\gamma$ has a single outer component $\gamma\!_o$, which is non-simple.

When $\gamma$ is disconnected, we follow the strategy in the previous proof.  Let $\gamma\!_f$ denote the union of all components inside any face $f$ of $\gamma\!_o$.  For each face $f$, we recursively simplify $\gamma\!_f$ and translate the resulting cluster of disjoint circles to the outer face; when all faces are empty, we recursively simplify $\gamma\!_o$.  The theorem now follows by induction.

When $\gamma$ is non-simple and connected, we follow the useful closed curve strategy from Theorem~\ref{Th:upper}. We define a closed curve $\sigma$ to be useful for $\gamma$ if the interior tangle of $\sigma$ has its number of vertices at least the square of the number of \emph{open} strands; then the proof of Lemma~\ref{L:useful} applies to connected multicurves with no modifications.  So let $T$ be a tangle with $m$ vertices, $s \le \sqrt{m}$ open strands, $\ell$ closed strands, and depth $d = O(\sqrt{n})$.  We straighten~$T$ in two phases, almost exactly as in \EDIT{Section~\ref{SS:tangles}}, contracting loops and simple closed strands in the first phase, and straightening open strands in the second phase.

In the first phase, contracting one loop or \EDIT{simple} closed \EDIT{strand} uses at most $3D(T) - 3D(T')$ homotopy moves, where $T'$ is the tangle after contraction.  After each contraction, if $T'$ is disconnected—in particular, if we just contracted a \EDIT{simple} closed \EDIT{strand}—we simplify and extract any isolated components as follows.  Let $T'_o$ denote the component of $T'$ \EDIT{that} includes the boundary cycle, and for each face $f$ of $T'_o$, let $\gamma_f$ denote the union of all components of $T'$ inside $f$.  We simplify each multicurve $\gamma_f$ using the algorithm from Lemma~\ref{L:multi-contract}—\emph{not recursively!}—and then translate the resulting cluster of disjoint circles \emph{to the outer face of~$\gamma$}.  \EDIT{See Figure \ref{F:tighten-open}.}  Altogether, simplifying and translating these subcurves requires at most
\(
	3D(T') - 3D(T'') + 4n \sum_f k_f
\)
homotopy moves, where $T''$ is the resulting tangle.

\begin{figure}[ht]
\centering
\includegraphics[width=\linewidth]{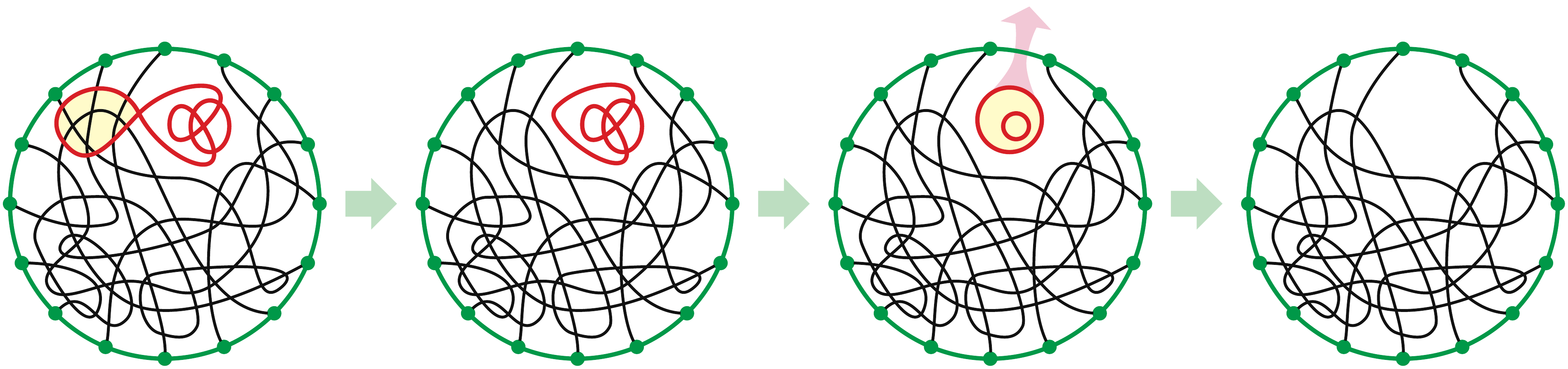}
\caption{Whenever shrinking a loop or simple closed strand disconnects the tangle, simplify each isolated component and translate the resulting cluster of circles to the outer face of the entire multicurve.}
\label{F:tighten-open}
\end{figure}

The total number of moves performed in the first phase is at most $3D(T) + 4m\ell = O(m\sqrt{n} + n\ell)$.  The first phase ends when the tangle consists entirely of simple open strands.  Thus, the second phase straightens the remaining open strands exactly as in the proof of \EDIT{Lemma \ref{L:pretangle}}; the total number of moves in this phase is $O(ms) = O(m\sqrt{n})$.  We charge $O(\sqrt{n})$ time to each deleted vertex and $O(n)$ time to each constituent curve that was simplified and translated outward.  We then recursively simplify the remaining multicurve, ignoring any outer circle clusters.

Altogether, each vertex of $\gamma$ is charged $O(\sqrt{n})$ time as it is deleted, and each constituent curve of $\gamma$ is charged $O(n)$ time as it is translated outward.
\end{proof}

With $O(k^2)$ additional homotopy moves, we can transform the resulting set of $k$ disjoint circles into $k$ nested or unnested circles.

\begin{theorem}
\label{Th:multi-upper2}
Any $k$-curve with $n$ vertices in the plane can be transformed into $k$ nested (or unnested) simple closed curves using $O(n^{3/2} + nk + k^2)$ homotopy moves.
\end{theorem}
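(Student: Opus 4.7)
The plan is to combine Theorem~\ref{Th:multi-upper} with a short post-processing phase that rearranges the resulting collection of disjoint simple closed curves into the desired nested or unnested configuration.

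First, I would invoke Theorem~\ref{Th:multi-upper} to transform the input $k$-curve into $k$ disjoint simple closed curves using $O(n^{3/2} + nk)$ homotopy moves. The resulting arrangement has some nesting structure, which can be encoded by a rooted forest whose nodes are the circles and whose edges join each circle to its immediately enclosing circle, if any. Our remaining task is to transform this forest into the target forest (a chain for the nested target, an antichain for the unnested target).

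Next, I would rearrange these $k$ disjoint simple closed curves using at most $O(k^2)$ additional homotopy moves. The elementary operation is pulling one simple circle $A$ out from inside another simple circle $B$ (or pushing it in): a single $\arc02$ move creates two transverse crossings between $A$ and $B$; then $A$ is continuously deformed past $B$ without further homotopy moves, since this slide is just an ambient isotopy between simple closed curves; and a single $\arc20$ move removes the two crossings. Each such primitive changes the nesting relationship between a pair of circles using exactly two moves. For the unnested target, I would process circles from innermost to outermost, moving each one all the way out to the outer face; each circle crosses at most $k-1$ containing boundaries, contributing $O(k)$ moves per circle and $O(k^2)$ in total. For the nested target, the dual strategy pushes circles inward in the symmetric order to form a chain.

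The main obstacle to guard against is that while sliding one circle past another, no third circle should be encountered, else the two-move count per re-nesting can no longer be charged correctly. Processing the circles in the right order (innermost-first for unnesting, outermost-first for nesting) avoids this issue, because when a circle moves, the other circles it must pass are all at strictly different nesting levels and can be traversed one at a time; at each step the moving circle can be routed through faces that contain no other circle. Summing the two phases gives $O(n^{3/2} + nk) + O(k^2) = O(n^{3/2} + nk + k^2)$ moves, matching the lower bound of Theorem~\ref{Th:multi-lower2}.
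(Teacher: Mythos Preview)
Your proposal is correct and takes essentially the same approach as the paper: the paper's entire proof is the single sentence preceding the theorem statement (``With $O(k^2)$ additional homotopy moves, we can transform the resulting set of $k$ disjoint circles into $k$ nested or unnested circles''), so you have simply filled in the details that the paper leaves implicit. Your two-moves-per-crossing count and the innermost-first processing order are exactly the mechanism behind that sentence, and they mirror the matching lower-bound argument in Section~\ref{SS:multi-lower}.
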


\begin{corollary}
\label{C:multi-upper3}
Any $k$-curve with at most $n$ vertices in the plane can be transformed into any other $k$-curve with at most $n$ vertices using $O(n^{3/2} + nk + k^2)$ homotopy moves.
\end{corollary}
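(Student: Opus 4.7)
The plan is to reduce this corollary to Theorem~\ref{Th:multi-upper2} via the reversibility of homotopy moves. Given any two $k$-curves $\gamma_1$ and $\gamma_2$ with at most $n$ vertices, Theorem~\ref{Th:multi-upper2} provides sequences $\Sigma_1$ and $\Sigma_2$, each consisting of $O(n^{3/2}+nk+k^2)$ homotopy moves, that respectively transform $\gamma_1$ and $\gamma_2$ into configurations of $k$ unnested disjoint simple closed curves. Each homotopy move has an inverse that is itself a homotopy move ($\arc01$ undoes $\arc10$, $\arc02$ undoes $\arc20$, and $\arc33$ is its own inverse), so reversing $\Sigma_2$ yields a sequence $\Sigma_2^{-1}$ of the same length that transforms its target unnested configuration back into $\gamma_2$.

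To concatenate $\Sigma_1$ with $\Sigma_2^{-1}$, I need the target of $\Sigma_1$ and the source of $\Sigma_2^{-1}$ to coincide as curves in the plane. Both are arrangements of $k$ pairwise disjoint simple closed curves whose bounded complementary regions are pairwise disjoint open disks. We can continuously shrink each such disk to an arbitrarily small neighborhood of a point, translate the point to any desired location, and expand the disk back out, never creating any crossing with another disk along the way. This isotopy in the plane realizes any rearrangement of the $k$ circles and requires zero homotopy moves, so without loss of generality the two unnested configurations at the junction are identical.

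Splicing $\Sigma_1$, the free isotopy, and $\Sigma_2^{-1}$ thus produces a homotopy from $\gamma_1$ to $\gamma_2$ using $|\Sigma_1|+|\Sigma_2| = O(n^{3/2}+nk+k^2)$ homotopy moves, as required. There is no real obstacle in the proof; the only conceptual point is the free isotopy step, which follows from the elementary fact that disjoint planar disks can be rearranged arbitrarily without collisions.
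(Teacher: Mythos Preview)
Your argument is correct and is precisely the intended one: the paper states the corollary without proof, but the implied argument is exactly this reduction to Theorem~\ref{Th:multi-upper2} via a common canonical form and reversal of moves. One minor simplification: the paper explicitly declares that it does not distinguish between isomorphic curves (curves whose images define combinatorially equivalent maps), and any two unnested configurations of $k$ simple closed curves are isomorphic in this sense, so the ``free isotopy'' step is literally vacuous under the paper's conventions rather than merely costing zero moves.
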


Theorems \ref{Th:multi-lower} and \ref{Th:multi-lower2} and Corollary \ref{C:multi-lower3} imply that these upper bounds are tight in the worst case for all possible values of $n$ and $k$.  As in the lower bounds, the $O(k^2)$ terms are redundant for connected multicurves.

More careful analysis implies that any $k$-curve with $n$ vertices and depth $d$ can be simplified in $O(n \min\set{d, n^{1/2}} + k\min\set{d, n})$ homotopy moves, or transformed into $k$ unnested circles using $O(n \min\set{d, n^{1/2}} + k \min\set{d, n} + k \min\set{d, k})$ homotopy moves.  Moreover, these upper bounds are tight for all possible values of $n$, $k$, and $d$.  We leave the details of this extension as an exercise for the reader.

\section{Higher-Genus Surfaces}
\label{S:genus}

Finally, we consider the natural generalization of our problem to closed curves on orientable surfaces of higher genus. Because these surfaces have non-trivial topology, not every closed curve is homotopic to a single point or even to a simple curve. A closed curve is \EMPH{contractible} if it is homotopic to a single point. We call a closed curve \EMPH{tight} if it has the minimum number of self-intersections in its homotopy class.

\subsection{Lower Bounds}

Although defect was originally defined as an invariant of \emph{planar} curves, Polyak's formula $\Defect(\gamma) = -2\sum_{x\between y} \sgn(x)\sgn(y)$ extends naturally to closed curves on any orientable surface; homotopy moves change the resulting invariant exactly as described in Figure \ref{F:defect-change}. Thus, Lemma~\ref{L:defect} immediately generalizes to any orientable surface as follows.

\begin{lemma}
\label{L:defect-surface}
Let $\gamma$ and $\gamma'$ be arbitrary closed curves that are homotopic on an arbitrary orientable surface. Transforming $\gamma$ into $\gamma'$ requires at least $\abs{\Defect(\gamma) - \Defect(\gamma')}/2$ homotopy moves.
\end{lemma}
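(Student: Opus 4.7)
The plan is to mimic the planar argument (Lemma~\ref{L:defect}) essentially verbatim, since the ingredients are local and all the surgery happens inside a disk. The only thing to verify carefully is that the per-move change in defect listed in Figure~\ref{F:defect-change} continues to hold when the ambient surface has higher genus.

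First I would observe that once a basepoint $\gamma(0)$ and an orientation of $\gamma$ are chosen, Gauss's sign $\sgn(x) \in \{\pm 1\}$ is well defined at every self-intersection $x$ of $\gamma$, using only the order in which the two traversals pass through $x$ and the local orientation of the surface (which exists because the surface is orientable). Likewise, whether two vertices $x\between y$ interleave is a property of the cyclic order of their preimages in $S^1$, which has nothing to do with the ambient surface. Hence the formula
\[
\Defect(\gamma) \;=\; -2 \sum_{x \between y} \sgn(x)\cdot \sgn(y)
\]
makes sense for any generic closed curve $\gamma$ on an orientable surface, and (as in the planar case) one checks that the resulting integer does not depend on the choice of basepoint or orientation.

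Next I would verify the per-move analysis. A homotopy move is supported in a topological disk $\Delta \subset M$, and inside $\Delta$ the curve looks identical to the pictures in Figure~\ref{F:defect-change}. Because $\sgn(x)$ and the interleaving relation $x \between y$ are defined purely in terms of the cyclic order of vertex visits along $\gamma$ and the local crossing data, the contribution of pairs involving only vertices \emph{outside} $\Delta$ is unchanged by the move, and the contribution of pairs with at least one vertex inside $\Delta$ changes exactly as in the planar case analysis of Polyak~\cite{p-icfgd-98}. Thus a single homotopy move changes $\Defect$ by at most $2$ in absolute value, regardless of the genus of $M$.

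Finally, suppose $\gamma = \gamma_0, \gamma_1, \ldots, \gamma_M = \gamma'$ is a sequence of curves on $M$ in which each $\gamma_{i+1}$ is obtained from $\gamma_i$ by one homotopy move. By the previous paragraph, $|\Defect(\gamma_{i+1}) - \Defect(\gamma_i)| \le 2$ for every $i$, so by the triangle inequality
\[
|\Defect(\gamma) - \Defect(\gamma')| \;\le\; \sum_{i=0}^{M-1} |\Defect(\gamma_{i+1}) - \Defect(\gamma_i)| \;\le\; 2M,
\]
which rearranges to $M \ge |\Defect(\gamma) - \Defect(\gamma')|/2$, as required. The only nontrivial step is checking that the planar case analysis transplants to higher genus, and this is immediate because homotopy moves are disk-supported; no genuine obstacle arises.
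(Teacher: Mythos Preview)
Your proof is correct and follows exactly the approach the paper indicates: the paper does not give a separate proof of this lemma but simply remarks that Polyak's formula and the per-move case analysis in Figure~\ref{F:defect-change} extend verbatim to orientable surfaces, so Lemma~\ref{L:defect} generalizes immediately. Your writeup spells out precisely why that extension is legitimate (signs use only local orientation, interleaving is purely combinatorial, homotopy moves are disk-supported), which is more detail than the paper provides but the same argument.
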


The following construction implies a quadratic lower bound for tightening noncontractible curves on orientable surfaces with any positive genus.

\begin{lemma}
For any positive integer $n$, there is a closed curve on the torus with $n$ vertices and defect $\Omega(n^2)$ that is homotopic to a simple closed curve but not contractible.
\end{lemma}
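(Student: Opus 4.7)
The plan is to construct $\gamma$ as a ``connect sum'' of two simple closed curves on the torus that intersect transversely many times. On the torus $T^2 = \mathbb{R}^2/\mathbb{Z}^2$, I would take the meridian $\mu(t) = (t, 0)$ and the $(1,m)$-torus curve $\lambda(t) = (t,\, mt \bmod 1)$ for some odd integer $m$. Both curves are simple and non-contractible, and they meet transversely at exactly $m$ points $p_j = (j/m,\, 0)$ for $j = 0, 1, \ldots, m-1$. I would then replace the crossing at $p_0 = (0, 0)$ by the smoothing that merges $\mu$ and $\lambda$ into a single closed curve $\gamma_m$; the remaining $m-1$ intersection points become self-crossings of $\gamma_m$.

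First I would verify that $\gamma_m$ is homotopic to a simple closed curve but is not contractible. Its homotopy class is $[\mu] + [\lambda] = (2, m) \in \pi_1(T^2)$; since $m$ is odd, $\gcd(2, m) = 1$, so $(2, m)$ is primitive and hence contains a simple closed representative, while $(2, m) \ne (0, 0)$ rules out contractibility.

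Next I would compute $\Defect(\gamma_m)$ by analyzing its Gauss sequence. Both $\mu$ and $\lambda$ pass through the crossing $p_j$ at parameter $t = j/m$, so each curve visits $p_1, \ldots, p_{m-1}$ in increasing order of~$j$. Following the smoothing at $p_0$, the traversal of $\gamma_m$ runs entirely along $\lambda$ before transitioning to $\mu$, producing the cyclic Gauss sequence $1, 2, \ldots, m-1, 1, 2, \ldots, m-1$, in which every pair of crossings is interleaved. The tangent vectors $(1, 0)$ of $\mu$ and $(1, m)$ of $\lambda$ are constant along their parametrizations, so all $m-1$ crossings share the same sign. Consequently every interleaved pair contributes $+1$ to the sum defining defect, giving
\[
	\Abs{\Defect(\gamma_m)} = 2\binom{m-1}{2} = (m-1)(m-2) = \Omega(m^2).
\]
To realize exactly $n$ vertices for any given $n$, I would choose $m$ odd with $m - 1$ within $1$ of $n$, and, if necessary, insert an extra loop via a single $\arc01$ move; such a move changes neither the homotopy class nor the defect.

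The main obstacle I anticipate is verifying that the smoothing at $p_0$ genuinely produces the stated Gauss sequence. This reduces to a routine local check of how the four half-edges at $p_0$ are paired by the smoothing, and of how that pairing forces the traversal of $\gamma_m$ to complete a full loop along $\lambda$ before switching to $\mu$.
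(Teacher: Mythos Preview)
Your construction is correct and genuinely different from the paper's. The paper builds an explicit curve out of three arcs (two of which wind back and forth $n/8$ times around the torus) and computes its defect \emph{indirectly}, by exhibiting a sequence of $n^2/64 + n/2$ homotopy moves that simplifies the curve and tallying the change in defect at each move; this yields $\Defect(\gamma) = -n^2/32 + n$. You instead smooth a meridian into a $(1,m)$-curve and compute the defect \emph{directly} from the Gauss code, obtaining $\abs{\Defect(\gamma_m)} = (m-1)(m-2)$ with $m-1$ vertices, i.e.\ defect roughly $n^2$ rather than $n^2/32$.

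Your route is cleaner on two counts: the curve is described in one line, and the defect computation requires no homotopy moves at all, only the observation that the Gauss sequence $1,2,\dots,m-1,1,2,\dots,m-1$ has every pair interleaved and that the constant tangent directions force all signs to agree. The paper's approach, on the other hand, illustrates the move-by-move accounting technique used elsewhere in Section~\ref{SS:torus-knots}, and makes the eventual simplification to a simple curve explicit rather than appealing to primitivity in $\pi_1(T^2)$. The local check you flag---that the orientation-preserving smoothing at $p_0$ sends you once around $\lambda$ and then once around $\mu$---is indeed routine: connecting $\lambda$-in to $\mu$-out and $\mu$-in to $\lambda$-out is exactly the smoothing realizing the class $[\mu]+[\lambda]=(2,m)$, and the alternative smoothing (class $(0,-m)$, non-primitive) would produce the palindromic Gauss code with no interleaved pairs.
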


\begin{proof}
Without loss of generality, suppose $n$ is a multiple of $8$. The curve $\gamma$ is illustrated on the left in Figure \ref{F:bad-torus}. The torus is represented by a rectangle with opposite edges identified. We label three points $a,b,c$ on the vertical edge of the rectangle and decompose the curve into a red path from $a$ to $b$, a green path from $b$ to $c$, and a blue path from~$c$ to $a$. The red and blue paths each wind vertically around the torus, first $n/8$ times in one direction, and then $n/8$ times in the opposite direction.

\begin{figure}[ht]
\centering\includegraphics[scale=0.5]{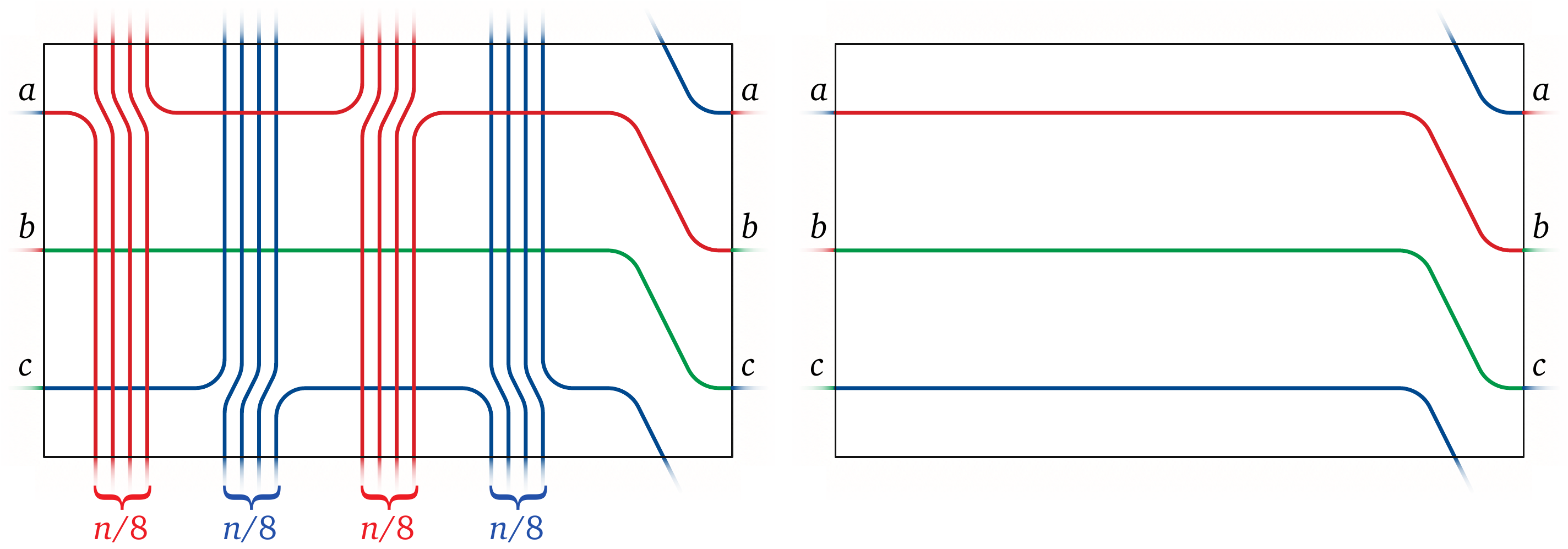}
\caption{A curve $\gamma$ on the torus with defect $\Omega(n^2)$ and a simple curve homotopic to $\gamma$.}
\label{F:bad-torus}
\end{figure}

As in previous proofs, we compute the defect of $\gamma$ by describing a sequence of homotopy moves that \EDIT{simplifies} the curve, while carefully tracking the changes in the defect that these moves incur. We can unwind one turn of the red path by performing one $\arc20$ move, followed by $n/8$ $\arc33$ moves, followed by one $\arc20$ move, as illustrated in Figure \ref{F:bad-torus-unwind}. Repeating this sequence of homotopy moves $n/8$ times removes all intersections between the red and green paths, after which a sequence of $n/4$ $\arc20$ moves straightens the blue path, yielding the simple curve shown on the right in Figure \ref{F:bad-torus}. Altogether, we perform $n^2/64 + n/2$ homotopy moves, where each $\arc33$ move increases the defect of the curve by $2$ and each $\arc20$ move decreases the defect of the curve by $2$. We conclude that $\Defect(\gamma) = -n^2/32 + n$.
\end{proof}

\begin{figure}[ht]
\centering\includegraphics[width=0.9\linewidth]{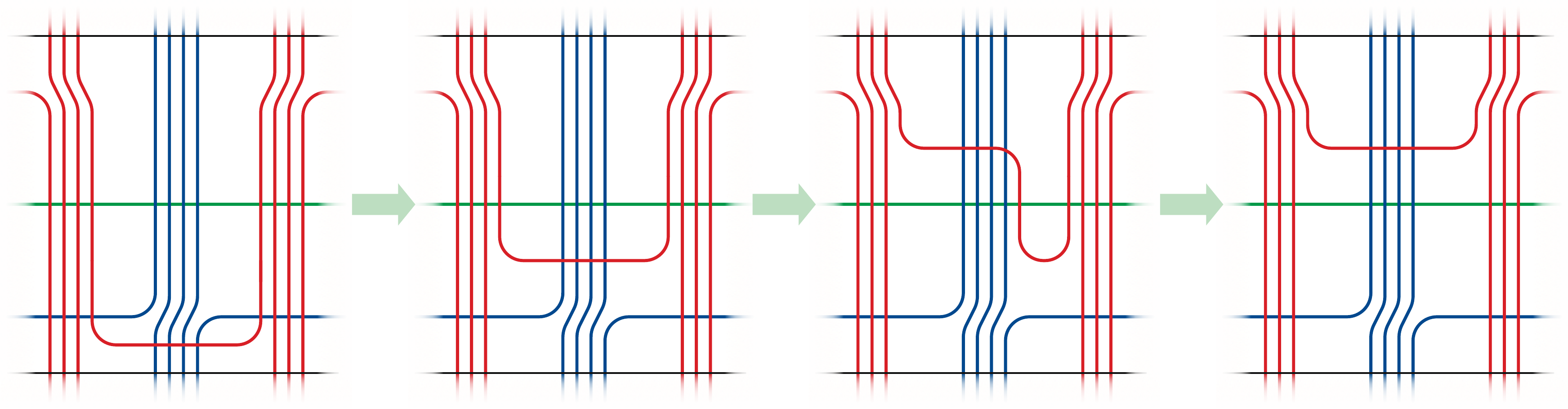}
\caption{Unwinding one turn of the red path.}
\label{F:bad-torus-unwind}
\end{figure}

\begin{theorem}
\label{Th:lower-torus}
Tightening a closed curve with $n$ crossings on a torus requires $\Omega(n^2)$ homotopy moves in the worst case, even if the curve is homotopic to a simple curve.
\end{theorem}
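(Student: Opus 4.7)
The plan is to combine the two preceding results directly. The previous lemma constructs, for each $n$ divisible by $8$, a specific closed curve $\gamma$ on the torus with $n$ vertices such that $\Defect(\gamma) = -n^2/32 + n$, and such that $\gamma$ is homotopic to a simple closed curve on the torus. Since this curve is homotopic to a simple curve, its tight representative in its homotopy class is itself a simple closed curve $\gamma'$, which has no vertices at all.

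Since $\gamma'$ is simple, the defect sum $-2\sum_{x\between y}\sgn(x)\sgn(y)$ is empty and so $\Defect(\gamma') = 0$. The first step is therefore to invoke Lemma~\ref{L:defect-surface}, which extends the planar defect lower bound to any orientable surface: any sequence of homotopy moves transforming $\gamma$ into $\gamma'$ must have length at least
\[
\frac{\abs{\Defect(\gamma) - \Defect(\gamma')}}{2} = \frac{\abs{-n^2/32 + n}}{2} = \Omega(n^2).
\]
This directly gives the stated lower bound on the number of homotopy moves needed to tighten $\gamma$. For $n$ not divisible by $8$, we pad the construction with a constant number of extra $\arc01$ moves to get an $n$-vertex curve with the same asymptotic defect, which only affects lower-order terms.

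The final step is to extend from the torus to arbitrary orientable surfaces of positive genus, as promised in the theorem statement. Any such surface contains an embedded torus-with-a-hole, so we can place the curve $\gamma$ inside that subsurface; homotopies of $\gamma$ on the ambient surface that keep it inside a small neighborhood correspond to homotopies on the torus, and the defect invariant is computed intrinsically from the cyclic order of crossings and their signs, so the same lower bound applies.

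There is no real obstacle here---the entire theorem is essentially a corollary of Lemma~\ref{L:defect-surface} together with the construction in the preceding lemma. The only subtlety to be careful about is verifying that the target of ``tightening'' is indeed a simple closed curve (so has defect zero), which follows from the hypothesis that $\gamma$ is homotopic to a simple closed curve combined with the fact that a simple curve is automatically tight.
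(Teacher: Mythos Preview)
Your proof is correct and matches the paper's approach exactly: the theorem is stated without proof precisely because it follows immediately from combining Lemma~\ref{L:defect-surface} with the preceding construction lemma, exactly as you do. One minor point: the theorem as stated concerns only the torus, so your final paragraph extending to arbitrary positive-genus surfaces is extra (though the paper does make this broader claim informally in the text preceding the construction lemma); also note that your phrase ``as promised in the theorem statement'' is inaccurate, since the statement itself mentions only the torus.
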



\subsection{Upper Bounds}

Hass and Scott proved that any non-simple closed curve on any orientable surface that is homotopic to a simple closed curve contains either a simple (in fact empty) contractible loop or a simple contractible bigon \cite[Theorem 1]{hs-ics-85}. It follows immediately that any such curve can be simplified in $O(n^2)$ moves using Steinitz's algorithm; Theorem \ref{Th:lower-torus} implies that the upper bound is tight for non-contractible curves.

For the most general setting, where the given curve is not necessarily homotopic to a simple closed curve, we are not even aware of a \emph{polynomial} upper bound!  Steinitz's algorithm does not work here; there are curves with excess self-intersections but no simple contractible loops or bigons \cite{hs-ics-85}. Hass and Scott \cite{hs-scs-94} and De Graff and Schrijver \cite{gs-mcmcr-97} independently proved that any closed curve on any surface can be simplified using a \emph{finite} number of homotopy moves that never increase the number of self-intersections. Both proofs use discrete variants of curve-shortening flow; for sufficiently well-behaved curves and surfaces, results of Grayson \cite{g-sec-89} and Angenent \cite{a-pecs2-91} imply a similar result for differential curvature flow. Unfortunately, without further assumptions about the precise geometries of both the curve and the underlying surface, the number of homotopy moves cannot be bounded by any function of the number of crossings; even in the plane, there are closed curves with three crossings for which curve-shortening flow alternates between a $\arc33$ move and its inverse arbitrarily many times. Paterson \cite{p-cails-02} describes a combinatorial algorithm to compute a tightening sequence of homotopy moves without such reversals, but she offers no analysis of her algorithm.

We conjecture that any arbitrary curves (or even multicurves) on any surface can be simplified with at most $O(n^2)$ homotopy moves.

%
%
%
%
%
%

\paragraph{Acknowledgements.}
We would like to thank Nathan Dunfield, Joel Hass, Bojan Mohar, and Bob Tarjan for encouragement and helpful discussions, and Joe O'Rourke for asking about multiple curves.  \EDIT{We would also like to thank the anonymous referees for several suggestions that improved the paper.}

\bibliographystyle{newuser}
\bibliography{bib/jeffe,bib/topology,bib/compgeom}

\end{document}